\newcommand{\sign}{\text{sign}}
\newtheorem{lemma}{Lemma}
\newtheorem{proposition}{Proposition}
\date{}
\begin{document}
\begin{titlingpage}
\setstretch{1.3}

\title{Social Media, Content Moderation, and Technology}

\author{Yi Liu, T. Pinar Yildirim, Z. John Zhang \thanks{Liu is a doctoral student at the University of Pennsylvania's Wharton School, Marketing Department. Yildirim is an Assistant Professor and Zhang is the Tsai Wan-Tsai Professor at the same department. 
All correspondence can be sent to the second author. All errors are our own. }}

\vspace{2in}

\date{\today}

\maketitle





\begin{abstract}

This paper develops a theoretical model to study the economic incentives for a social media platform to moderate user-generated content. We show that a self-interested platform can use content moderation as an effective marketing tool to expand its installed user base, to increase the utility of its users, and to achieve its positioning as a moderate or extreme content platform. The optimal content moderation strategy differs for platforms with different revenue models, advertising or subscription. 
We also show that a platform's content moderation strategy depends on its technical sophistication. Because of imperfect technology, a platform may optimally throw away the moderate content more than the extreme content. Therefore, one cannot judge how extreme a platform is by just looking at its content moderation strategy. 
Furthermore, we show that a platform under advertising does not necessarily benefit from a better technology for content moderation, but one under subscription does. This means that platforms under different revenue models can have different incentives to improve their content moderation technology. Finally, we draw managerial and policy implications from our insights.

\bigskip
\noindent {\bf keywords:} social media platforms, content moderation, revenue models, technology
\end{abstract}
\end{titlingpage}

\setstretch{1.4}
\newpage 
\section{Introduction}









A significant challenge that online social media platforms such as Facebook and Twitter face today is acting as the custodians of the Internet while at the same time being the  center of self-expression and user-generated content \citep{gillespie2018custodians}. Social media platforms allow millions of users with diverse views to post their opinions on issues day-to-day, some of which are deemed offensive, harmful, or ``extreme\footnote{Through the rest of the paper, we will refer to such content as ``extreme content.''},'' by majority of users. Users demand, on the one hand, to freely express their views on ongoing political, social, and economic issues on social media platforms without intervention and without being told their views are ``inappropriate.'' On the other hand, they abhor the content that they themselves view as inappropriate, sensitive, harmful, or extreme. So platforms, in some form or another, moderate content to protect individual users and their interests, by removing posts users may consider extreme. In fact, some executives at Facebook view content moderation as ``the most important thing they do'' \citep{Fbcontentm}. 
In 2019, Facebook CEO Mark Zuckerberg declared that they would be allocating 5\% of the firm revenues, \$3.7 billion, on content moderation, an amount greater than Twitter's entire annual revenue \citep{MarkZuckerbergSays}. 
In this paper, we take a theoretical look at how a self-interested platform can do ``the most important thing.''

Content moderation is no simple feat. \cite{Bigtechneedsmore} states that ``platforms like Facebook have to make trade-offs ... between free expression and safety'' and that there is rarely a clear ``right'' answer. According to a {\em Morningconsult.com} survey\footnote{Source: \url{https://morningconsult.com/wp-content/uploads/2019/10/190859_crosstabs_CONTENT_MODERATION_Adults_v4_JB-1.pdf}. The survey also states that 56\% of adults think that edited or distorted images of public officials and celebrities should be removed by social media sites. The same number is 69\% for ``Misleading health information,'' 32\% for ``Fad diets, such as detoxes.''}, consumers vary on their tolerance to potentially harmful content. Of those surveyed, 80\%  wish to see hate speech such as posts using slurs against a racial, religious or gender group removed, 73\% wish to see videos depicting violent crimes removed, and only 66\% wish to see depictions of sexual acts removed. This user heterogeneity adds complexity to content moderation, but it also gives a social media platform the cover and leeway to conduct content moderation to achieve its own profit objective. This is because the balance between self-expression and safety in the context of user heterogeneity can justify any strict or lax content moderation strategy motivated by a platform's profit.  In this study, we incorporate the user heterogeneity and derive a platform's optimal content moderation strategy.


For social media platforms, moderation determines which content is removed, therefore what users can post and read. Since the basic product offered by social media platforms is user-generated content, removal of some of these content simultaneously changes the design of the product offered to other users, and endogenously determines which users may enjoy the content enough to stay on the social media as well. Put differently, content moderation for a social media platform is a decision that simultaneously determines the content offered and the platform's positioning as a moderate or extreme content platform.  It is a decision that combines ``product'' and ``promotion'' in one, and it is also a decision that attracts extensive public scrutiny and regulatory attention. How exactly content moderation should be implemented is also an issue that is of high priority to policy makers, academics, and industry pundits \citep{jhaver2018online,BIcontentmoderationreport,Bidentechadvisor}, and there is a heated  debate  ongoing on the topic. 

Given the importance of content moderation to a platform and to the public, it is important to understand this big and complex topic from ground up starting with  how a self-interested platform may conduct content moderation.  In this paper, we take a first step toward understanding the economic foundations of a platform's desire to moderate content and to invest in content moderation technology. We study three key questions that are at the heart of marketing and management for social media platforms. First, how do self-interested social media platforms moderate content given their revenue model? More specifically, does the revenue source (advertising vs. subscription) matter in their motivation and strategy to conduct content moderation? Second, does a  platform with a given revenue model  always prefer   a  better  technology for  content  moderation, so that they have sufficient incentives to pursue  the best  technology on their own? Finally, how does a free-market content moderation compare to the social optimum? The answers to these questions are important not only to understand a platform's behaviors with regard to content moderation, but also to identify the rationale for any regulatory intervention or non-intervention.

In this paper, we develop a theoretical model to address all these questions. 
In our model, a  platform allows users to post and share content with others, and earns revenue either through advertising or subscription fees. Users enjoy the ability to express their opinions, and read others' content, from which they may or may not obtain positive utility depending on their own preferences and also on how extreme the content on the platform is.  
A platform moderates online content to maximize its revenue. This model setup allows us to study content moderation as a marketing tool and explore its strategy and public policy implications.

The analysis of our model shows that content moderation by a platform is primarily motivated by users' preferences for posting vs. reading content on the platform. A platform conducts content moderation only if users care more about reading others' content than  posting their own. As a marketing tool, content moderation can  perform two functions for a platform: expand its user base and increase users' willingness-to-pay. The user base is expanded through pruning extreme users to get more users of moderate opinions. Content moderation can also increase users' willingness-to-pay by reducing their reading disutility from extreme content. This result establishes content moderation as an effective product design and positioning tool.

When applying this tool, a platform's optimal content moderation strategy will depend on its revenue model and also its technological sophistication. When a platform chooses optimally not to conduct content moderation, the platform under advertising will field a less extreme content than a platform under subscription, all else being equal. When a platform optimally conducts content moderation, its content would be less extreme under subscription than under advertising. Furthermore, because of imperfect technology, the optimal content moderation strategy may call for a platform to prune the moderate content more than the extreme content and vice versa depending on the revenue model. This analysis thus suggests that the content moderation strategy for a platform may not be as straightforward as removing the most extreme content while keeping all the moderate content.

Technology plays an important role in content moderation also in a different way. When criticized for insufficient effort at content moderation, social media executives frequently blame imperfect technology  and promise to remedy the inadequacy through technology improvement \citep{SocialMediagiantswarnofAI,timelineforAIto,AIenhancementshavebolstered}.  Interestingly, our analysis shows that for a self-interested social media platform, technological improvement does not always lead to more content moderation or to  less extreme content on the platform. In addition, a platform under advertising may not even benefit from a better technology. In other words, a social media platform under advertising may not have the incentive to perfect its technology for content moderation. 
This result demonstrates that content moderation on online platforms is not merely an outcome of their technological capabilities, but economic incentives. Indeed, even the technology constraint for extreme content identification itself can be a strategic choice for the platform.
This result thus casts some doubts on whether social media platforms will always remedy the technological deficiencies on their own.

The regulatory concerns are even deeper when one compares the content moderation strategy for a self-interested platform with that for a social planner. We show that a social planner  will use content moderation to prune the users whose net utility contribution to the society is negative. In addition, the social planner always pursues perfect technology if the cost of developing technology is not an issue. In contrast, a self-interested platform under either advertising or subscription is always more likely to conduct content moderation than a social planner, and when conducting content moderation, a platform under advertising (subscription) will be less (more) strict than a social planner. Only a platform under subscription will have an interest aligned with a social planner in perfecting the technology for content moderation. These conclusions thus demonstrate that there is room for government regulations and when they are warranted, they need to be differentiated with regard to the revenue model a platform adopts.


Studies in the past on user-generated content (UGC), social media, and firm strategy have taken a number of different directions. Many studies have looked into the dynamics of and the motivations for  user-generated content \citep[e.g.,][]{toubia2013intrinsic, daugherty2008exploring,sun2017motivation,iyer2016competing,ahn2016managing,bazarova2014self,buechel2015motivations}. A number of empirical and theoretical studies have also investigated how firms can glean information from UGC and  use  it strategically to perform their marketing functions \citep[e.g.,][]{ghose2012designing,timoshenko2019identifying,iyengar2011opinion,tirunillai2012does,goh2013social,godes2004using}. As the UGC provides a different dimension for firms' offerings, a number of theoretical papers have derived the optimal differentiation strategy for competing firms \citep[e.g.,][]{yildirim2013user,zhang2015differentiation}. However, these studies do not address hate content or the issue of content moderation. 

To the extent that content moderation is carried out with artificial intelligence (AI) algorithms, our research is also related to the growing stream of literature on the implications of applying AI algorithms. A number of papers have studied the application in fintech \citep{wei2015credit}, hiring \citep{cowgill2019economics, lee2018understanding}, online dating \citep{abeliuk2019price}, and advertising \citep{lambrecht2019algorithmic}. Our study differs from  these papers in that we theoretically explore the strategic implications of using AI algorithms in content moderation and also the incentives for platforms to perfect their technology.

Content moderation is a hotly debated issue in political science, communications, and economics, and many of the discussions involve free speech, censorship, and the merits or demerits of content moderation \citep[e.g.,][]{gillespie2018custodians,myers2018censored,gorwa2020algorithmic}. In a complementary theoretical paper, \cite{madio2020user} study content moderation as a tool to attract content-sensitive advertisers and as a way to manage its advertising price. The content moderation we study focuses on the interactions between a platform and users and it differs from theirs in three ways. First, the user content subject to content moderation in our model is the one that all users do not like, to a varying degree. In their case, it is the content that some users like and some do not, and overall users' demand for the platform actually goes up with more such content. Second, content moderation in our paper is motivated by a platform's effort to please and attract users, while their content moderation is motivated solely by attracting advertisers. Third, we examine how a platform's content moderation strategy interacts with technology and whether a platform under different revenue models has sufficient incentive to perfect its technology, and they do not.

The rest of the paper is structured as follows.
In Section \ref{sec: model},
we develop our theoretical model and discuss a platform's content moderation strategies with perfect technology. In Section \ref{sec:imperfect AI}, we discuss how imperfect technology can affect content moderation and what incentives a platform faces in developing a better technology. Section \ref{sec:policy} explores the policy implications of our model.  Finally, in Section~\ref{sec: conclusion}, we conclude.

\section{Model} \label{sec: model}

Consider a social media platform, with users of mass of 1, where they post their opinions and read those from others. Users are heterogeneous with respect to how extreme their expressed opinions typically are. We use the extremeness index $x\in[0,1]$ to capture this heterogeneity. Thus, a user located at $x$ tends to express opinions with extremeness index $x$. We assume that users are distributed uniformly over the index range, i.e.,  $x\sim U[0,1]$. When a user posts content, as literature has shown \citep[e.g.,][]{bazarova2014self,buechel2015motivations}, she gains a utility of $u(x)$ from sharing her opinions on the platform. This utility differs amongst users. Literature in consumer psychology shows that individuals with more extreme opinions are also more vocal in expressing their opinions \citep{miller2009expressing,yildirim2013user,mathew2019spread}. Based on this finding, we model the utility from posting content on social media as  $u(x)=\alpha x$, where $\alpha\geq0$, such that a user with a higher extremeness index gains more utility from posting.
Here a larger $\alpha$ implies a greater difference in posting utilities between any two users.

A user on the platform also derives utility from reading content posted by others.  We assume that a user with extremeness index $x$ on the platform gains the maximum reading utility of $v$ if the platform is populated with posts of like-minded users or less extreme ones. In other words, if all posts have  extremeness index $x$ or less, the user at $x$ gains $v$. 
However, a user at $x$
will find objectionable a post with extremeness index $\tilde{x}>x$ and her utility is reduced by $\tilde{x}$ per post with the same index.  In other words, all the more extreme posts will reduce a user's reading utility. Mathematically, the utility for a user at $x$ from reading the posts in the extremeness index range of $[0,\overline{x}]$ where $\overline{x}>x$
is given by $v-\int_{x}^{\overline{x}}\tilde{x}d\tilde{x}$. We assume a user is exposed to all content on the platform. We further assume $v<\frac{1}{2}$ to ensure that  the least extreme users ($x=0$) have a negative utility if she is exposed to all the content on the platform without moderation.

The platform can moderate the extreme user-generated content. Due to the large volume of UGC, it typically relies on artificial intelligence (AI) and natural language processing algorithms to identify and remove intended content. We start in our benchmark model with the assumption of a perfect content moderation technology such that a platform can get rid of any extreme content with perfect accuracy. This means that a platform can choose a $y\in[0,1]$ such that any content with extremeness index $x>y$ is eliminated while any content with $x\leq y$ is kept. Later in our analysis (Section \ref{sec:imperfect AI}), we will also look into imperfect technologies where the platform cannot perfectly moderate the intended content but can only remove any content $x>y$ and preserve any content  $x\leq y$ with a higher than random probability. This imperfect technology nests our benchmark model as a special case.





If the platform engages in content moderation and deletes a user's content because it is deemed offensive, then the user experiences a psychological cost $c$ when she could not post or her post cannot be seen by others. 
This cost captures how people treasure  freedom of expression. Throughout the model, we shall maintain the assumption $c>v$ to ensure that users subject to content moderation with certainty will not participate in the platform. 
Without loss of generality, we set $c\leq\alpha+2v$. This assumption is sufficient to guarantee that a user can still participate in the platform facing uncertain prospect of content moderation, as we will see in Section \ref{sec:imperfect AI}.

The anticipated utility of a user from participating in a social media platform $U(x)$ is the sum of utilities from both reading and posting content. A user participates in the platform if $U(x)\geq0$. Mathematically, $U(x)$ is given by
\begin{equation}\label{eq:utility_fcn}
 U(x)=
    \begin{cases}
        \underbrace{\alpha x}_{\text{posting utility}} + \underbrace{v- \int_{\tilde{x}\in\hat{\mathcal{X}},x<\tilde{x}\leq y}\tilde{x}d\tilde{x}}_{\text{reading utility}} & \text{if $x\leq y$,} \\
        \underbrace{- c}_{\text{posting utility}} + \underbrace{v}_{\text{reading utility}} & \text{if $x>y$.} 
  \end{cases}
\end{equation}
where $\hat{\mathcal{X}}$ is expected set of participants on the platform by a user located at $x$.

Depending on whether a platform uses advertising or subscription as its revenue model, it may earn revenues from advertisers or from users through subscription fees. Specifically, we assume an advertising model such that the platform charges the advertisers for each customer on the platform. This means that the total advertising revenue is proportional to the number of users on the platform, i.e., $$\pi^{A}=\zeta X^A,$$ where $X^A$ is the user base of the platform, 
and $\zeta$ is the advertising value of each user, or ARPU (average revenue per user). 
If the platform earns revenue from subscription instead, it sets a subscription fee $p$ to its users. Denote the number of users who choose to use the platform when it charges $p$ as
$X^S(p)$. 
Then the platform's revenue is $$\pi^S=pX^S(p).$$ 
In our model, the subscription fee is endogenously determined by the platform whereas the per user advertising fee ($\zeta$) is determined by a competitive market and exogenous to our model.\footnote{As a standard practice, advertising fees on major social media platforms like Facebook and Twitter are not set by the platforms but determined through auctions.} 

By juxtaposing these two revenue models, we can examine the incentives a platform faces in content moderation in each of the revenue models. Moreover, we can also examine how the ability of a platform to conduct content moderation may influence the choice of its revenue model. The timeline of the game is as follows:



\begin{enumerate}
    \item If the platform uses advertising, it takes the advertising fee ($\zeta$) as given; if it uses subscription, it sets the subscription fee ($p$) for all users. 
    \item The platform determines its content moderation strategy ($y$).
    \item Users decide whether to stay on the platform and those who stay on post content. Users read the content remaining on the platform after moderation, and obtain utility from posting and reading. 
\end{enumerate}

In the following section, we first solve for the equilibrium for a given revenue model. That is, we analyze the platform's content moderation behavior separately for both advertising-based and subscription-based revenue models. Then, we will discuss how content moderation can affect a platform's preference for revenue models.


\subsection{Advertising-Supported Social Media Platforms}

We start the analysis by considering the case for an ad-supported platform. Let the users who actually participate in the platform be $\mathcal{X}$.
$\hat{\mathcal{X}}$, as introduced above in Equation (\ref{eq:utility_fcn}), is a user's expected set of participants on the platform. Each user will decide whether or not to participate in the social media platform based on her utility $U(x)$ and we derive the equilibrium where $\mathcal{X}=\hat{\mathcal{X}}$ \citep{katz1985network,easley2010networks}.

We start by characterizing the equilibrium configuration for users for any given content moderation policy $y$. The following lemma summarizes our analysis. 

\begin{lemma}\label{lem:config}
For any $y\in[0,1]$, there exists $x^A(y)\in[0,y]$ such that in equilibrium, the set of people who participate in the platform $\mathcal{X}$ is a continuum on $[x^A(y),y]$. Furthermore, $U(x)$ is increasing in $x$ on $[x^A(y),y]$.
\end{lemma}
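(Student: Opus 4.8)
The plan is to reduce the rational-expectations fixed-point condition $\mathcal{X}=\hat{\mathcal{X}}$ to a one-dimensional threshold equation by exploiting a single monotonicity property of $U$. First I would dispose of the users above the cutoff: for $x>y$ the content is deleted, so by Equation~(\ref{eq:utility_fcn}) such a user obtains $U(x)=v-c<0$ under the maintained assumption $c>v$. Hence no user with $x>y$ ever participates, and the equilibrium set satisfies $\mathcal{X}\subseteq[0,y]$.

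The crucial observation is that, for \emph{any} conjectured participant set $\hat{\mathcal{X}}$, the map $x\mapsto U(x)$ is nondecreasing on $[0,y]$. The posting term $\alpha x$ is increasing, while the reading disutility $\int_{\tilde{x}\in\hat{\mathcal{X}},\,x<\tilde{x}\le y}\tilde{x}\,d\tilde{x}$ is nonincreasing in $x$, since raising $x$ only shrinks the domain of integration of a nonnegative integrand. Because participation requires $U(x)\ge0$, the best response to any conjecture is therefore an ``upper interval'' of $[0,y]$. As the equilibrium condition forces $\hat{\mathcal{X}}=\mathcal{X}$, every equilibrium participant set must itself be of the form $[x^A(y),y]$, which reduces the problem to locating the lower endpoint $x^A(y)$.

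To pin down the threshold I would restrict to interval conjectures $[\theta,y]$ and evaluate the marginal user's utility,
\[
g(\theta)\equiv U(\theta)=\alpha\theta+v-\int_\theta^y\tilde{x}\,d\tilde{x}=\alpha\theta+v-\frac{y^2-\theta^2}{2}.
\]
One checks that $g$ is continuous and strictly increasing on $[0,y]$ with $g(y)=\alpha y+v>0$. If $g(0)=v-\tfrac{y^2}{2}\ge0$, I set $x^A(y)=0$ and the entire segment $[0,y]$ participates; otherwise $g(0)<0<g(y)$, and the intermediate value theorem yields a unique root $x^A(y)\in(0,y)$. A direct self-consistency check then verifies that this is an equilibrium: under the conjecture $[x^A(y),y]$, monotonicity together with $g(x^A(y))=0$ gives $U(x)\ge0$ exactly on $[x^A(y),y]$ and $U(x)<0$ strictly below it, so the realized participant set coincides with the conjecture. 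The final claim is then immediate, since on $[x^A(y),y]$ the realized utility is $U(x)=\alpha x+v-\frac{y^2-x^2}{2}$, whence $U'(x)=\alpha+x>0$.

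I would flag the endogeneity of beliefs as the main obstacle: each user's participation depends on the \emph{entire} conjectured set $\hat{\mathcal{X}}$, so a priori the equilibrium could be a complicated or even disconnected set, and existence is not automatic. The entire argument hinges on the monotonicity observation, which must be established for arbitrary conjectures rather than only for intervals; this is precisely what collapses the best response to an interval, rules out disconnected equilibria, and converts the fixed-point problem into the scalar equation $g(\theta)=0$.
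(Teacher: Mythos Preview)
Your proof is correct and follows essentially the same route as the paper: both arguments dispose of $x>y$ via $c>v$, then establish that $U$ is increasing in $x$ on $[0,y]$ for an \emph{arbitrary} conjectured set $\hat{\mathcal{X}}$ (posting utility rises, reading disutility falls as the integration domain shrinks), which forces the equilibrium set to be an upper interval. Your version is more thorough in that you explicitly construct the threshold via the intermediate value theorem on $g(\theta)$ and verify self-consistency, whereas the paper simply asserts existence of $x^A$ from monotonicity without spelling out the fixed-point check.
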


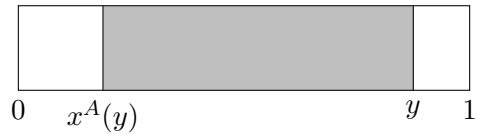
\begin{figure}[h]
\caption{Illustration of the platform's user base (advertising)}
    \label{fig:config_ad}
    \centering
\begin{tikzpicture}[scale=0.75]

\path [fill=lightgray] (7,0) -- (1.5,0) -- (1.5,1.5) -- (7,1.5);

\draw (0,0) node [below] {$0$} -- (8,0) node [below] {$1$} -- (8,1.5) -- (0,1.5)  --cycle;
\draw (7,0) node [below] {$y$} -- (7,1.5);
\draw (1.5,0) node [below] {$x^A(y)$} -- (1.5,1.5);

\end{tikzpicture}
    
\end{figure}

The proof of Lemma \ref{lem:config} is on p. \pageref{pf:lem1} in Appendix \ref{appdx:pfs}. This lemma is illustrated in Figure \ref{fig:config_ad} and the platform's user base is given by the shaded area. Based on Lemma 1, we know that in equilibrium, for any $x\in[x^A(y),y]$, the utility of a user with extremeness index $x$ from participating on the platform can be expressed as: 
\begin{equation}\label{eq:utility_fcn_eqm}
    U(x)=\alpha x+v-\int_{x}^{y}\tilde{x}d\tilde{x}.
\end{equation}
Based on the utility function, we can solve for $x^A(y)$ by setting $U(x^A(y))$ to zero, for any given level of content moderation $y$. The solution is given below:
\begin{equation}\label{eq:x_lower_bar}
    x^A(y)=
    \begin{cases}
        -\alpha+\sqrt{\alpha^2+y^2-2v} & \text{if $y\geq\sqrt{2v}$,} \\
        0 & \text{if $y<\sqrt{2v}$.} 
  \end{cases}
\end{equation}
Here, when $y\geq\sqrt{2v}$, we have a regime of a more lax content moderation. In this case, more content moderation will decrease $x^A(y)$, or draw more users of less extreme views to the platform. This is the case where moderating extreme views may help the platform to expand its market. When $y<\sqrt{2v}$, we have a regime of a more strict content moderation. In this case, $x^A(y)$ is bounded at zero and  more content moderation will simply reduce the platform's customer base. 

Thus, the revenue of a platform under advertising is given by
\begin{equation}
    \pi^A=\zeta(y-x^A(y)),
\end{equation}
and the platform chooses the optimal level of content moderation, $y^{A*}$, to maximize its revenue $\pi^A$. The following proposition summarizes the optimal content moderation strategy of a platform under advertising.

\begin{proposition}\label{prop:ad_eq}
{\bf \em (Advertising model and content moderation)} 
A platform with advertising as its revenue model does not always have incentives to conduct content moderation. It will conduct content moderation $y^{A*}=\sqrt{2v}$ only if the  posting utility in the market is sufficiently small relative to the maximum reading utility, or $\alpha<{\alpha^A}\equiv\sqrt{2v}$. The optimal revenue is given by $\pi^{A*}=\zeta\sqrt{2v}$. Otherwise, the platform does not moderate content ($y^{A*}=1$) and its optimal revenue is given by $\pi^{A*}=\zeta(1+\alpha-\sqrt{\alpha^2+1-2v})$.

\end{proposition}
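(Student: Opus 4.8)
The plan is to maximize $\pi^A(y)=\zeta\bigl(y-x^A(y)\bigr)$ over $y\in[0,1]$ by exploiting the piecewise form of $x^A(y)$ in \eqref{eq:x_lower_bar}, handling the two regimes $y<\sqrt{2v}$ (strict) and $y\geq\sqrt{2v}$ (lax) separately and then comparing their maxima. Note first that the model assumption $v<\tfrac12$ gives $\sqrt{2v}<1$, so both regimes are nonempty and the cutoff $y=\sqrt{2v}$ is feasible.

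On the strict regime $x^A(y)=0$, so $\pi^A=\zeta y$ is strictly increasing; its supremum over $[0,\sqrt{2v})$ equals $\zeta\sqrt{2v}$, attained in the limit and, since $x^A(\sqrt{2v})=0$ by continuity, at the boundary point $y=\sqrt{2v}$ itself. Hence no interior point of the strict regime can be optimal: setting $y=\sqrt{2v}$ weakly dominates the entire strict region.

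On the lax regime, substituting $x^A(y)=-\alpha+\sqrt{\alpha^2+y^2-2v}$ gives $\pi^A=\zeta\bigl(y+\alpha-\sqrt{\alpha^2+y^2-2v}\bigr)$, and the key step is the sign of the derivative
\[
\frac{d\pi^A}{dy}=\zeta\left(1-\frac{y}{\sqrt{\alpha^2+y^2-2v}}\right).
\]
Because $\sqrt{\alpha^2+y^2-2v}$ exceeds, equals, or falls short of $y$ exactly according to whether $\alpha^2-2v$ is positive, zero, or negative, this derivative carries the constant sign of $\alpha^2-(\sqrt{2v})^2$ throughout $[\sqrt{2v},1]$. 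Thus $\pi^A$ is monotone on the lax regime: decreasing when $\alpha<\sqrt{2v}$, so the best lax choice is $y=\sqrt{2v}$ (value $\zeta\sqrt{2v}$); increasing when $\alpha>\sqrt{2v}$, so the best lax choice is $y=1$ (value $\zeta(1+\alpha-\sqrt{\alpha^2+1-2v})$).

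Combining the regimes yields the claim. When $\alpha<\sqrt{2v}$, both regimes peak at $y=\sqrt{2v}$ with common value $\zeta\sqrt{2v}$, so $y^{A*}=\sqrt{2v}$ and the platform moderates. When $\alpha\geq\sqrt{2v}$, the lax revenue is nondecreasing and its endpoint value at $y=1$ dominates $\zeta\sqrt{2v}$, so $y^{A*}=1$ and the platform does not moderate (at the knife-edge $\alpha=\sqrt{2v}$ the lax revenue is flat and we adopt the no-moderation convention). I do not expect a genuine obstacle here; the only points needing care are confirming continuity of $\pi^A$ at $y=\sqrt{2v}$ so the two monotonicity arguments glue into a single global comparison, and checking $x^A(y)\in[0,y]$ on the lax regime so that the revenue expression derived from \eqref{eq:utility_fcn_eqm} is valid there.
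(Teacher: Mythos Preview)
Your proposal is correct and follows essentially the same approach as the paper: split into the two regimes of $x^A(y)$, observe that $\pi^A=\zeta y$ is increasing on the strict regime, and then use the sign of $\frac{d\pi^A}{dy}=\zeta\bigl(1-\frac{y}{\sqrt{\alpha^2+y^2-2v}}\bigr)$ on the lax regime, which reduces to the sign of $\alpha^2-2v$. Your additional care about continuity at $y=\sqrt{2v}$ and the knife-edge case $\alpha=\sqrt{2v}$ is a mild refinement of the paper's argument but does not change the route.
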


The proof of Proposition \ref{prop:ad_eq} is on p. \pageref{pf:prop1} in Appendix \ref{appdx:pfs}.
Proposition \ref{prop:ad_eq} suggests that content moderation is a tool for the platform to achieve its revenue objectives. When a platform uses advertising as its revenue model, it needs to maximize its user base to maximize its advertising revenues. Content moderation can help the platform to maximize its user base if by cutting extreme content and pruning extreme users expands the user base amongst the less extreme users. We can see this more clearly by comparing the marginal gain and marginal loss in market size associated with content moderation. 

Note that at any given $y$, the platform will always want to do more content moderation $dy$ if doing so can draw more users $dx^A$ to the platform, or algebraically $dx^A>dy$. This implies that if $\frac{d x^A}{d y}>1$, the platform will do content moderation to expand its customer base. It is simple to show that $\frac{d x^A}{d y}=\frac{y}{\sqrt{y^2+\alpha^2-2v}}$ and it is larger than 1 if and only if $\alpha<\sqrt{2v}\equiv\alpha^A$, which is the condition given in Proposition \ref{prop:ad_eq}. 

To probe deeper, content moderation can expand the platform's customer base fundamentally because moderating extreme views and hence pruning extreme users on a platform  can attract more users with less extreme opinions by increasing their reading utility. To see this, if a platform wants to expand its customer base beyond $x^A$ to the left by $\Delta x$, the new users have a lower posting utility by $\alpha \Delta x$. These new users will only participate in the platform if their reading utility is increased by $\alpha \Delta x$. The platform can increase the reading utility for the marginal user only by further content moderation by the amount of $\Delta y$. The amount of reading utility increase by $\Delta y$ is given by the expression $\Delta y\frac{\alpha y}{\sqrt{y^2+\alpha^2-2v}}$. Thus, the amount of $\Delta y$ needed to increase the marginal user's reading utility by $\alpha\Delta x$, denoted as $\Delta\tilde{ y}$, is given by $\Delta \tilde{y}=\frac{\Delta x\sqrt{y^2+\alpha^2-2v}}{y}$. 
This expression decreases with a higher $v$, or at a higher $v$ a smaller change in content moderation is required to deliver the same amount of reading utility to the marginal users. This explains why at the optimal content moderation $y^{A*}=\sqrt{2v}$, the platform does less content moderation when $v$ is large. This also explains why a platform would not do any content moderation if $v$ is too small (the threshold $\alpha^A\equiv\sqrt{2v}$ is too small): too much content moderation to deliver too little reading utility. 
As a platform's objective is to maximize its customer base, it will do more content moderation as long as $\Delta \tilde{y}<\Delta x$, which also gives us the condition in Proposition \ref{prop:ad_eq}.

Through this analysis we can see that Proposition \ref{prop:ad_eq} reveals an interesting insight about content moderation across different platforms. Under the advertising revenue model, whenever users care sufficiently more about reading utility than posting utility, the platform is motivated to moderate content. Otherwise, it is not. This insight seems consistent with our casual observations. Users on the social media platform Parler, for instance, seem to care more about posting and we see little content moderation there. Whereas users on YouTube seem to care more about viewing than posting, we see more strict content moderation.

\subsection{Subscription-Supported Social Media Platforms}\label{sec:subsc_model}

When the revenue source is subscription fees, the platform determines a content moderation strategy ($y$) as in the case of advertising model, and also sets a subscription fee ($p$) for all users. Then, a user at $x$ will participate in the platform if her net utility $U(x)-p>0$. Similar to the advertising case, a platform's customer base is illustrated in Figure \ref{fig:config_subsc}.
\begin{figure}[t]
\caption{Illustration of the platform's user base (subscription)}
    \label{fig:config_subsc}
    \centering
\begin{tikzpicture}[scale=0.75]

\path [fill=lightgray] (7,0) -- (2,0) -- (2,1.5) -- (7,1.5);

\draw (0,0) node [below] {$0$} -- (8,0) node [below] {$1$} -- (8,1.5) -- (0,1.5)  --cycle;
\draw (7,0) node [below] {$y$} -- (7,1.5);
\draw (2,0) node [below] {$x^S(y,p)$} -- (2,1.5);

\end{tikzpicture}
    
\end{figure}
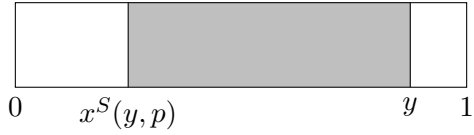
The marginal user
$x^S(y,p)$, which is dependent also on $p$ now, is given by
\begin{equation}\label{eq:xS}
 x^S(y,p)=
    \begin{cases}
        -\alpha+\sqrt{\alpha^2+y^2-2(v-p)} & \text{if $y\geq\sqrt{2(v-p)}$,} \\
        0 & \text{if $y<\sqrt{2(v-p)}$.} 
  \end{cases}  
\end{equation}
The platform maximizes its subscription revenue $\pi^S$ by setting its content moderation strategy $y$, and  subscription fee $p$, and the revenue is given by
\begin{equation}
    \pi^S=p(y-x^S(y,p)).
\end{equation}
The following proposition summarizes the optimal strategy for the platform under a subscription revenue model.

 \begin{proposition}\label{prop:subsc_eq}
 {\bf \em (Subscription model and content moderation)} 
Under the subscription model, there exists $\alpha^S\in(0,\alpha^A)$, such that
the platform will conduct content moderation if $\alpha<\alpha^S$. 
The optimal content moderation strategy, equilibrium subscription fee, and the resulting revenue are given respectively by $y^{S*}=\sqrt{\frac{2v}{3}}$, $p^{*}={\frac{2v}{3}}$, and $\pi^{S*}=(\frac{2v}{3})^{\frac{3}{2}}$.
Otherwise, if $\alpha\geq\alpha^S$, the platform does not moderate content ($y^{S*}=1$). The platform's optimal price and profit are given by $p^*=p_1^*\equiv\frac{1}{9}\Big[(1+\alpha)\sqrt{2(2-3v+2\alpha^2+\alpha)}-2(1-3v+\alpha^2-\alpha)\Big]$, and  $\pi^{S*}=p_1^*(1+\alpha-\sqrt{\alpha^2+1-2(v-p_1^*)})$.
\end{proposition}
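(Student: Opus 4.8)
The plan is to follow the two-step logic of the advertising analysis, but now optimizing jointly over the moderation level $y$ and the fee $p$. First I would reduce the problem $\max_{y\in[0,1],\,p\ge 0}\pi^S=p\big(y-x^S(y,p)\big)$ to one dimension by fixing $p$ and optimizing over $y$. This is exactly the advertising computation with $v$ replaced by $v-p$: on the strict regime $y<\sqrt{2(v-p)}$ the market size is simply $y$ and increases in $y$, while on the lax regime the derivative of the market size $y-x^S$ equals $1-y/\sqrt{\alpha^2+y^2-2(v-p)}$, whose sign is that of $\alpha^2-2(v-p)$. Hence for fixed $p$ the optimal moderation level is $y=\sqrt{2(v-p)}$ (moderation) when $\alpha^2<2(v-p)$ and $y=1$ (no moderation) when $\alpha^2>2(v-p)$, giving the reduced objective
\[
\Pi(p)=\begin{cases}
p\sqrt{2(v-p)} & \text{if } p\le v-\tfrac{\alpha^2}{2},\\[2pt]
p\big(1+\alpha-\sqrt{\alpha^2+1-2(v-p)}\big) & \text{if } p\ge v-\tfrac{\alpha^2}{2},
\end{cases}
\]
which is continuous at the junction $p=v-\alpha^2/2$, where both expressions equal $\alpha p$.

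Next I would maximize each branch over $p$. The content-moderation branch $p\sqrt{2(v-p)}$ has first-order condition $2v-3p=0$, so $p^*=2v/3$, giving $y^{S*}=\sqrt{2(v-p^*)}=\sqrt{2v/3}$ and $\pi^{S*}=(2v/3)^{3/2}$; this interior point is feasible (it satisfies $p\le v-\alpha^2/2$) precisely when $\alpha<\sqrt{2v/3}$. For the no-moderation branch $\Pi_1(p)\equiv p\big(1+\alpha-\sqrt{\alpha^2+1-2(v-p)}\big)$, I would write its first-order condition, clear the square root, and reduce it to a quadratic in $R=\sqrt{\alpha^2+1-2(v-p)}$; its admissible root yields the stated closed form $p_1^*$ and the accompanying profit, with $y^{S*}=1$. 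I expect this to be the main grind: extracting $p_1^*$ in closed form, checking the second-order condition, and confirming $p_1^*$ lies in the no-moderation region so that it is indeed the branch maximizer.

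Finally I would pin down $\alpha^S$ by comparing the two candidate optima. If $\alpha\ge\alpha^A=\sqrt{2v}$ the moderation region $\{p\le v-\alpha^2/2\}$ is empty and no moderation is optimal outright. If $\alpha\in[\sqrt{2v/3},\alpha^A)$, the moderation branch is increasing on its whole domain, so its best value is attained at the junction; but the no-moderation branch is still increasing just past the junction (its derivative there equals $\alpha-(v-\alpha^2/2)>0$), so no moderation strictly dominates and $\alpha^S\le\sqrt{2v/3}$. For $\alpha<\sqrt{2v/3}$ the moderation revenue equals the constant $(2v/3)^{3/2}$, whereas the no-moderation revenue $\max_{p\ge v-\alpha^2/2}\Pi_1(p)$ is strictly increasing in $\alpha$: for each fixed feasible $p$, $\Pi_1(p)$ rises with $\alpha$ because $\sqrt{\alpha^2+1-2(v-p)}>\alpha$, and the feasible set grows as $\alpha$ increases. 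Since at $\alpha=0$ moderation strictly dominates (no moderation cannot yield positive revenue, as $1+\alpha-\sqrt{\alpha^2+1-2(v-p)}\le 0$ throughout its region) while at $\alpha=\sqrt{2v/3}$ no moderation strictly dominates, this strict monotonicity produces a unique crossover $\alpha^S\in(0,\sqrt{2v/3})\subset(0,\alpha^A)$, with content moderation optimal exactly when $\alpha<\alpha^S$. The hardest part throughout is the no-moderation side—both the closed-form $p_1^*$ and the monotonicity-in-$\alpha$ of its maximized value—since, unlike the clean moderation branch, it does not simplify.
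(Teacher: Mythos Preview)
Your plan is correct and matches the paper's proof in its first two steps: optimizing $y$ for fixed $p$ to reduce to the two branches, then extracting the interior moderation optimum $(p,y)=(2v/3,\sqrt{2v/3})$ and the first-order condition for $p_1^*$ (the paper also verifies the SOC directly via $\partial^2\hat\pi^S/\partial p^2<0$). Where you diverge is in pinning down $\alpha^S$. The paper compares $(2v/3)^{3/2}$ with the \emph{unconstrained} no-moderation maximum $\hat\pi^S(\hat p^{S*})=\max_p\Pi_1(p)$, shows the latter is increasing in $\alpha$ via the envelope theorem, and then performs fairly involved endpoint checks at $\alpha=0$ and $\alpha=\alpha^A=\sqrt{2v}$ by analyzing two auxiliary functions $H(v)$ and $J(v)$ over $v\in(0,\tfrac12)$.

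Your route instead compares with the \emph{constrained} no-moderation optimum $B'(\alpha)=\max_{p\ge v-\alpha^2/2}\Pi_1(p)$ and checks the endpoints $\alpha=0$ (where the feasible region collapses to $p\ge v$ and $\Pi_1\le 0$, so $B'=0$) and $\alpha=\sqrt{2v/3}$ (where a one-line derivative computation at the junction shows $\Pi_1$ still increases). This is genuinely cleaner: it avoids the $H$/$J$ calculus entirely and yields the sharper containment $\alpha^S\in(0,\sqrt{2v/3})\subset(0,\alpha^A)$. The two comparisons define the same threshold because, whenever no moderation is the global winner, concavity of $\Pi_1$ forces the unconstrained maximizer $\hat p^{S*}$ to lie in the no-moderation region (so $B'=\hat\pi^S(\hat p^{S*})$ there); this also disposes of your anticipated check that $p_1^*$ lies in the correct region. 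Your worry that ``the no-moderation side is the hardest part'' is right for the closed-form $p_1^*$, but for locating $\alpha^S$ your argument is actually the easier one.
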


The proof of Proposition \ref{prop:subsc_eq} is on pp. \pageref{pf:prop2}-\pageref{pf:prop3} in Appendix \ref{appdx:pfs}.
By analyzing Proposition \ref{prop:subsc_eq}, we can develop insights about what motivates a platform under subscription to do more or less content moderation, how content moderation affects its pricing, and finally how content differs under advertising vs subscription revenue models because of a platform's effort in content moderation.

By comparing Propositions \ref{prop:ad_eq} and \ref{prop:subsc_eq}, we see that 
a platform under advertising is more likely to do content moderation ($\alpha^A>\alpha^S$). This seems to be consistent with casual observations. Facebook and Twitter are two prominent examples of advertising-based platforms, and they both actively moderate content. Even though a platform under subscription is less likely to moderate its content, when it does, it will moderate content more aggressively than what it would  under advertising ($y^{S*}=\sqrt{\frac{2v}{3}}<y^{A*}=\sqrt{2v}$). The platform is less likely to engage in content moderation because the subscription fee screens out less extreme users on the platform, and the remaining users are more extreme, getting less disutility from other more extreme users. For that reason, moderating extreme content adds less utility to the marginal users under subscription than to those under advertising.  In other words, content moderation is less effective in attracting marginal users when subscription model is used or $\frac{\partial x^S}{\partial y}<\frac{\partial x^A}{\partial y}$, all else being equal. This explains why content moderation is more sparingly used under subscription.

The reason why a platform under subscription may behave more aggressively once it decides to do content moderation is related to the role of pricing. Under subscription, a platform can use pricing to internalize its decision on the extent of content moderation, which is not possible under advertising. To see this clearly, we can derive how the optimal price for the platform may change with its content moderation decision, and the expression in a general form is given by
$$\frac{\partial p^*}{\partial y}=\frac{1-\frac{\partial x^S}{\partial y}-p^*\frac{\partial^2 x^S}{\partial p\partial y}}{2\frac{\partial x^S}{\partial p}+p^*\frac{\partial^2 x^S}{\partial y^2}}.$$
The denominator of this expression is positive guaranteed by the second-order condition. Therefore, content moderation will lead to a lower price by the platform if the numerator is positive, which is the case if content moderation adds little utility to the marginal users (a small $\frac{\partial x^S}{\partial y}$), or content moderation increases price sensitivity on the part of marginal users (a large $|\frac{\partial^2 x^S}{\partial p\partial y}|$). Thus, price can help the platform to expand its user base more effectively in conjunction with content moderation so that it wants to do it more aggressively. If the numerator is negative, which is the case if marginal users are very responsive to content moderation but their price sensitivity does not change much with content moderation, the platform will increase its price with content moderation. This is the case where the platform internalizes content moderation efforts by charging a higher subscription fee. Given our modeling assumptions, price is used to enhance the user expansion effect of content moderation.\footnote{In our model, users' disutility from reading extreme content comes from all users with higher extremeness indices. This assumption, although more realistic, reduces the response of marginal users to a platform's content moderation. If we were to let a user's disutility only come from the most extreme content on the platform, we would enhance this response greatly so that the platform will want to raise its price to internalize any content moderation.} 

The conclusions that a platform under advertising is more likely to conduct content moderation and that when conducting content moderation, a platform under subscription does more aggressively can both be tested with suitable data.
To provide some  prima facie evidence, we have collected data on 103 social media platforms based on the ``101+ Social Media Sites You Need to Know in 2021''  composed by \textit{Influencer Marketing Hub}.\footnote{\url{https://influencermarketinghub.com/social-media-sites/}.} As shown in Appendix \ref{sec:empirical}, we collect the texts of their content moderation policy and also  information about their revenue models. In addition, we hire independent graders from Mechanical Turk to read and code the texts of content moderation policy for each platform.
Our analysis shows that out of all the social media platforms in our analysis, only two  platforms do not conduct content moderation and they both adopt subscription as their major revenue model. Our regression analysis further shows that the platforms with advertising as their revenue model tend to have a less restrictive content moderation policy than those with subscription (see Appendix \ref{sec:empirical} for details). 
While not being conclusive, these findings are consistent with the conclusions coming out of our theoretical analysis, providing some preliminary external validity for our modeling efforts.

The first two propositions also allow us to shed light on whether content tends to be more or less extreme on a platform with subscription vs advertising model as a result of conducting content moderation. Our analysis shows that whenever a platform under subscription does not moderate content, it has more extreme content and appeals to more extreme users than a platform under advertising. However, when a platform under subscription does conduct content moderation, it fields less extreme content and caters to less extreme users than a platform under advertising. This is because subscription fee serves to screen out less extreme users when a platform does not moderate content, and when it does, as discussed previously, it uses content moderation more aggressively and charges a lower subscription fee to draw moderate users to the platform. This analysis offers a testable hypothesis that platforms under subscription tend to have the most extreme or the least extreme content. Anecdotal evidence seems consistent with this hypothesis. Gab, for instance, is a subscription-based platform with extreme content \citep{zannettou2018gab}.  LinkedIn is a subscription-based platform that does strict content moderation to ensure that the social discourse on the platform remains professional.


\subsection{Content Moderation and Revenue Models}

The previous two sections show that a platform's revenue model will influence its content moderation strategy. In this section, we push that line of inquiry one step further to see how content moderation can influence the choice of a platform's revenue model. We will do so by first examining when a platform may choose subscription over advertising, and then comparing that choice with the choice when content moderation is not allowed. 

A platform will choose advertising over subscription if and only if $\pi^{A*}$ is larger (smaller) than $\pi^{S*}$. This comparison will define a $\overline{\zeta}$ such that a platform will choose advertising if and only if $\zeta>\overline{\zeta}$. Here $\overline{\zeta}$ is the minimum advertising value per user needed for a platform to embrace advertising model. Thus, a larger $\overline{\zeta}$ will make it less likely for a platform to choose advertising.
Similarly, when content moderation is not allowed, we can define a $\hat{\zeta}$ such that the platform will choose advertising if and only if $\zeta>\hat{\zeta}$. By comparing  $\overline{\zeta}$  and $\hat{\zeta}$, we can isolate how optimal content moderation can alter a platform's preference for advertising vs subscription model. We will make the comparison for all $\alpha<\alpha^A$. For any $\alpha\geq\alpha^A$, we have the trivial case where content moderation makes no difference in the choice of revenue model because no content moderation will be conducted regardless of this choice, even if content moderation is allowed. 
The following proposition summarizes the findings.

\medskip 
\begin{proposition}\label{prop:biz model choice}
{\bf \em (Content moderation and revenue model choice)} 
Relative to the case of no content moderation, a platform conducting optimal content moderation is more likely to choose subscription over advertising ($\overline{\zeta}>\hat{\zeta}$) if the maximum posting utility is sufficiently small, i.e., $\alpha<\alpha_1$. Otherwise, i.e., $\alpha_1<\alpha<\alpha^A$, optimal content moderation increases the likelihood of a platform choosing advertising  ($\overline{\zeta}<\hat{\zeta}$).

\end{proposition}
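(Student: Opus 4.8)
The plan is to exploit that the advertising profit is linear in $\zeta$ while the subscription profit is independent of $\zeta$, so each revenue-model comparison collapses to a single break-even value of $\zeta$. Writing the optimal advertising profit as $\pi^{A*}=\zeta\,G^A(\alpha)$ with $G^A(\alpha)$ the equilibrium user base, the platform prefers advertising iff $\zeta$ exceeds the ratio of the subscription profit to $G^A$. For $\alpha<\alpha^A$ with moderation allowed, Proposition~\ref{prop:ad_eq} gives $G^A(\alpha)=\sqrt{2v}$, so $\overline{\zeta}=\pi^{S*}/\sqrt{2v}$; forcing $y=1$ gives $\hat{\zeta}=\Pi(\alpha)/g(\alpha)$, where $g(\alpha)\equiv 1+\alpha-\sqrt{\alpha^2+1-2v}$ is the no-moderation user base and $\Pi(\alpha)\equiv p_1^*\big(1+\alpha-\sqrt{\alpha^2+1-2(v-p_1^*)}\big)$ is the no-moderation subscription profit from Proposition~\ref{prop:subsc_eq}. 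Since $g>0$, the sign of $\overline{\zeta}-\hat{\zeta}$ equals the sign of $F(\alpha)\equiv (\pi^{S*}/\sqrt{2v})\,g(\alpha)-\Pi(\alpha)$, and the whole proposition becomes a statement about where $F$ changes sign on $[0,\alpha^A)$.

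First I would dispose of the easy regime $\alpha^S\le\alpha<\alpha^A$. Here Proposition~\ref{prop:subsc_eq} says subscription does not moderate, so $\pi^{S*}=\Pi(\alpha)$ and $F(\alpha)=\Pi(\alpha)\big(g(\alpha)/\sqrt{2v}-1\big)$. Because $g'(\alpha)=1-\alpha/\sqrt{\alpha^2+1-2v}>0$ (using $v<\tfrac12$) and $g(\alpha^A)=\sqrt{2v}$, we get $g(\alpha)<\sqrt{2v}$ strictly for $\alpha<\alpha^A$, hence $F<0$ and $\overline{\zeta}<\hat{\zeta}$ throughout. Intuitively, when moderation is allowed only advertising actually uses it, raising advertising's attractiveness and lowering its break-even $\zeta$ relative to the no-moderation benchmark.

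The substantive work is the regime $\alpha<\alpha^S$, where both models moderate. Then $\pi^{S*}=(2v/3)^{3/2}$ and $\overline{\zeta}$ collapses to the constant $2v/(3\sqrt{3})$, so $F(\alpha)=\frac{2v}{3\sqrt{3}}g(\alpha)-\Pi(\alpha)$. I would establish single crossing in three steps: (i) the left endpoint, where computing $p_1^*(0)$ from its closed form reduces $F(0)>0$ to a one-variable inequality in $v\in(0,\tfrac12)$; (ii) the right endpoint, where the definition of $\alpha^S$ gives $\Pi(\alpha^S)=(2v/3)^{3/2}=\frac{2v}{3\sqrt3}\sqrt{2v}$, so $F(\alpha^S)=\frac{2v}{3\sqrt{3}}\big(g(\alpha^S)-\sqrt{2v}\big)<0$; and (iii) strict monotonicity $F'<0$ on $[0,\alpha^S]$. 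For (iii) I would use $g'$ above and, by the envelope theorem applied to the price-optimization defining $p_1^*$, $\Pi'(\alpha)=p_1^*\big(1-\alpha/\sqrt{\alpha^2+1-2(v-p_1^*)}\big)$. Since $p_1^*>0$ makes the radical in $\Pi'$ strictly exceed that in $g'$, one gets $\Pi'>p_1^*g'$; combined with the auxiliary bound $p_1^*\ge 2v/(3\sqrt{3})$ this yields $\Pi'>\frac{2v}{3\sqrt{3}}g'$, i.e. $F'<0$. An intermediate-value argument then gives a unique $\alpha_1\in(0,\alpha^S)$ with $F>0$ on $[0,\alpha_1)$ and $F<0$ on $(\alpha_1,\alpha^S]$. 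Splicing with the easy regime yields $F>0$ exactly on $[0,\alpha_1)$ and $F<0$ on $(\alpha_1,\alpha^A)$, which is the claim.

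I expect the main obstacle to be the monotonicity step (iii), and specifically the auxiliary bound $p_1^*\ge 2v/(3\sqrt{3})$, since $p_1^*$ is a cumbersome radical expression; proving it cleanly (for instance by showing $p_1^*$ is increasing in $\alpha$ and checking the bound at $\alpha=0$), together with the endpoint inequality $F(0)>0$, are the only places where the explicit algebra of $p_1^*$ must be confronted. Everything else reduces to sign comparisons and the monotonicity of $g$.
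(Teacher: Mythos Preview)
Your outline is correct and would yield a valid proof. The overall architecture---reducing the comparison to a single sign function of $\alpha$, disposing of $[\alpha^S,\alpha^A)$ directly via $g(\alpha)<\sqrt{2v}$, and establishing single crossing on $[0,\alpha^S]$ by checking endpoint signs plus a monotonicity argument---matches the paper's strategy, and your endpoint computations (i) and (ii) are the same ones the paper carries out.

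The genuine difference is in the monotonicity step. You work with the \emph{difference} $F(\alpha)=\tfrac{2v}{3\sqrt{3}}\,g(\alpha)-\Pi(\alpha)$ and aim to show $F'<0$ via the chain $\Pi'>p_1^*\,g'\ge \tfrac{2v}{3\sqrt{3}}\,g'$, which hinges on the auxiliary bound $p_1^*\ge 2v/(3\sqrt{3})$. The paper instead works with the \emph{ratio} $\hat{\zeta}=\Pi/g$ and shows $\partial\hat{\zeta}/\partial\alpha>0$ directly. Writing $A=\sqrt{\alpha^2+1-2(v-p_1^*)}$ and $B=\sqrt{\alpha^2+1-2v}$ (so $A\ge B>\alpha$), the quotient rule reduces the sign of $\partial\hat{\zeta}/\partial\alpha$ to the sign of $AB+(1-A-B)\alpha+\alpha^2$, and the paper observes
\[
AB+(1-A-B)\alpha+\alpha^2=A(B-\alpha)+\alpha(1-B)+\alpha^2\;\ge\;B(B-\alpha)+\alpha(1-B)+\alpha^2=(B-\alpha)^2+\alpha>0,
\]
using only $A\ge B$ and $B>\alpha$. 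The optimal price $p_1^*$ never has to be located relative to any particular constant; it enters only through $A\ge B$. Since $\overline{\zeta}$ is constant on $[0,\alpha^S]$, monotonicity of $\hat{\zeta}$ alone gives the single crossing.

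Your route is workable---the bound $p_1^*(0)\ge 2v/(3\sqrt{3})$ reduces, after squaring, to $18-8\sqrt{3}\ge(48-24\sqrt{3})v$, which holds for $v<\tfrac12$, and monotonicity of $p_1^*$ in $\alpha$ can be checked from its closed form---but it adds exactly the layer of algebra you correctly flag as the main obstacle. The paper's quotient-rule identity is slicker precisely because it avoids ever comparing $p_1^*$ to a numerical threshold.
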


\begin{figure}[h]
\caption{Revenue model choice and content moderation}\label{fig:aBar_alpha}
\centering
\includegraphics[width=0.7\textwidth]{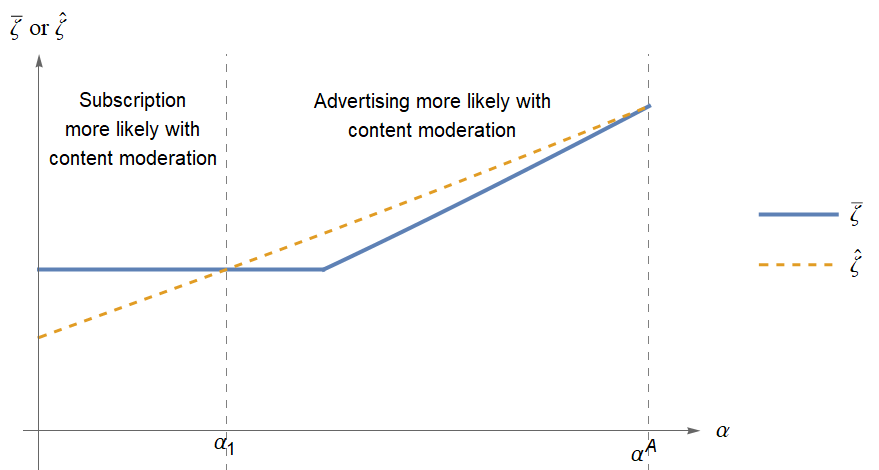}
\end{figure}

The proof of Proposition \ref{prop:biz model choice} is on pp. \pageref{pf:prop3}-\pageref{pf:prop4} in Appendix \ref{appdx:pfs}. Proposition \ref{prop:biz model choice} is illustrated in Figure \ref{fig:aBar_alpha}. For $\alpha<\alpha_1$, we see $\overline{\zeta}>\hat{\zeta}$ in Figure \ref{fig:aBar_alpha}, implying that it takes a higher advertising rate per user for a platform to choose advertising over subscription model when content moderation is introduced. For $\alpha>\alpha_1$, we have $\overline{\zeta}<\hat{\zeta}$, which implies that a platform is willing to embrace advertising model at a lower advertising rate when content moderation is allowed. This proposition suggests an intriguing insight that at a low $\alpha$, it requires advertisers to pay a higher advertising rate to switch a platform from subscription to advertising model when content moderation is allowed. Equivalently, at a low $\alpha$, content moderation makes it more likely for a platform to adopt the subscription model at a given advertising rate.

Intuitively, content moderation helps a platform under subscription more than that under advertising because without content moderation, marginal users are more sensitive to any change in maximum posting utility ($\alpha$) under subscription than advertising, and hence a platform under subscription suffers more in profitability with any reduction in $\alpha$.
Content moderation neutralizes that effect to deliver more profit gain to a platform under subscription.

Proposition \ref{prop:biz model choice} suggests a testable hypothesis that in the environment where social media platforms are free to conduct content moderation vs one where platforms are constrained for one reason or another, we shall see more platforms choosing to adopt a subscription model over an advertising model if users care more about reading than posting. Otherwise, we would expect the social media to use advertising model more. The variation in the extent of content moderation across Europe, US, and China may provide a good testing ground for this hypothesis.

\section{Content Moderation and Technology}\label{sec:imperfect AI}

The analysis in the previous section delivers the key insights into the incentives a platform  faces under advertising or subscription in using  content moderation and also the impact of content moderation on the choice of revenue model. These insights are delivered under the assumption of perfect technology for content moderation. In this section, we shall expand our analysis and explore a platform's content moderation strategy under imperfect technology.

In reality, an accurate technology for content moderation is still many years into the future \citep{AIprovesisapoor}. As  Mark Zuckerberg commented, ``over a five-to-ten-year period we will have AI tools that can get into some of the linguistic nuances of different types of content to be more accurate, to be flagging things to our systems, but today we're just not there on that... There's a higher error rate than I'm happy with'' \citep{timelineforAIto}. In a well-publicized example, in the days leading up to 4th of July, 2018,  Facebook's algorithm for ``hate speech detection'' flagged down and removed a post of the Declaration of Independence because of paragraphs 27-31, which include the phrase ``merciless Indian savages'' \citep{FacebookBI}. 
The existence of imperfect technology  raises a number of questions about the practice and management of content moderation.

First, if technology has a ``higher error rate'' than a platform is ``happy with,'' how should a platform employ the technology given the choice of its revenue model? In this regard, a related question is whether a platform has the incentive to embrace an inaccurate technology to do content moderation? Second, how can a platform best manage its content moderation to achieve its profit objectives? Given that a platform's primary objective is to maximize its profit, could content moderation with imperfect technology lead to a higher extremeness index for the platform? Finally, if today's technology is ``just not there,'' and there is ``a higher error rate,'' what kind of a platform has the most incentives to improve it or not improve it? In this section, we address all those questions by extending our model to incorporate imperfect technology for content moderation.

When a platform uses imperfect technology, it can err in two ways. On the one hand, it may not be able to prune the extreme content a platform wants to eliminate completely so that part of the extreme content remains on the platform. On the other hand, it may accidentally prune the content it wants to preserve. To capture both types of errors and also to nest our main model as a special case, we specify the content moderation technology $q_k(x|y)$ as the probability that a content generated by a user with extremeness index $x$ is removed by the platform when it intends to prune all $x>y$ given its technology accuracy $k$. Specifically, we have:
\begin{equation}\label{eq:qk}
    q_k(x|y) = 
    \begin{cases}
    \frac{1}{2}-k & \text{if $x
    \leq y$;} \\
    \frac{1}{2}+k & \text{if $x>y$.} 
  \end{cases}
\end{equation}

Above technology prunes any content $x>y$ with probability $\frac{1}{2}+k$, where $k\in[0,\frac{1}{2}]$. It also accidentally deletes any content $x<y$ with probability $\frac{1}{2}-k$. In other words, the technology allows a platform to prune extreme content with a higher probability than it deletes moderate content accidentally. When $k=\frac{1}{2}$, we go back to our main model where extreme content is cut with perfect accuracy. When $k=0$, all content on the platform is cut with equal probability and we have a random technology at work. Thus, a higher $k$ indicates a more accurate technology. Except for this imperfect technology, we also focus our analysis on $\alpha<{\alpha^S}$\label{alpha<alphaS} such that at $k=\frac{1}{2}$ a platform always chooses to do content moderation regardless of whether it is under advertising or subscription models. Then, with this assumption, whenever a platform does not want to do content moderation, it will be due to imperfect technology. We maintain all other assumptions in the previous section. Our analysis will unfold by first looking at content moderation in advertising, then in subscription, and finally the incentives a platform faces in advancing its content moderation technology.

\subsection{Content Moderation with Imperfect Technology}

Due to imperfect technology, when a platform tries to prune all content $x>y$, the content by users with extremeness index $x$ will be eliminated with probability $q_k(x|y)$  and it remains on the platform with probability $1-q_k(x|y)$, as defined in equation (\ref{eq:qk}). Therefore, a user's expected utility from posting is given by $\alpha x\big(1-q_k(x|y)\big)-c q_k(x|y)$ and her utility from reading is correspondingly adjusted by the probability. We can write the total utility  for a user at ${x}$ as
\begin{equation}\label{eq:utility_fcn_2}
    U({x}) = \underbrace{\alpha {x}\big(1-q_k({x}|y)\big) - c q_k({x}|y)}_{\text{posting utility}} + \underbrace{v- \int_{\tilde{x}\in\hat{\mathcal{X}},\tilde{x}>{x}}\tilde{x}\big(1-q_k(\tilde{x}|y)\big)d\tilde{x}}_{\text{reading utility}},
\end{equation}
which is a generalization of equation (\ref{eq:utility_fcn}).

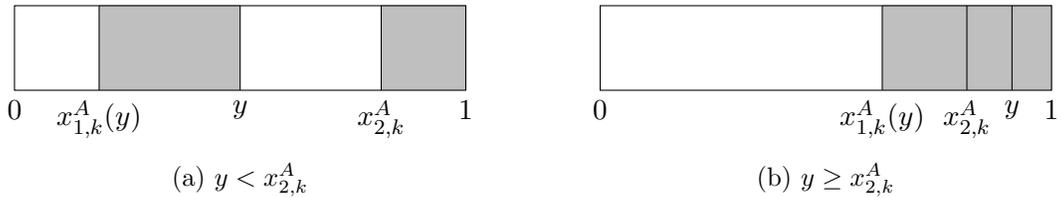
\begin{figure}[h]
\caption{User base of an ad-supported platform with  imperfect content moderation technology}\label{fig:config_k_ad}
\centering
  \begin{subfigure}[b]{0.45\columnwidth}
  \centering
    \begin{tikzpicture}[scale=0.75]
\path [fill=lightgray] (8,0) -- (6.5,0) -- (6.5,1.5) -- (8,1.5);
\path [fill=lightgray] (4,0) -- (1.5,0) -- (1.5,1.5) -- (4,1.5);

\draw (0,0) node [below] {$0$} -- (8,0) node [below] {$1$} -- (8,1.5) -- (0,1.5)  --cycle;
\draw (4,0) node [below] {$y$} -- (4,1.5);
\draw (1.5,0) node [below] {$x^A_{1,k}(y)$} -- (1.5,1.5);
\draw (6.5,0) node [below] {$x^A_{2,k}$} -- (6.5,1.5);
\end{tikzpicture}
    \caption{$y<x_{2,k}^A$}
    \label{fig:config_k_ad:1}
  \end{subfigure}
  \begin{subfigure}[b]{0.45\columnwidth}
  \centering
   \begin{tikzpicture}[scale=0.75]
\path [fill=lightgray] (8,0) -- (5,0) -- (5,1.5) -- (8,1.5);

\draw (0,0) node [below] {$0$} -- (8,0) node [below] {$1$} -- (8,1.5) -- (0,1.5)  --cycle;
\draw (7.3,0) node [below] {$y$} -- (7.3,1.5);
\draw (5,0) node [below] {$x^A_{1,k}(y)$} -- (5,1.5);
\draw (6.5,0) node [below] {$x^A_{2,k}$} -- (6.5,1.5);
\end{tikzpicture}
    \caption{$y\geq x_{2,k}^A$}
    \label{fig:config_k_ad:2}
  \end{subfigure}
\end{figure}

As we show in Appendix \ref{appdx:impeft_ai_ad}, whenever a platform conducts content moderation, the users on the platform fall into one of the two configurations illustrated in Figure \ref{fig:config_k_ad}. In Figure \ref{fig:config_k_ad:1}, content moderation creates two disjoint segments of users and in Figure \ref{fig:config_k_ad:2}, we have a contiguous user segment, all dependent on the extent of content moderation $y$. In this figure, the variables $x_{2,k}^A$ and $x_{1,k}^A(y)$ are respectively given as:
\begin{equation}
x^A_{2,k} = 
    \begin{cases}
    \sqrt{\alpha^2+1+\frac{2c(1+2k)-4v}{1-2k}}-\alpha<1 & \text{if $k\leq\overline{k}$;} \\
    1 & \text{if $k>\overline{k}$,} 
  \end{cases}
\end{equation}
where $\overline{k}=\frac{\alpha+2v-c}{2(\alpha+c)}$, and 
\begin{equation}
 x^A_{1,k}(y) = 
    \begin{cases}
    \sqrt{\alpha^2+\max\big\{0,y^2+\min\{2\alpha y,\frac{(1-2k)(2c+1-(x^A_{2,k})^2)-4v}{1+2k}\}\big\}}-\alpha & \text{if $y<x^A_{2,k}$;} \\
    \sqrt{\alpha^2+\max\big\{0,y^2+\min\{2\alpha y,\frac{(1-2k)(2c+1-y^2)-4v}{1+2k}\}\big\}}-\alpha & \text{if $y\geq x^A_{2,k}$.}
  \end{cases}
\end{equation}
Furthermore, $x_{2,k}^A$ is increasing in $k$ and $x_{1,k}^A(y)$ is decreasing in $k$.

Interestingly,  if the technology is not sufficiently good ($k<\overline{k}$), then $x^A_{2,k}<1$, which means  the most extreme users in $[x^A_{2,k},1]$ will stay on the platform regardless of how the platform conducts its content moderation, as extreme users always derive the highest utility from the platform.
In other words, inaccurate technology can no longer screen out the most extreme users, a fact about imperfect technology that a platform under advertising can benefit from, as we will see soon.

Similar to advertising case, when a platform adopts subscription, we can show in Appendix \ref{appdx:impeft_ai_subsc} that we have similarly well-defined user configurations as in Figure \ref{fig:config_k_ad}. As in Section \ref{sec:subsc_model}, the platform once again chooses its moderation strategy and subscription price. We refer readers to Appendix \ref{appdx:impeft_ai_subsc} for detailed analysis of this case. 

With imperfect technology, it is no longer the case that a platform can remove extreme content with perfect accuracy. For that reason, a platform's strategy in content moderation will always involve pruning the extreme content it intends to (all content at $x>y$) and also the moderate content it does not (all content at $x\leq y$), and depending on the technology it uses (for a given $k$), it may prune the extreme content more than the moderate content and vice versa.
Then, two related questions arise. First, to maximize its profit, should a platform always prune the extreme content more than the moderate content? Second, from a user's perspective, when a platform prunes the moderate content more than the extreme content, does the platform always have a high average extremeness index\footnote{The average extremeness index on the platform can be calculated as $\overline{x}=\frac{\int_{\mathcal{X}}x(1-q(x))dx}{\int_{\mathcal{X}}(1-q(x))dx}$\label{foot:average_ext},
where the numerator is the weighted sum of the location index (weighted by the probability of not being removed), which represents the ``total'' extremeness  of all remaining content, and the denominator is the expected number of posts remaining after content moderation.}? The following proposition summarizes a platform's optimal content moderation strategy.

\begin{proposition}\label{prop:cm w imperfect ai}
{\bf \em (Content moderation with imperfect technology)} 
For both advertising and subscription, a platform will conduct content moderation only if technology is sufficiently accurate. When conducting content moderation, the platform may prune the moderate content more than the extreme content, but the average extremeness index on the platform may be lower than when it prunes the extreme content more or does not prune any content.
\end{proposition}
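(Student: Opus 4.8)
The plan is to split the statement into two essentially independent assertions and treat them separately. The first — that a platform moderates only above an accuracy threshold — I would prove by a continuity-and-endpoints argument comparing the best profit achievable when the technology is deployed against the no-moderation benchmark from Section~\ref{sec: model}. The second — the ``may'' claims about pruning composition and average extremeness — I would establish by exhibiting an explicit regime (a concrete triple $(\alpha,v,c)$ together with a $k$) in which the optimum has the asserted features, since the word \emph{may} requires only existence.

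For the threshold, fix a revenue model and let $\pi^*(k)=\max_y \pi(y,k)$ (with the price $p$ also optimized in the subscription case) be the optimal profit when a technology of accuracy $k$ is deployed, and let $\pi_N$ be the profit when the technology is not deployed ($q\equiv 0$), computed exactly as in Propositions~\ref{prop:ad_eq} and~\ref{prop:subsc_eq}. Set $\Delta(k)=\pi^*(k)-\pi_N$. At $k=0$ the operator satisfies $q_0(x\mid y)=\tfrac12$ for every $x$, independently of $y$, so deployment cannot remove extreme content selectively: it is composition-neutral proportional thinning that nonetheless degrades each user's posting payoff from $\alpha x$ to $\tfrac12(\alpha x - c)$ and imposes the expected expression cost $\tfrac12 c$ with $c>v$. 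A short computation then confirms $\Delta(0)<0$, since the uniform thinning yields no selective benefit while the cost $c>v$ strictly lowers participation. Because the configuration boundaries $x^A_{1,k}(y)$ and $x^A_{2,k}$ are continuous in $k$, $\pi(y,k)$ is continuous in $k$, and maximizing over the compact set $y\in[0,1]$ preserves continuity; hence $\Delta$ is continuous and $\Delta<0$ on some interval $[0,\underline{k})$. At the other end, the maintained assumption $\alpha<\alpha^S$ with Propositions~\ref{prop:ad_eq} and~\ref{prop:subsc_eq} gives $\Delta(\tfrac12)>0$. Defining $\underline{k}=\inf\{k:\Delta(k)>0\}$ yields $\underline{k}\in(0,\tfrac12)$ and shows moderation is conducted only if $k\geq\underline{k}$, for both models. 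I would stress that this argument deliberately avoids assuming $\pi^*(k)$ is monotone in $k$, since monotonicity fails under advertising (that non-monotonicity being a separate headline result).

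For the composition claim I would work in the disjoint configuration of Figure~\ref{fig:config_k_ad:1}, valid when $y<x^A_{2,k}$, which requires $k<\overline{k}$. There the participating users are $[x^A_{1,k}(y),y]\cup[x^A_{2,k},1]$, so the total mass of moderate content removed is $(\tfrac12-k)\,(y-x^A_{1,k}(y))$ while that of extreme content removed is $(\tfrac12+k)\,(1-x^A_{2,k})$. Since $x^A_{2,k}\uparrow 1$ as $k\uparrow\overline{k}$, the extreme mass vanishes near $\overline{k}$, so for $k$ just below $\overline{k}$ (and above $\underline{k}$, so that moderation is in fact optimal) the first quantity exceeds the second: the platform prunes more moderate than extreme content in aggregate, even though each extreme post is removed with the higher probability $\tfrac12+k$. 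I would pin this down with the concrete $(\alpha,v,c)$, solving for the optimal $y^*$ and checking that it lands in this disjoint regime with the inequality holding.

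Finally, for the average-extremeness claim I would use the footnote~\ref{foot:average_ext} formula and note that the surviving-content weights are $\tfrac12+k$ on the moderate block and $\tfrac12-k$ on the extreme block, so each extreme post survives only with the small probability $\tfrac12-k$ and the weighted average $\overline{x}$ is pulled toward the moderate region; in the same example I would compute $\overline{x}$ and compare it with the no-moderation value $\overline{x}_{NM}=\tfrac12(1+x^{NM})$, where $x^{NM}=-\alpha+\sqrt{\alpha^2+1-2v}$, and with a higher-accuracy configuration in which extreme pruning dominates, verifying that $\overline{x}$ is the smallest of the three. The main obstacle throughout is the optimization itself: the objective is piecewise across the disjoint and contiguous configurations and the boundaries $x^A_{1,k},x^A_{2,k}$ carry nested $\min$/$\max$ terms, so confirming that the optimum genuinely sits in the intended regime — and then that the extremeness inequalities hold there — is where the real work lies, and it cannot be shortcut by a monotonicity argument since none is available under advertising.
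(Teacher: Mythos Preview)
Your Part~(i) argument is essentially the paper's: show $\Delta(0)<0$ by noting that at $k=0$ the pruning is uniform and hence pure damage (the paper computes the resulting user base explicitly and bounds it below the no-moderation base via $2v-c<v$), invoke $\Delta(\tfrac12)>0$ from $\alpha<\alpha^S$, and conclude by continuity. Both the paper and you stop at the existence of a neighborhood of $0$ on which moderation is suboptimal; neither claims a single crossing.

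For Part~(ii) your approach is actually cleaner than the paper's. The paper simply fixes $(\alpha,v,c)=(0.05,0.2,0.25)$, plots $M_{1,k}$, $M_{2,k}$, and $\overline{x}_k$ against $k$, and reads the claims off the picture (for subscription it relies on an exhaustive grid search). Your structural observation that $x^A_{2,k}\uparrow 1$ as $k\uparrow\overline{k}$, so the extreme-pruned mass $(\tfrac12+k)(1-x^A_{2,k})$ vanishes while $(\tfrac12-k)(y-x^A_{1,k})>0$ persists, supplies a mechanism rather than just a numerical fact. You can sharpen it further: for $k\in[\overline{k},\tfrac12)$ one has $x^A_{2,k}=1$ exactly, hence $M_{1,k}=0<M_{2,k}$ whenever moderation is conducted with $y^{A*}<1$. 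This dispenses with the limiting argument and the need to land in the disjoint configuration; it only requires that moderation is optimal somewhere in $[\overline{k},\tfrac12)$ with $y^{A*}<1$, which follows from $\alpha<\alpha^S<\alpha^A$ and continuity of the optimum near $k=\tfrac12$.

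One correction: you write that you would compare with ``a higher-accuracy configuration in which extreme pruning dominates.'' The direction is reversed. As $k$ rises, $x^A_{2,k}$ rises toward $1$ and $M_{1,k}\to 0$, so extreme pruning dominates at \emph{lower} accuracy, not higher. In the paper's example the extreme-dominant comparison point is $k\approx 0.1$ and the moderate-dominant point is $k\approx 0.3$. For the average-extremeness comparison you therefore want your moderate-dominant $k$ (high) and compare $\overline{x}$ there against $\overline{x}$ at a \emph{lower} $k$ where $M_{1,k}>M_{2,k}$, and against the no-moderation index. Aside from this directional slip your plan is sound; the verification that the optimum sits in the intended regime is, as you say, where the work is, and the paper does that part entirely numerically.
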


The proof of Proposition \ref{prop:cm w imperfect ai} can be found on pp. \pageref{pf:prop4}-\pageref{pf:prop5} in Appendix \ref{appdx:pfs}. Proposition \ref{prop:cm w imperfect ai} first suggests that a platform needs a sufficiently accurate technology to start  content moderation. Secondly, when technology is sufficiently good,
the optimal content moderation strategy may call for a platform
to prune the moderate content more than the extreme content. This is because a sufficiently accurate technology already deters extreme users from participating the platform, so there is little extreme content on the platform in the first place.
Finally, whether or not a platform prunes extreme content more than the moderate content is not a good yardstick to judge whether a platform is extreme or moderate. This means that a user looking to join a platform may not find a moderate outlet even if the outlet is pruning a lot of extreme content. This may be because there are many extreme users on the platform in the first place. 

However, the case where pruning the moderate content may lead to a low extremeness index deserves a closer look. The optimal content moderation strategy calls for pruning the moderate content more than the extreme content when there is little extreme content on the platform in the first place. Then the question is, why prune moderate content if there is little extreme content on the platform? The reason is strategic. With a blunt instrument, or imperfect technology, pruning the moderate content is the collateral damage to pruning the extreme content, or the price a platform pays to reduce extreme content. Thus, it may be necessary for a platform to prune only the moderate content in order to deter extreme users from ever getting onto the platform. 

Proposition \ref{prop:cm w imperfect ai} suggests three managerial as well as policy insights about content moderation. First, no one should be alarmed about a platform pruning moderate content or not eliminating extreme content, and it is part of a platform's optimal strategy when technology is imperfect. For this reason, we may see more social media executives blaming technology. Second, as technology improves, a platform's optimal strategy will become increasingly controversial and increasingly harder to justify. This is because when technology is sufficiently accurate, the optimal strategy calls for the platform to prune the moderate content more than the extreme content.  Finally, the content moderation strategy by a platform and the diligence with which it is pruning the extreme content  may not tell the full story about how extreme the content may be on the platform. To tell the full story, one will have to also consider the technology used and the preferences of the user base.

\subsection{Content Moderation  and Incentive for Technology Improvement}

As technology improves, a platform's strategy in content moderation will also change. In this regard, our model of imperfect technology allows us to shed light on two related questions. First, will a platform impose a more strict standard for content moderation when technology improves? Second, as the content moderation strategy of a platform also affects its profitability, does a platform actually have an incentive to improve the technology? The following two propositions suggest some nuanced answers to these two questions.

\begin{proposition}\label{prop:k_and_ystar}
{\bf \em (Better technology and less content moderation)} 
When technology is  sufficiently accurate (a sufficiently large $k$), a platform under either advertising or subscription will adopt a more relaxed standard for content moderation as technology further improves. As a result, the average extremeness index increases.
\end{proposition}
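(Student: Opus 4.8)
The plan is to treat the two revenue models separately, in each case computing the optimal content-moderation threshold $y^*$ as a function of the technology accuracy $k$ and showing that, once $k$ is large enough, $\frac{d y^*}{d k}>0$; the claim about the average extremeness index then follows by relating $\bar{x}$ monotonically to $y$. First I would exploit the structure already established in the previous section: by Proposition \ref{prop:cm w imperfect ai}, when $k$ is sufficiently large the platform is in the regime where content moderation is conducted, and (as the text notes) a sufficiently accurate technology deters the most extreme users, so $x^A_{2,k}=1$ (i.e.\ $k>\overline{k}$) and the user base is contiguous as in Figure \ref{fig:config_k_ad:2}. In that regime the relevant lower boundary is $x^A_{1,k}(y)$ with the $y\geq x^A_{2,k}$ branch, and the advertising profit is $\pi^A=\zeta\,(\text{expected mass of remaining users})$, which I would write explicitly using $q_k$ from \eqref{eq:qk}. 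The subscription case runs in parallel with the extra price variable $p$, using the configuration and first-order conditions developed in Appendix \ref{appdx:impeft_ai_subsc}.

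The core of the argument is a comparative-statics computation. For advertising I would write the first-order condition $\frac{\partial \pi^A}{\partial y}=0$ defining $y^{A*}(k)$ implicitly, and apply the implicit function theorem to obtain $\frac{d y^{A*}}{d k}=-\frac{\partial^2\pi^A/\partial y\partial k}{\partial^2\pi^A/\partial y^2}$. The denominator is negative by the second-order condition (the interior optimum is a maximum), so the sign of $\frac{d y^{A*}}{d k}$ matches the sign of the cross-partial $\frac{\partial^2\pi^A}{\partial y\,\partial k}$. The task reduces to signing this cross-partial for $k$ near $\frac{1}{2}$ (or, more precisely, for all $k$ above some threshold). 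I would compute it from the closed form of $x^A_{1,k}(y)$, differentiating the $\max$/$\min$ expression on the active branch; the stated monotonicities ($x^A_{1,k}$ decreasing in $k$, $x^A_{2,k}$ increasing in $k$) should pin down the relevant signs. For subscription the same IFT scheme applies to the pair of first-order conditions in $(y,p)$, where I would invoke the negative-definiteness of the Hessian (guaranteed by the second-order condition already cited in the discussion after Proposition \ref{prop:subsc_eq}) and sign the $k$-derivative of the gradient.

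Finally, for the average-extremeness conclusion I would use the definition of $\bar{x}$ in footnote \ref{foot:average_ext}, $\bar{x}=\frac{\int_{\mathcal{X}}x\,(1-q_k(x|y))\,dx}{\int_{\mathcal{X}}(1-q_k(x|y))\,dx}$, and argue that on the high-$k$ contiguous-support regime a larger $y$ raises $\bar{x}$: relaxing moderation both widens the support toward more extreme indices and reweights mass toward higher $x$ (since $1-q_k$ jumps up at $y$). Combining $\frac{d y^*}{d k}>0$ with $\frac{d\bar{x}}{d y}>0$ gives $\frac{d\bar{x}}{d k}>0$ via the chain rule, accounting also for the direct dependence of $\bar{x}$ on $k$ through $q_k$; I would check that this direct channel does not overturn the sign. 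The main obstacle I anticipate is signing the cross-partial $\frac{\partial^2\pi}{\partial y\,\partial k}$ cleanly: because $x^A_{1,k}(y)$ is defined through nested $\max$/$\min$ operators, I must first verify which branch is active at the optimum for large $k$ and confine all differentiation to that branch, and then control the competing effects (better technology removes more extreme content for fixed $y$, but also changes the marginal value of moving $y$). Establishing that the ``relaxation'' effect dominates precisely when $k$ is large is where the real work lies, and I expect the threshold on $k$ in the statement to be exactly what makes that domination hold.
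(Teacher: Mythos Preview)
Your plan has a structural problem in the advertising case. You propose to characterize $y^{A*}(k)$ via the interior first-order condition $\partial\pi^A/\partial y=0$ and then apply the implicit function theorem. But in the high-$k$ regime the optimum is \emph{not} an interior critical point of $\pi^A$: as shown in Lemma~\ref{lem:cm_k} (and its proof), for $k>\overline{k}$ the profit is strictly increasing in $y$ on $[0,\hat{y}_k^A]$ (where $x^A_{1,k}(y)=0$, so $\pi^A=\zeta y$) and strictly decreasing on $[\hat{y}_k^A,1]$ (since $\alpha<\alpha^A$ makes $L(k)<y$). The optimizer therefore sits at the kink $y^{A*}_k=\hat{y}_k^A$, where the left and right derivatives of $\pi^A$ have opposite signs; there is no equation $\partial\pi^A/\partial y=0$ to differentiate. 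The paper instead uses the explicit closed form $\hat{y}_k^A=\sqrt{(4v-(1-2k)2c)/(1+2k)}$ from Lemma~\ref{lem:cm_k} and differentiates it directly in $k$, obtaining $\partial(\cdot)/\partial k=8(c-v)/(1+2k)^2>0$ by the standing assumption $c>v$. Your branch-checking step would eventually reveal this kink structure, but the IFT-on-FOC machinery you describe is the wrong tool here; the right implicit relation is $x^A_{1,k}(y)=0$ (equivalently $U(0)=0$), not $\partial_y\pi^A=0$.

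Two smaller points. First, once you know the user base is $[0,y^{A*}_k]$ with uniform survival weight $(\tfrac12+k)$, the average extremeness is exactly $\overline{x}=y^{A*}_k/2$: the $k$-dependence in $q_k$ cancels from numerator and denominator, so your worry about the ``direct channel'' overturning the sign is moot. Second, for subscription the paper does not carry out the analytical IFT-on-$(y,p)$ program you sketch; it establishes the monotonicity numerically by exhausting the parameter grid (Appendix~\ref{appdx:numerical}). Your analytic plan is more ambitious, but given the kink issue reappears there (Lemma~\ref{lem:imperfectAI_y_eq} again yields boundary optima $\hat{y}_k^S(p)$ or $x_{2,k}^S(p)$, not interior FOCs in $y$), you would face the same obstacle before even reaching the Hessian argument.
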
 

The proof of Proposition \ref{prop:k_and_ystar} is on p. \pageref{pf:prop5} in Appendix \ref{appdx:pfs}. Intuitively, as technology improves, the platform can prune extreme content more accurately to keep moderate marginal users happy so that it does not need to prune as much. In addition, by pruning the extreme content less, the platform can increase its customer base to increase its profit when it is under advertising, and keep more of its high willingness-to-pay users on the platform when it is under subscription.

\begin{proposition}\label{prop:choice_of_k}
{\bf \em (Incentive for imperfect content moderation technology)} 
Under advertising, the platform may choose imperfect technology even if there is no cost involved in improving the technology when the cost to users subject to pruning ($c$) is small. Under subscription, the platform always chooses a perfect technology.
\end{proposition}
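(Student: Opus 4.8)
**The plan is to prove each revenue model's incentive separately by computing how equilibrium profit varies with the technology accuracy $k$, and showing the advertising platform's profit can be non-monotonic in $k$ (so an interior or corner $k<\frac{1}{2}$ is optimal) while the subscription platform's profit is always increasing in $k$.**

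For the subscription case I would argue directly from a revealed-preference / envelope argument. Fix any $k<\frac{1}{2}$ and let $(y,p)$ be the platform's optimal moderation and price. The key observation is that a more accurate technology (larger $k$) weakly relaxes the platform's feasible set of outcomes: by increasing $k$ the platform prunes extreme content more effectively and moderate content less often (recall from the statement preceding Proposition~\ref{prop:cm w imperfect ai} that $x^A_{2,k}$ is increasing and $x^A_{1,k}(y)$ is decreasing in $k$, and the analogous monotonicity holds in the subscription configuration of Appendix~\ref{appdx:impeft_ai_subsc}). Holding $p$ fixed, a higher $k$ weakly raises each retained user's expected utility $U(x)$ in \eqref{eq:utility_fcn_2}, because the reading-disutility integral $\int \tilde{x}(1-q_k(\tilde{x}|y))d\tilde{x}$ shrinks as extreme content is removed more reliably and the posting term $\alpha x(1-q_k)-cq_k$ improves for retained moderate users. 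Hence the mass of users willing to pay $p$ weakly increases, so $\pi^S(k)\ge p\cdot(\text{mass at old }(y,p))$ and therefore $\pi^{S*}(k)$ is weakly increasing in $k$. Since $k=\frac12$ is feasible and delivers the perfect-technology optimum $(\tfrac{2v}{3})^{3/2}$ from Proposition~\ref{prop:subsc_eq}, the subscription platform always (weakly) prefers perfect technology; I would then check the inequality is strict to rule out indifference.

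For the advertising case the claim is the opposite, so a clean monotonicity argument cannot hold, and I would instead exhibit the benefit of imperfect technology explicitly. The mechanism, already flagged in the text, is that when $k<\overline{k}=\frac{\alpha+2v-c}{2(\alpha+c)}$ we have $x^A_{2,k}<1$, so the most extreme users in $[x^A_{2,k},1]$ stay on the platform regardless of moderation. Under advertising the platform is paid $\zeta$ per user, so this ``captured'' extreme segment of mass $1-x^A_{2,k}$ is pure revenue that a more accurate technology would destroy by screening those users out. The plan is to write $\pi^{A*}(k)=\zeta\big[(y^*-x^A_{1,k}(y^*)) + (1-x^A_{2,k})\big]$ in the two-segment regime, substitute the closed forms for $x^A_{2,k}$ and $\overline{k}$, and show $\frac{d\pi^{A*}}{dk}<0$ for small $c$: as $c\to v^+$ the extreme segment $1-x^A_{2,k}$ is large and highly sensitive to $k$ (since increasing $k$ pushes $x^A_{2,k}$ up toward $1$ and erases these users), and this loss dominates any gain from more accurately serving the moderate segment. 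Establishing the smallness threshold on $c$ amounts to comparing the marginal revenue loss from the shrinking captured segment against the marginal gain in the moderate segment at $k$, and showing the former wins near $c=v$.

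\textbf{The main obstacle} is the advertising computation: unlike subscription, there is no envelope shortcut because a better technology simultaneously helps (serving moderate users more cleanly) and hurts (expelling the profitable captive extreme users), so I must actually sign $\frac{d\pi^{A*}}{dk}$ through the piecewise expressions for $x^A_{1,k}$ and $x^A_{2,k}$, including handling the $\min$/$\max$ branches and the regime boundary at $k=\overline{k}$ where the captive segment disappears. The cleanest route is likely to evaluate the derivative at the corner $k=\overline{k}$ (where $x^A_{2,k}=1$) and show that for $c$ small the platform strictly prefers moving to a smaller $k$, i.e.\ $\frac{d\pi^{A*}}{dk}\big|_{k=\overline{k}^-}<0$, which suffices to prove imperfect technology is optimal without needing the global shape of $\pi^{A*}(k)$. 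I would present the subscription half first (short, via the monotonicity argument) and then devote the bulk of the proof to this advertising derivative sign-check.
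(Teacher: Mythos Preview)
Your approach is quite different from the paper's: the paper proves both halves of this proposition by \emph{numerical exhaustion} over the parameter grid $(\alpha,v,c,k)$ (see the proof on p.~\pageref{pf:prop6} and Appendix~\ref{appdx:numerical}), not by analytical derivative or envelope arguments. So a successful analytical proof would actually be sharper than what the paper offers.

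That said, your subscription argument has a concrete gap. You claim that holding $(y,p)$ fixed and raising $k$ ``weakly raises each retained user's expected utility,'' and you justify this only for retained \emph{moderate} users. But in the two-segment configuration of Lemma~\ref{lem:config_imperfectAI_subsc} there are also retained \emph{extreme} users $x\in[x_{2,k}^S(p),1]$ with $x>y$, for whom $q_k(x|y)=\tfrac{1}{2}+k$. Their posting term $\alpha x(\tfrac{1}{2}-k)-c(\tfrac{1}{2}+k)$ is strictly \emph{decreasing} in $k$, and at $x=1$ one checks $\partial U/\partial k=-(\alpha+c)<0$, so the most extreme user's utility falls; this is exactly why $x_{2,k}^S$ is increasing in $k$. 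Hence the extreme segment shrinks, and the total mass willing to pay $p$ at the old $(y,p)$ need not increase. The revealed-preference step ``mass weakly increases $\Rightarrow$ profit weakly increases'' therefore does not go through as written. To rescue it you would have to argue that the platform at higher $k$ can redesign $(y,p)$ so that the gain on the moderate side (higher willingness to pay, larger moderate segment) more than offsets the lost extreme segment---but that is precisely the trade-off the paper declines to sign analytically and instead verifies by grid search.

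Your advertising plan is in better shape conceptually: the captive-segment mechanism you isolate is exactly what drives the inverted-U in Figure~\ref{fig:pi_k_ad}, and evaluating $\frac{d\pi^{A*}}{dk}$ at $k=\overline{k}^-$ with $c\to v^+$ is a sensible way to exhibit the non-monotonicity without characterizing the global shape. Just be aware that the paper does not carry out this derivative either; it confirms the claim numerically. If you pursue it, note from Lemma~\ref{lem:cm_k} that $y_k^{A*}$ itself jumps between the branches $\hat{y}_k^A$ and $x_{2,k}^A$ as $k$ varies, so the derivative of $\pi^{A*}(k)$ must be taken through the envelope over $y$, not at a fixed $y$.
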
 

\begin{figure}[h]
\caption{Platform profit and technology accuracy ($v=0.25, \alpha=0.2, c=0.3$)}\label{fig:pi_k}
\centering
  \begin{subfigure}[b]{0.45\columnwidth}
  \centering
   \includegraphics[width=\textwidth]{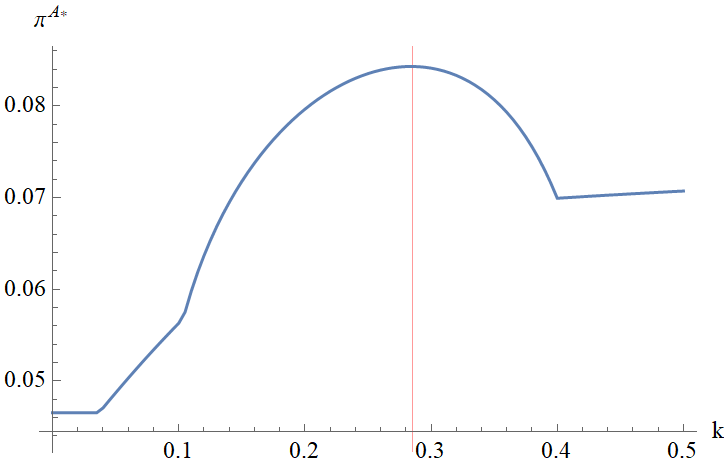}
    \caption{Advertising ($\zeta=0.1$)}
    \label{fig:pi_k_ad}
  \end{subfigure}
  \begin{subfigure}[b]{0.45\columnwidth}
  \centering
   \includegraphics[width=\textwidth]{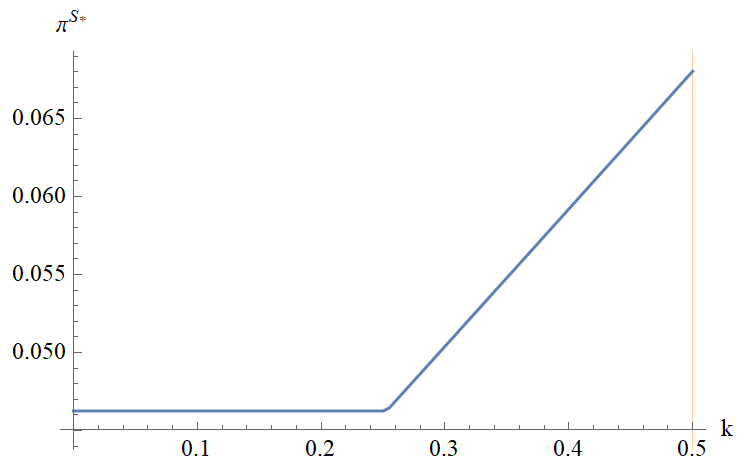}
    \caption{Subscription}
    \label{fig:pi_k_subsc}
  \end{subfigure}
\end{figure}

The proof of Proposition \ref{prop:choice_of_k} is on p. \pageref{pf:prop6} in Appendix \ref{appdx:pfs}, and the proposition is illustrated in Figure \ref{fig:pi_k}. A platform under advertising may not want to develop a perfect technology because its primary objective is to maximize its  customer base. When $c$ is small, the imperfect technology can benefit the platform because a less accurate technology benefits extreme users more than the loss it imposes on moderate users when technology is very accurate in the first place. Therefore, a less accurate technology will increase the number of extreme users more than it reduces the number of moderate users, thus increasing the installed customer base for the platform. However, too much extreme content on the platform will alienate moderate users. This effect becomes increasingly dominant when technology is at lower accuracy (smaller $k$). This explains why in Figure \ref{fig:pi_k_ad} we have an inverted U-shaped relationship between accuracy and platform profit under advertising. When $c$ is sufficiently large, the segment of moderate users is also sufficiently large relative to the segment of extreme users because the cost carries more weight for the extreme users as they have a higher probability of being pruned. In this case, a less accurate technology can still increase the segment of extreme users but it will impose unintended damage on the relatively large segment of moderate users. Therefore, the most effective way to increase the installed customer base is not to increase extreme users but to retain and expand moderate users by reducing the likelihood of unintended pruning, which is to increase the accuracy ($k$). This is why when $c$ is sufficiently large, even a platform under advertising will be motivated to pursue the perfect technology.

In the case of subscription, however,
maximizing a platform's customer base can no longer maximize the platform's profit as the platform has the subscription fee as the second instrument. With this second instrument, the platform can fully internalize the benefit of technology improvement. This shows in the fact that both the platform's customer base and optimal fee increase with technology improvement. Therefore, costs aside, the platform always has the incentive to pursue the perfect technology. 

Propositions \ref{prop:k_and_ystar} and \ref{prop:choice_of_k} offer two rather surprising perspectives on content moderation and technology. First, as content moderation technology becomes more accurate, one should not expect that a profit-maximizing platform will always do more content moderation and publish more moderate content. Second, a number of executives of large  social media platforms, including those of Facebook and Twitter \citep{SocialMediagiantswarnofAI,timelineforAIto}, often complain about the limits of technology in content detection. However, our analysis also suggests an intriguing possibility that a platform under advertising may not have the incentive to pursue a more accurate technology in the first place.   

These two perspectives suggest to a manager that conducting content moderation is not as simple as just reducing the extremeness index. In the pursuit of a better technology, the optimal strategy calls for a manager to relax the criteria for pruning and to increase the average index. Moreover,  the cost to users subject to pruning is an important parameter to watch. When that cost is low, imperfect technology is conducive to attracting a large installed customer base to the platform. When it is large, technology improvement is always a winning strategy. 

One testable hypothesis from Propositions \ref{prop:k_and_ystar} and \ref{prop:choice_of_k} is that when conducting content moderation, all else being equal, we will expect to observe that platforms under subscription have a better technology for content moderation than for those under advertising.



\section{Content Moderation and Policy Implications}\label{sec:policy}

Content moderation is a hotly debated issue that has many policy implications. Many questions are raised in this context. For instance, do platforms have sufficient incentives to conduct content moderation on their own relative to what is optimal for the society? When they do conduct moderation, are they doing too much or too little? Is the technology that is optimal for platforms also optimal for the society?
We can address all these questions with our model by investigating how a social planner will conduct content moderation to maximize social welfare. Our answers can then inform the ongoing debate on whether and how much the government  should get involved in regulating online content and how the regulatory effort may need to be nuanced with regard to platforms of different revenue models.  

To conduct our analysis, we note that the objective of a social planner in content moderation is to maximize social welfare, which is the sum of the utilities for all users for the platform, and the expression of the social welfare, denoted as $W(y)$, when technology is perfect is given by
\begin{equation}\label{eq:welfare}
    W(y)=\int_{x^{P}(y)}^y \left(\alpha  x+v-\int_{x}^y\tilde{x}d\tilde{x}\right)  dx,
\end{equation}
where $x^P(y)$ is the marginal user who is indifferent between participating in the platform and not. As we show 
in the proof of Proposition \ref{prop:social_alpha_y}, the social planner will conduct content moderation if and only if $\alpha<{\alpha^P}$, where at $\alpha^P$ the social planner is indifferent between conducting content moderation and not. When the social planner does conduct content moderation ($\alpha<{\alpha^P}$), the optimal content moderation strategy is given by $y^{P*}=\frac{1}{2}(\alpha+\sqrt{\alpha^2+4v})$. By comparing what the social planner does with what a platform under advertising or subscription does, we have the following proposition.

\begin{proposition}\label{prop:social_alpha_y}
{\bf \em (Social planner's content moderation strategy)}
All else being equal, a social planner is less likely to conduct content moderation than a platform under either advertising or subscription ($\alpha^P<\alpha^S<\alpha^A$). When it does, it adopts a more relaxed standard for content moderation than a platform under subscription, but a more strict one than a platform under advertising ($y^{S*}<y^{P*}<y^{A*}$).
\end{proposition}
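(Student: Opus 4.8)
The plan is to treat the social planner as an agent facing the same participation margin as the advertising platform but maximizing aggregate user utility instead of head count. First I would note that, since the planner levies no fee and a subscription fee is a pure transfer, the marginal participant is still the one with $U(x)=0$, exactly as under advertising; hence $x^{P}(y)=x^{A}(y)$, with the two regimes of (\ref{eq:x_lower_bar}). Whenever the planner moderates the relevant regime is the fully covered one ($x^{P}=0$), where the welfare in (\ref{eq:welfare}) integrates to $W(y)=\tfrac{\alpha y^{2}}{2}+vy-\tfrac{y^{3}}{3}$, so that $W'(y)=\alpha y+v-y^{2}$ and its positive root is precisely $y^{P*}=\tfrac12(\alpha+\sqrt{\alpha^{2}+4v})$. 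A one-line check shows $y^{P*}<\sqrt{2v}$ iff $\alpha<\tfrac12\alpha^{A}=\sqrt{v/2}$, with equality $y^{P*}=\sqrt{2v}$ at $\alpha=\tfrac12\alpha^{A}$; after ruling out the partially covered candidate I conclude $y^{P*}$ is the global welfare maximizer, valid in the fully covered regime exactly on $\alpha<\tfrac12\alpha^{A}$.

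Next I would pin down the moderation threshold $\alpha^{P}$. Writing $G_{P}(\alpha)=W(y^{P*})-W(1)$ for the planner's gain from moderation, where $W(1)$ uses the interior margin $x_{0}=x^{P}(1)=-\alpha+\sqrt{\alpha^{2}+1-2v}$, the envelope theorem gives $\tfrac{d}{d\alpha}W(y^{P*})=\tfrac12(y^{P*})^{2}$ and $\tfrac{d}{d\alpha}W(1)=\tfrac12(1-x_{0}^{2})$. Using $(y^{P*})^{2}\le 2v$ together with the identity $x_{0}^{2}=1-2v-2\alpha x_{0}$ coming from the definition of $x_{0}$, one gets $G_{P}'(\alpha)=\tfrac12\big((y^{P*})^{2}+x_{0}^{2}-1\big)\le -\alpha x_{0}<0$, so $G_{P}$ is strictly decreasing and $\alpha^{P}$ is its unique zero, which legitimizes ``moderate iff $\alpha<\alpha^{P}$.'' Evaluating at $\alpha=\tfrac12\alpha^{A}$, where $y^{P*}=\sqrt{2v}$, I would verify the finite inequality $W(\sqrt{2v})<W(1)$, i.e. $G_{P}(\tfrac12\alpha^{A})<0$; by monotonicity this yields $\alpha^{P}<\tfrac12\alpha^{A}$.

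Part~2 then falls out. Since $y^{P*}=\tfrac12(\alpha+\sqrt{\alpha^{2}+4v})$ is increasing in $\alpha$ and equals $\sqrt{v}>\sqrt{2v/3}=y^{S*}$ at $\alpha=0$, we have $y^{S*}<y^{P*}$ for every $\alpha\ge 0$; and $y^{S*}=\sqrt{2v/3}<\sqrt{2v}=y^{A*}$ trivially. The middle inequality $y^{P*}<y^{A*}$ is exactly $\alpha<\tfrac12\alpha^{A}$, which holds on the whole relevant range because $\alpha<\alpha^{P}<\tfrac12\alpha^{A}$ from the previous step. Chaining gives $y^{S*}<y^{P*}<y^{A*}$.

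The main obstacle is the threshold ordering $\alpha^{P}<\alpha^{S}$ of Part~1, since $\alpha^{S}<\alpha^{A}$ is already contained in Proposition~\ref{prop:subsc_eq}. The difficulty is that neither $\alpha^{P}$ nor $\alpha^{S}$ has a clean closed form and, as the numerics suggest, the two thresholds are close (both lie below $\tfrac12\alpha^{A}$), so no crude bound will separate them; in particular the bound $\alpha^{P}<\tfrac12\alpha^{A}$ does not suffice because $\alpha^{S}<\tfrac12\alpha^{A}$ as well. My plan is to show the planner strictly prefers not to moderate at $\alpha=\alpha^{S}$, i.e. $G_{P}(\alpha^{S})<0$, which by the monotonicity of $G_{P}$ forces $\alpha^{P}<\alpha^{S}$. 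Concretely I would exploit that at $\alpha^{S}$ the subscription platform is indifferent, with moderation profit $(2v/3)^{3/2}$ equal to its no-moderation profit, and that the subscription optimum under moderation shares the planner's full left coverage $[0,y^{S*}]$ with $p^{*}=v-\tfrac12(y^{S*})^{2}$; this lets me rewrite $W(y^{P*})$ and $W(1)$ in terms of the subscription quantities and bound their difference. Reconciling these two implicit indifference conditions is where the real work lies; everything else is bookkeeping.
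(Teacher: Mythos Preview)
Your approach is essentially the paper's: same identification $x^{P}(y)=x^{A}(y)$, same two-regime analysis of $W(y)$, same interior optimizer $y^{P*}=\tfrac12(\alpha+\sqrt{\alpha^{2}+4v})$ valid on $\alpha<\sqrt{v/2}=\tfrac12\alpha^{A}$, and the same endpoint argument giving $\alpha^{P}<\tfrac12\alpha^{A}$. Your envelope-theorem computation of $G_{P}'(\alpha)$ is actually cleaner than the paper's direct differentiation; both reduce to the same expression $\tfrac14\big(\alpha(5\alpha+\sqrt{\alpha^{2}+4v}-4\sqrt{\alpha^{2}+1-2v})-2v\big)$, but your route via $x_{0}^{2}=1-2v-2\alpha x_{0}$ and $(y^{P*})^{2}\le 2v$ makes the sign transparent. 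The chain $y^{S*}<y^{P*}<y^{A*}$ is handled identically.

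On the obstacle you flag: you are right that $\alpha^{P}<\alpha^{S}$ is the genuinely delicate step, and that neither threshold is explicit. But you should know that the paper does \emph{not} prove this analytically. After establishing $\alpha^{P}<\sqrt{v/2}$ it simply observes that $\alpha^{P}$, $\alpha^{S}$, $\alpha^{A}$ are each single-variable functions of $v$ and checks $\alpha^{P}<\alpha^{S}<\alpha^{A}$ numerically over $v\in(0,\tfrac12)$. So your proposed route through $G_{P}(\alpha^{S})<0$ would, if it works, be strictly stronger than what the paper provides; conversely, if you are content to match the paper's standard of proof, the ``real work'' you anticipate is not required.
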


The proof of Proposition \ref{prop:social_alpha_y} is on p. \pageref{pf:prop7} in  Appendix \ref{appdx:pfs}. This proposition suggests three insights about the content moderation strategy in  a decentralized market. First, left to market forces, a platform with the profit motivation has even more incentives to engage in content moderation than a social planner. This is because content moderation is less effective at increasing social welfare than at increasing a platform's profitability. Second, more incentives for a platform do not mean right incentives. To maximize the welfare, the social planner only prunes users with a negative utility contribution to the society.\label{check_u_contrib}\footnote{This is checked in Appendix on page \pageref{app:check_u_contrib}.} A user's utility contribution to the society includes her posting utility, reading utility, and the total negative utility her post imposes on other more moderate users. A platform under advertising will keep users with negative utility contribution all for the purpose of maximizing its installed customer base. A platform under subscription will prune users even with positive utility contribution to increase the willingness-to-pay of other users. Third, because of the different incentives that platforms with different revenue models have in content moderation and also because of the severity with which different platforms are motivated to prune content for their own profitability, any regulatory measures may need to account for the difference in platforms' revenue models. In other words, sweeping regulations for all social media platforms regardless of their revenue models could be ill-advised.

Indeed, revenue models also provide different incentives for a platform to perfect its technology. A natural question arises: what technology a social planner would prefer for content moderation, perfect or imperfect? The following proposition addresses this question.

\begin{proposition}\label{prop:social_optimal_k}
{\bf \em (Social planner's technology preference) }
When a social planner conducts content moderation, it always prefers a better technology (higher accuracy $k$) such that, cost aside, it always pursues the perfect technology ($k=\frac{1}{2}$).
\end{proposition}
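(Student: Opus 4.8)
The plan is to cast the social planner's problem as a maximization over the moderation threshold $y$ for each fixed accuracy $k$, and to show that the resulting value is nondecreasing in $k$. First I would write the welfare under imperfect technology as $W_k(y)=\int_{\mathcal{X}}U(x)\,dx$, with $U(x)$ the generalized utility in equation (\ref{eq:utility_fcn_2}) and $\mathcal{X}$ the participation set. Reusing the configuration analysis behind Figure \ref{fig:config_k_ad}, $\mathcal{X}$ consists of a moderate interval $[x_1,y]$ and, when technology is coarse, a disjoint extreme interval $[x_2,1]$, and the marginal users at every endpoint satisfy $U=0$. Defining $W^{P*}(k)=\max_y W_k(y)$ with maximizer $y^{P*}(k)$, the target is $\frac{d}{dk}W^{P*}(k)\ge 0$, which simultaneously delivers the monotonicity claim and the conclusion that, cost aside, $k=\tfrac12$ is optimal.

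Second, I would invoke the envelope theorem. Since each participation boundary ($x_1$ and $x_2$) is pinned by $U=0$, the Leibniz boundary terms vanish and $\frac{d}{dk}W^{P*}(k)=\left.\frac{\partial}{\partial k}W_k(y)\right|_{y=y^{P*}(k)}$. Differentiating under the integral with $\partial_k q_k=-1$ for $x\le y$ and $\partial_k q_k=+1$ for $x>y$, and exchanging the order of integration in the reading double integral so that content at $\tilde{x}$ is charged against the mass of participating readers below it, the terms collect into
\[
\frac{\partial}{\partial k}W_k(y)=\int_{[x_1,y]} g(x)\,dx-\int_{[x_2,1]} g(x)\,dx,\qquad g(x)\equiv \alpha x+c-x\,M(x),
\]
where $M(x)$ is the measure of participating users with index below $x$. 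Here $g(x)$ is precisely the marginal social value of \emph{retaining} rather than pruning content at $x$: the poster's gain $\alpha x$ plus the avoided pruning cost $c$, net of the reading harm $x\,M(x)$ it imposes on everyone below. Because raising $k$ retains more moderate and less extreme content, the sign of the derivative reduces to showing $g\ge 0$ on $[x_1,y]$ and $g\le 0$ on $[x_2,1]$.

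Third, I would sign $g$. On the moderate interval $M(x)=x-x_1$ is affine, so $g(x)=-x^2+(\alpha+x_1)x+c$ is a downward (concave) parabola with $g(0)=c>0$; the planner's first-order condition in $y$ yields $y\,M(y)=\alpha y+v$, hence $g(y)=c-v>0$ (cleanest at $k=\tfrac12$, with the analogous condition used for general $k$). Since a concave function attains its interval minimum at an endpoint, $g(x)\ge c-v>0$ throughout $[x_1,y]$, so the first integral is positive. On the extreme interval the entire moderate segment lies below, making $M(x)$ large enough that $x\,M(x)>\alpha x+c$, i.e.\ $g<0$, so $-\int_{[x_2,1]}g\ge 0$. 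Both pieces are nonnegative, giving $\frac{\partial}{\partial k}W_k\ge 0$ at the optimum; consequently $W^{P*}$ is nondecreasing, the planner drives $k\to\tfrac12$, and at $k=\tfrac12$ the step-function retention profile implements exactly the first-best rule ``retain iff $g\ge0$,'' consistent with Proposition \ref{prop:social_alpha_y}.

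The main obstacle is controlling the extreme-participation region that appears only under coarse technology. Establishing $g<0$ on $[x_2,1]$ requires the characterization of $x_2$ (the most-extreme users who survive because coarse technology cannot screen them) together with a lower bound on the moderate mass beneath them; I would borrow the configuration bounds from Appendix \ref{appdx:impeft_ai_ad}. Note that for $k$ above the threshold at which $x_2=1$ this region is empty and only the positive moderate integral remains, so the difficulty is confined to small $k$. A secondary technical point is that $W^{P*}$ may kink in $k$ where the configuration switches between the two panels of Figure \ref{fig:config_k_ad} (where $x_2$ reaches $1$ or the extreme segment vanishes); there I would combine continuity of $W^{P*}$ with one-sided derivatives so that piecewise monotonicity still yields global monotonicity.
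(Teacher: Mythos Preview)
Your approach is genuinely different from the paper's: the paper does not prove Proposition~\ref{prop:social_optimal_k} analytically at all. Its ``proof'' is an exhaustive numerical grid search over $(\alpha,v,c)\in[0,1]\times[0,\tfrac12]\times[v,\alpha+2v]$ with step $0.05$ (see Appendix~\ref{appdx:numerical}), verifying that $W^*_k$ is weakly increasing in $k$ in every cell. So you are attempting something strictly more ambitious: an envelope-theorem argument that would replace the simulation with a clean analytical inequality. The decomposition via $g(x)=\alpha x+c-xM(x)$ is a nice reformulation, and your sign argument on the moderate interval is essentially right: using the planner's interior FOC in $y$ one gets $g(y^{P*})=\big(c-v+(\tfrac12-k)\tfrac{1-x_2^2}{2}\big)/(\tfrac12+k)>0$ for general $k$, and concavity plus $g(x_1)=\alpha x_1+c>0$ then gives $g>0$ on $[x_1,y]$.

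There is, however, a real gap on the extreme interval. Your claim that ``$M(x)$ is large enough that $xM(x)>\alpha x+c$'' on $[x_2,1]$ is precisely the thing to be shown, and it need not hold pointwise. At the left endpoint, $g(x_2)=\alpha x_2+c-x_2(y-x_1)$, which is positive whenever the moderate mass $y-x_1$ is smaller than $\alpha+c/x_2$; nothing in the configuration bounds of Appendix~\ref{appdx:impeft_ai_ad} rules this out for small $k$. So the step ``$-\int_{[x_2,1]}g\ge 0$'' is unsupported, and since the moderate integral does not obviously dominate it, the monotonicity conclusion does not follow from what you have written. A secondary point: your envelope formula omits the indirect channel through which the moderate users' reading disutility depends on $x_2(k)$; carrying this through yields an additional term $(\tfrac12-k)x_2\,\tfrac{\partial x_2}{\partial k}\,(y-x_1)\ge 0$. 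That term works in your favor and may be exactly what is needed to close the gap, but you would have to combine it explicitly with the extreme-segment integral rather than drop it. As written, the proposal is a promising outline whose hard step (controlling the extreme segment for coarse $k$) is asserted rather than proved; the paper itself did not attempt this and resorted to numerics.
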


The proof of Proposition \ref{prop:social_optimal_k} is  on p. \pageref{pf:prop8} in  Appendix \ref{appdx:pfs}. Comparing Proposition \ref{prop:social_optimal_k} with Proposition \ref{prop:choice_of_k}, we see that a platform under advertising does not always have the right incentives to perfect its technology, unless $c$ is sufficiently large. However, a platform under subscription does have the right incentive to develop the technology for content moderation, although the technology is applied in a way that is not socially optimal, as discussed in Proposition \ref{prop:social_alpha_y}.

\section{Conclusion} \label{sec: conclusion}

Content moderation on social media platforms is an important issue that has attracted increasing attention in the past few years from practitioners, scholars, social activists, policy makers, and regulators alike. At a high level, the issue concerns the freedom of expression, political discourse, personal liberty, civil society, and government regulations. At a more basic level, it is a platform's marketing decisions, like any other product or service company would do, on what revenue models to use, what content to allow or what ``product'' to design, and what kind of users to attract or to discourage, all for the purpose of achieving its highest revenues. In addressing this complex issue, it is quite understandable that experts with different objectives offer different perspectives as to whether platforms should do self regulation themselves, or a government intervention is needed to regulate social media content.  In this paper, we take a first step to unpack this complex issue and investigate how a self-interested social media platform may conduct content moderation, how its content moderation strategy may hinge on its revenue model and technology, and what incentives a platform with advertising or subscription as its primary revenue model may have in perfecting content moderation technology. This investigation not only offers normative insights about how a self-interested platform will or will not do content moderation, but also shed light on whether government interventions are needed and if they are, what those interventions may entail.

Our analysis shows that a self-interested platform does not need to care about any social cause to actively engage in content moderation. It can use 
content moderation as a tool to perform two marketing functions: to expand its user base and to increase the willingness-to-pay of the users on its platform.  These dual functions are rooted in the nature of social media  where users gain utilities from posting and reading user-generated content on a platform, but they are also sensitive to content more extreme than what they prefer. For a social planner who cares about social welfare, content moderation is a tool to eliminate users who make negative utility contributions to society.  In this regard, we show that self-interested platforms are more likely to use the dual functions and conduct content moderation than a social planner. In other words, platforms are more eager than a social planner to conduct content moderation motivated by their own self-interest.

Because a self-interested platform conducts content moderation for profit, the economics dictate that its strategy will depend on its revenue model and hence the resulting content on the platform, as measured by the extremeness index, will also depend on the same. We show that in the absence of any content moderation, all else being equal, a platform under subscription revenues will field more extreme content than a platform under advertising. However, when content moderation is conducted, a platform under subscription revenues will curate a more moderate content than  one under advertising. Casual observations seem to support this theoretical insight. Interestingly, the social planner will conduct content moderation to achieve a body of content that is more extreme than under subscription, but more moderate than under advertising.

For most social platforms, technology for content moderation is imperfect as many executives have readily admitted \citep{SocialMediagiantswarnofAI,timelineforAIto}. Our analysis shows that a platform's strategy in content moderation critically depends on the technology it uses. A platform may choose not to do any content moderation at all if its technology is not sufficiently accurate. When it is, a platform may conduct content moderation in an unexpected way. Under imperfect technology, a platform may  throw away the moderate content more than the extreme content as part of its optimal strategy. We show that when this happens, it does not necessarily result in a more extreme platform. Conversely, when a platform prunes the extreme content more than the moderate content, we do not necessarily have a more moderate platform. In other words, one cannot judge how extreme a platform is by looking at its content moderation strategy. This insight is especially germane to policy makers when they try to reduce hate content on a platform by focusing on the removal of hate content upon user complaints, such as what is currently practiced in EU \citep{EUhatespeechreg}.

It is common for social media executives to blame imperfect technology for some lapses in content moderation and those blames are well-placed as our analysis shows. However, our analysis also sheds some light on whether a self-interested platform actually has incentives to perfect its content moderation technology. A platform under advertising may not pursue the perfect technology, even if doing so is costless. We further show that a platform under subscription will pursue the perfect technology as does a social planner. 

Overall, our analysis shows that self-interested platforms are motivated to do content moderation, but their strategy diverges from a social planner's. In this sense, there can be grounds for government interventions. We show that such interventions can only be effective if they are differentiated and nuanced  according to revenue models and   technology levels that different platforms are adopting. 

As managerial insights, our analysis has articulated the marketing roles that content moderation plays in achieving a platform's profit objectives. It also prescribes the normative strategies that a platform can use in content moderation: what content to moderate for what purpose and what strategic adjustments to make with regard to revenue models and technology. Finally, platforms under subscription are well advised to invest in their technology for content moderation. 

Content moderation as a research topic is a target-rich area. We hope our research kindles some interest in this important and timely subject. Future research can take a number of directions. First, in our model, we do not pursue the consequence of unintended content moderation by a platform. One can imagine that there might exist activist users on the platform who may react to unjustifiable pruning. These user reactions may affect a platform's content moderation strategy as well as its incentive to develop a better technology. Second, in our model, we identify the difference between what a social planner will do with content moderation and what a self-interested platform will do, thus probing into the rationale for and approach toward any regulatory interventions. Future research can develop concrete regulatory measures that can induce platforms under advertising or subscription to conduct content moderation in alignment with a social planner. Lastly, many of our theoretical insights seem consistent with anecdotal evidence. Future research can empirically put them to a test.

\bigskip
\setstretch{1.2}
\bibliographystyle{apalike}
\bibliography{lit}

\newpage
\setstretch{1.4}
\newpage 
\setcounter{page}{1}
\singlespacing

\setcounter{section}{0}
\setcounter{subsection}{0}
\setcounter{figure}{0}
\setcounter{table}{0}
\setcounter{equation}{0}
\setcounter{proposition}{0}
\setcounter{lemma}{0}

\renewcommand{\thesection}{A.\arabic{section}}
\renewcommand{\thesubsection}{A.\arabic{section}.\arabic{subsection}}
\renewcommand{\thefigure}{A\arabic{figure}}
\renewcommand{\thetable}{A\arabic{table}}
\renewcommand{\theequation}{A\arabic{equation}}
\renewcommand{\thepage}{A\arabic{page}}
\renewcommand{\theproposition}{A.\arabic{proposition}}
\renewcommand{\thelemma}{A.\arabic{lemma}}

\section*{ONLINE APPENDIX}

\section{Proofs of Lemmas and Propositions in Main Text}\label{appdx:pfs}

\begin{proof}[\bf Proof of Lemma \ref{lem:config}]\label{pf:lem1}

When $x>y$, since $c>v$ by assumption, $U(x)=-c+v<0$ holds and users with $x>y$ do not  participate in the platform.

For users $x \leq y$, consider two users $x_1, x_2$ such that $x_1<x_2\leq y$. Then 
\begin{align*}
    U(x_1)-U(x_2)= & \alpha x_1-\int_{\tilde{x}\in\hat{\mathcal{X}},x_1<\tilde{x}\leq y}\tilde{x}d\tilde{x}-(\alpha x_2-\int_{\tilde{x}\in\hat{\mathcal{X}},x_2<\tilde{x}\leq y}\tilde{x}d\tilde{x})\\
    = &\alpha (x_1-x_2)-\int_{\tilde{x}\in\hat{\mathcal{X}},x_1<\tilde{x}\leq x_2}\tilde{x}d\tilde{x}\\
    < & 0,
\end{align*}
implying first that, if $x_1$ participates then all users in the range $[x_1,y]$ participate, and second that the utility of participating users is increasing in $x$. The former also implies that there exists $x^A\geq 0$ such that the user base is $[x^A,y]$.
\end{proof}

\medskip

\begin{proof}[\bf Proof of Proposition \ref{prop:ad_eq}]\label{pf:prop1}

Recall that the platform's profit maximization problem is $\max_{y}\pi^A=\zeta(y-x^A(y))$, where 
\begin{equation*}
    x^A(y)=
    \begin{cases}
        -\alpha+\sqrt{\alpha^2+y^2-2v} & \text{if $y\geq\sqrt{2v}$,} \\
        0 & \text{if $y<\sqrt{2v}$,} 
  \end{cases}
\end{equation*}
from Equation (\ref{eq:x_lower_bar}).

When $y<\sqrt{2v}$, $x^A(y)=0$ and the profit of the platform is $ \zeta y$, which is maximized when $y^*=\sqrt{2v}$.

When $y\geq\sqrt{2v}$, the profit becomes $ \zeta (y + \alpha- \sqrt{\alpha^2+y^2-2v})$ and $\frac{d\pi^A}{dy}=\zeta(1-\frac{y}{\sqrt{y^2+\alpha^2-2v}})$. Notice that the profit is increasing in $y$ when $\frac{d\pi^A}{dy}>0$, which holds iff  $\alpha>\sqrt{2v}$. Therefore, when $\alpha\geq\sqrt{2v}$, profit maximizing  moderation strategy is $y^{A*}=1$.  
On the other hand, when $\alpha<\sqrt{2v}$,  $\frac{d\pi^A}{dy} \leq 0$ and the profit maximizing content moderation strategy is $y^{A*}=\sqrt{2v}$. In other words, we find $\alpha^A\equiv\sqrt{2v}$ such that  $y^{A*}=\sqrt{2v}$ when $\alpha<\alpha^A$ while $y^{A*}=1$ when $\alpha\geq\alpha^A$. 

\end{proof}

\medskip
\begin{proof}[\bf Proof of Proposition \ref{prop:subsc_eq}]\label{pf:prop2}
Under subscription, the profit maximization problem of 
the platform is $\max_{y,p}\pi^S=p(y-x^S(y,p))$, where 
\begin{equation*}
 x^S(y,p)=
    \begin{cases}
        -\alpha+\sqrt{\alpha^2+y^2-2(v-p)} & \text{if $y\geq\sqrt{2(v-p)}$,} \\
        0 & \text{if $y<\sqrt{2(v-p)}$.} 
  \end{cases}  
\end{equation*}
from Equation (\ref{eq:xS}). 

\begin{itemize}
    \item If $y\leq \sqrt{2(v-p)}$, then $p\leq v-y^2/2$ holds. Thus
    \begin{equation}\label{eq:a1}
       \pi^S= py \leq (v-\frac{y^2}{2})y  \leq \frac{2v}{3}\sqrt{\frac{2v}{3}}.
    \end{equation}
   
Therefore, $\pi^S$ takes the maximum value of  $\frac{2v}{3}\sqrt{\frac{2v}{3}}$, when both inequalities in Equation (\ref{eq:a1}) are equality, i.e., $p=v-\frac{y^2}{2}$ and $(v-\frac{y^2}{2})y=\frac{2v}{3}\sqrt{\frac{2v}{3}}$ These two conditions give the optimal price  $p=\frac{2v}{3}$ and content moderation policy
$y=\sqrt{\frac{2v}{3}}$ when $y\leq \sqrt{2(v-p)}$.

\item If $y\geq \sqrt{2(v-p)}$, then $\pi^S=p(y+\alpha-\sqrt{\alpha^2+y^2-2(v-p)})$. First fix $p$ as given. Taking the first 
order partial derivative w.r.t. $y$, we have 
$\frac{\partial \pi^S}{\partial y}=p\Big(1-\frac{y}{\sqrt{y^2+\alpha^2-2(v-p)}}\Big)$.
If $\alpha^2-2(v-p)>0$, i.e., $p>v-\alpha^2/2$, 
$\frac{\partial \pi^S}{\partial y}=p\Big(1-\frac{y}{\sqrt{y^2+\alpha^2-2(v-p)}}\Big)>p\Big(1-\frac{y}{\sqrt{y^2+0}}\Big)=0$. Therefore, $\pi^S$ is increasing in $y$ for any given $p>v-\alpha^2/2$, or the optimal level of $y$ is 1.
If $\alpha^2-2(v-p)\leq 0$, i.e., $p\leq v-\alpha^2/2$, 
$\frac{\partial \pi^S}{\partial y} \leq0$, which means that $\pi^S$ is decreasing in $y$, or the optimal $y$ for any given $p\leq v-\alpha^2/2$ is $\sqrt{2(v-p)}$.

\end{itemize}

So the optimal level of moderation is either $y=\sqrt{2(v-p)}$ or $y=1$. 
When $y=\sqrt{2(v-p)}$, we have seen that the optimal level of $p$ and $y$ should be $p=\frac{2v}{3}$ and $y=\sqrt{\frac{2v}{3}}$, which induces $\pi^{S}=\frac{2v}{3}\sqrt{\frac{2v}{3}}$. 
When $y=1$, the optimal subscription fee $\hat{p}^{S*}$ should maximize $$\hat{\pi^S}(p)=p(1-x^S(1,p))=p(1+\alpha-\sqrt{\alpha^2+1-2(v-p)}).$$ Solving the first order condition (FOC) w.r.t. $p$ gives $$\hat{p}^{S*}=\frac{1}{9}\Big[(1+\alpha)\sqrt{2(2-3v+2\alpha^2+\alpha)}-2(1-3v+\alpha^2-\alpha)\Big].$$ The second order condition (SOC) is clearly satisfied since $$\frac{\partial^2 \hat{\pi^S}(p)}{\partial p^2}=\frac{-2 (\alpha ^2+1)-3 p+4 v}{(\alpha ^2+2 p-2 v+1)^{3/2}}<\frac{-2+4v}{(\alpha ^2+2 p-2 v+1)^{3/2}}< 0.$$

Therefore, $\pi^{S*}=\max\{\frac{2v}{3}\sqrt{\frac{2v}{3}},\hat{\pi^S}(\hat{p}^{S*})\}$. By the envelope theorem, we know that $$\frac{\partial \hat{\pi^S}(\hat{p}^{S*})}{\partial \alpha}=\frac{\partial \hat{\pi^S}({p})}{\partial \alpha}|_{p=\hat{p}^{S*}}=p(1-\frac{\alpha}{\sqrt{\alpha^2+1-2(v-p)}})|_{p=\hat{p}^{S*}}=\hat{p}^{S*}(1-\frac{\alpha}{\sqrt{\alpha^2+1-2(v-\hat{p}^{S*})}}).$$ Since by construction $1\geq x^S(1,p)=-\alpha+\sqrt{\alpha^2+1-2(v-p)}$, we know that $1-\frac{\alpha}{\sqrt{\alpha^2+1-2(v-p)}}\geq0$ for any $p$, and specifically for $p=\hat{p}^{S*}$. Thus, $\frac{\partial \hat{\pi^S}(\hat{p}^{S*})}{\partial \alpha}\geq0$, i.e., $\hat{\pi^S}(\hat{p}^{S*})$ is increasing in $\alpha$. $\frac{2v}{3}\sqrt{\frac{2v}{3}}$ is independent of $\alpha$. Therefore, proving that there exists $\alpha^S\in(0,\alpha^A)$ such that $\pi^{S*}=\frac{2v}{3}\sqrt{\frac{2v}{3}}$ (with the optimal content moderation strategy $y^{S*}=\sqrt{\frac{2v}{3}}$) when $\alpha<\alpha^S$ and $\pi^{S*}=\hat{\pi^S}(\hat{p}^{S*})$ (with the optimal content moderation strategy $y^{S*}=1$) when $\alpha>\alpha^S$ requires $\hat{\pi^S}(\hat{p}^{S*})<\frac{2v}{3}\sqrt{\frac{2v}{3}}$ when $\alpha=0$ and $\hat{\pi^S}(\hat{p}^{S*})>\frac{2v}{3}\sqrt{\frac{2v}{3}}$ when $\alpha=\sqrt{2v}\equiv\alpha^A$ for any $v\in(0,\frac{1}{2})$. To check this, we denote $$H(v)=(\hat{\pi^S}(\hat{p}^{S*})-\frac{2v}{3}\sqrt{\frac{2v}{3}})|_{\alpha=0}$$ and $$J(v)=(\hat{\pi^S}(\hat{p}^{S*})-\frac{2v}{3}\sqrt{\frac{2v}{3}})|_{\alpha=\sqrt{2v}}$$ and we want to show $H(v)<0$ and $J(v)>0$ for any $v\in(0,\frac{1}{2})$.

Since $H(v)=(\hat{\pi^S}(\hat{p}^{S*})-\frac{2v}{3}\sqrt{\frac{2v}{3}})|_{\alpha=0}$, plugging in the expression of $\hat{\pi^S}(\hat{p}^{S*})$ and $\alpha=0$ obtains $$H(v)=\frac{1}{27} (2- \sqrt{4-6 v}) (6 v+ \sqrt{4-6 v}-2)-\frac{2 v}{3}\sqrt{\frac{2 v}{3}},$$ $$H'(v)=\frac{1}{3}(2-\sqrt{6v} -\sqrt{4-6 v})$$ and $$H''(v)=\frac{1}{\sqrt{4-6 v}}-\frac{1}{\sqrt{6v}}.$$ Note that $H''(v)\lessgtr0$ if $v\lessgtr\frac{1}{3}$, so $H'(v)$ is first decreasing and then increasing on $(0,\frac{1}{2})$. Therefore, $H'(v)<\max\{H'(0),H'(\frac{1}{2})\}=\max\{0,\frac{1}{3}(1-\sqrt{3})\}=0$, so $H(v)$ is decreasing in $v$. Thus, $$H(v)<H(0)=0.$$ 

Since $J(v)=(\hat{\pi^S}(\hat{p}^{S*})-\frac{2v}{3}\sqrt{\frac{2v}{3}})|_{\alpha=\sqrt{2v}}$, plugging in the expression of $\hat{\pi^S}(\hat{p}^{S*})$ and $\alpha=\sqrt{2v}$ obtains $$J(v)=\frac{1}{27} (2 \sqrt{2 v}-\sqrt{2 (v+\sqrt{2 v}+2)}+2) (2 v+2 \sqrt{2 v}+(2 \sqrt{v}+\sqrt{2}) \sqrt{v+\sqrt{2 v}+2}-2),$$ $$J'(v)=\frac{(\sqrt{2}-3 \sqrt{6}) v+\sqrt{v+\sqrt{2 v}+2}+\sqrt{v} (\sqrt{2 (v+\sqrt{2 v}+2)}+2)-\sqrt{2}}{9 \sqrt{v}}.$$ Denote the numerator of $J'(v)$ as $J_1(v)$, then $$J_1'(v)=\frac{4 \sqrt{2} v+8 \sqrt{v}+4 \sqrt{v+\sqrt{2 v}+2}+5 \sqrt{2}}{4 \sqrt{v} \sqrt{v+\sqrt{2 v}+2}}+\sqrt{2}-3 \sqrt{6}.$$ $J_1'(v)=0$ has a unique solution $v=v_0$ where $v_0<\frac{1}{2}$.
Furthermore, $J_1'(v)\gtrless0$ when $v\lessgtr v_0$. Therefore, $J_1(v)>\min\{J_1(0),J_1(\frac{1}{2})\}=\min\{0,\frac{\sqrt{2}-3 \sqrt{6}+2 \sqrt{14}}{2}\}=0$. So, $J'(v)>0$ and then $$J(v)>J(0)=0.$$

Thus, by the fact that $\hat{\pi^S}(\hat{p}^{S*})$ is increasing in $\alpha$ while $\frac{2v}{3}\sqrt{\frac{2v}{3}}$ is independent of $\alpha$, we claim that there exists $\alpha^S\in(0,\alpha^A)$ such that $\pi^{S*}=\frac{2v}{3}\sqrt{\frac{2v}{3}}$ ($y^{S*}=\sqrt{\frac{2v}{3}}$) when $\alpha<\alpha^S$, while $\pi^{S*}=\hat{\pi^S}(\hat{p}^{S*})$ ($y^{S*}=1$) when $\alpha>\alpha^S$.

\end{proof}

\begin{proof}[\bf Proof of Proposition \ref{prop:biz model choice}]\label{pf:prop3}


We first consider the case where content moderation is allowed.
When $\alpha<\alpha^S$, $\pi^{A*}=\zeta\sqrt{2v}$ and $\pi^{S*}=\frac{2v}{3}\sqrt{\frac{2v}{3}}$. Thus, $\overline{\zeta}=\frac{\frac{2v}{3}\sqrt{\frac{2v}{3}}}{\sqrt{2v}}=\frac{2v}{3\sqrt{3}}$ which is independent of $\alpha$. When $\alpha^S\leq\alpha<\alpha^A$, $\pi^{A*}=\zeta\sqrt{2v}$ and $\pi^{S*}=\hat{\pi^S}(\hat{p}^{S*})$. Thus, $\overline{\zeta}=\frac{\hat{\pi^S}(\hat{p}^{S*})}{\sqrt{2v}}$. We have proved that $\hat{\pi^S}(\hat{p}^{S*})$ is increasing in $\alpha$ in the proof of Proposition \ref{prop:subsc_eq}, so $\overline{\zeta}$ is also increasing in $\alpha$.

If content moderation is not allowed, the expressions for the platform's profits (denoted as $\pi^A_0$ and $\pi^S_0$) are the same as those when no moderation is conducted, i.e.,
$\pi^A_0=\zeta(1+\alpha-\sqrt{\alpha^2+1-2v})$ and $\pi^S_0=\hat{\pi^S}(\hat{p}^{S*})$. Thus, $\hat{\zeta}=\frac{\hat{\pi^S}(\hat{p}^{S*})}{1+\alpha-\sqrt{\alpha^2+1-2v}}$. Clearly, when $\alpha^S\leq\alpha<\alpha^A$, $\overline{\zeta}<\hat{\zeta}$ since $\pi^S_0=\pi^{S*}$ but $\pi^A_0<\pi^{A*}$ (this is when the optimal strategy for an advertising-based platform is to  conduct moderation but that of a subscription-based one is not to do so).

We calculate
\begin{align*}
\frac{\partial\hat{\zeta}}{\partial \alpha}= & \frac{\frac{\partial \hat{\pi^S}(\hat{p}^{S*})}{\partial \alpha}(1+\alpha-\sqrt{\alpha^2+1-2v})-\hat{\pi^S}(\hat{p}^{S*})(1-\frac{\alpha}{\sqrt{\alpha^2+1-2v}})}{(1+\alpha-\sqrt{\alpha^2+1-2v})^2}\\
    = & \frac{\hat{p}^{S*}}{(1+\alpha-\sqrt{\alpha^2+1-2v})^2}\Big((1-\frac{\alpha}{\sqrt{\alpha^2+1-2(v-\hat{p}^{S*})}})(1+\alpha-\sqrt{\alpha^2+1-2v})\\
    & -(1+\alpha-\sqrt{\alpha^2+1-2(v-\hat{p}^{S*})})(1-\frac{\alpha}{\sqrt{\alpha^2+1-2v}})\Big).
\end{align*}

Denote $A=\sqrt{\alpha^2+1-2(v-\hat{p}^{S*})}$ and $B=\sqrt{\alpha^2+1-2v}$, then $A\geq B>\alpha$.
\begin{align*}
    \sign(\frac{\partial\hat{\zeta}}{\partial \alpha})= & \sign\Big((1-\frac{\alpha}{A})(1+\alpha-B)-(1+\alpha-A)(1-\frac{\alpha}{B})\Big)\\
    = & \sign\Big(\frac{(A-B)(AB+(1-A-B)\alpha+\alpha^2)}{AB}\Big)\\
    = & \sign(AB+(1-A-B)\alpha+\alpha^2).
\end{align*}
Note that 
\begin{align*}
    AB+(1-A-B)\alpha+\alpha^2)= & A(B-\alpha)-B\alpha+\alpha+\alpha^2\\
    \geq & B(B-\alpha)-B\alpha+\alpha+\alpha^2\\
    = & (B-\alpha)^2+\alpha\\
    >& 0.
\end{align*}
Therefore, $\frac{\partial\hat{\zeta}}{\partial \alpha}>0$, i.e., 
$\hat{\zeta}$ is increasing in $\alpha$.  When $\alpha=\alpha^S$, it has been shown at the end of last paragraph that $\hat{\zeta}>\overline{\zeta}$. When $\alpha=0$,
$$\hat{\zeta}-\overline{\zeta}=\frac{1}{27} \Big(\frac{(3-(\sqrt{4-6 v}+1)) (6 v+\sqrt{4-6 v}-2)}{1-\sqrt{1-2 v}}-6 \sqrt{3} v\Big).$$ Proving  $\hat{\zeta}<\overline{\zeta}$ requires that $$G(v)=(3-(\sqrt{4-6 v}+1)) (6 v+\sqrt{4-6 v}-2)-6 \sqrt{3} v(1-\sqrt{1-2 v})<0$$ for any $v\in(0,\frac{1}{2})$. Note that $$G'(v)=-\frac{3 \sqrt{3}}{\sqrt{1-2 v}}+9 (\sqrt{3-6 v}- \sqrt{4-6 v})-6 \sqrt{3}+18,$$ while both $-\frac{3 \sqrt{3}}{\sqrt{1-2 v}}$ and  $\sqrt{3-6 v}- \sqrt{4-6 v}$ are decreasing in $v$, so $G'(v)$ is decreasing in $v$. Then $G'(v)<G'(0)=0$ so $G(v)$ is decreasing in $v$. Thus, $G(v)<G(0)=0$. Therefore, 
$\hat{\zeta}<\overline{\zeta}$
when 
$\alpha=0$. Thus, by the fact that $\hat{\zeta}$ is increasing in $\alpha$, we can claim that there is $\alpha_1\in(0,\alpha^S)$ such that $\hat{\zeta}\lessgtr\overline{\zeta}$ when $\alpha \lessgtr \alpha_1$ and finish the proof. 

\end{proof}

\begin{proof}[\bf Proof of Proposition \ref{prop:cm w imperfect ai}]\label{pf:prop4}

\noindent {\bf Part (i):}
We first prove that a platform carries out content moderation only if the technology is sufficiently accurate, under both advertising and subscription revenues. To this end, we show that there exists $\epsilon>0$ such that when $k<\epsilon$, the profit induced by optimal moderation strategy is less than that of no content moderation. 
Notice that when $k=\frac{1}{2}$, both  under advertising and subscription revenue, a platform chooses to moderate content, under the assumption $\alpha<\alpha^S$ stated on page \pageref{alpha<alphaS}. Since the platform's profit $$\pi^A_k=\zeta(1-x_{2,k}^A+y-x_{1,k}^A(y))$$ or $$\pi^S_k=p(1-x_{2,k}^S(p)+y-x_{1,k}^S(y,p))$$ is obviously continuous in $k$, it suffices to show that the platform's profit if moderating content is lower than that if no moderation is conducted, when the technology accuracy is $k=0$. 

We start with the analysis of a platform with advertising revenues.  Note that when $k=0$, all content has the probability $\frac{1}{2}$ of being pruned, regardless of their extremeness index $x$ and the platform's choice of $y$. A user at $x=1$ receives utility $U(1)\equiv\frac{1}{2}\alpha-\frac{1}{2}c+v\geq 0$ by the assumption $c\leq \alpha+2v$. Therefore, the user base for the platform will be $[\underline{x},1]$ where the marginal user $\underline{x}$ is the solution to $U(\underline{x})\equiv\frac{1}{2}\alpha\underline{x}-\frac{1}{2}c+v-\frac{1}{2}\frac{1}{2}(1-\underline{x}^2)=0$. With some algebra, we know that the size of the platform's user base is $$1-\underline{x}=1+\alpha-\sqrt{\alpha^2+1-2(2v-c)}<1+\alpha-\sqrt{\alpha^2+1-2v},$$ which is the user base size if no moderation is conducted. The inequality comes from the fact that $v<c$ and thus $2v-c<v$.
Therefore, the profit with content moderation is also less than that without moderation. 

The proof when the platform earns revenues from subscription is similar to that under advertising revenues. With subscription revenues and lowest accuracy ($k=0$), one can show that for any given subscription price $p$, the user base is smaller when the platform moderates content than when it does not: If $p$ induces $U(1)\leq 0$, there is no user on the platform so  the user base (zero) is  trivially smaller  when the platform moderates content than when it does not. If  $p$ induces $U(1)>0$, the procedure to prove that the user base is smaller when the platform moderates content than when it does not is exactly the same as for advertising case, except that we replace the terms $v$ with $v-p$.

Therefore,  we have proved that a platform will conduct content moderation only if technology is sufficiently accurate, for both advertising and subscription.

\bigskip 
\noindent {\bf Part (ii):} Next, we prove that if the platform is moderating content, it may prune more of the moderate content than it does of the extreme content, and moreover, the average extremeness index of the content on the platform may be lower than when it prunes more of the extreme content and when it does not moderate content. To prove the existence of an equilibrium content moderation strategy where these statements hold, it suffices to give an example. 

First, consider a platform under advertising revenue. Based on Figure \ref{fig:config_k_ad}, when the content moderation policy is $y$, the amount of the extreme content (i.e., $x>y$) that is pruned in equilibrium, denoted as $M_{1,k}(y)$, is $$M_{1,k}(y)=(\frac{1}{2}+k)(1-\max\{y,x_{2,k}\}),$$ and the amount of the moderate content (i.e., $x<y$) that is pruned in equilibrium, denoted as $M_{2,k}(y)$, is $$M_{2,k}(y)=(\frac{1}{2}-k)(y-x_{1,k}(y)).$$
In equilibrium, the platform prunes $M_{1,k}^{A*}\equiv M_{1,k}(y^{A*})$ unit of extreme content as well as $M_{2,k}^{A*}\equiv M_{2,k}(y^{A*})$ unit of moderate content. Based on the expression of the average extremeness index ($\overline{x}$) on page \pageref{foot:average_ext}, in equilibrium,  the average extremeness index ($\overline{x}^{A*}_k$) is 
\begin{align*}
    \overline{x}^{A*}_k=&\frac{\int_{\mathcal{X}}x(1-q(x))dx}{\int_{\mathcal{X}}(1-q(x))dx}\\
    =&\begin{cases}
    \frac{\int_{x_{1,k}(y^{A*})}^{y_k^{A*}}x(\frac{1}{2}+k)dx+\int_{x_{2,k}}^{1}x(\frac{1}{2}-k)dx}{(\frac{1}{2}+k)(y_k^{A*}-x_{1,k}(y^{A*}))+(\frac{1}{2}-k)(1-x_{2,k})}\text{\quad if content moderation is conducted in equilibrium,}\\
    \frac{-\alpha+\sqrt{\alpha^2+1-2v}+1}{2}\text{\quad if content moderation is not conducted in equilibrium.}
    \end{cases}
\end{align*}

Consider $\alpha=0.05, v=0.2, c=0.25$. Plug in the numbers into the expressions for $M_{1,k}^{A*}$, $M_{2,k}^{A*}$, and $\overline{x}^{A*}_k$, we can plot out a figure with $k$ as x-axis while  $M_{1,k}^{A*}$, $M_{2,k}^{A*}$, and $\overline{x}^{A*}_k$ as y-axis, to find out whether there can be cases such that the following two claims hold: 

(1) the platform prunes more of the moderate content than it does of the extreme content, i.e., there exists $k_0\in[0,\frac{1}{2}]$ such that $ M_{1,k_0}^{A*}<M_{2,k_0}^{A*}$, and 

(2) the average extremeness index of the content on the platform is lower than when it prunes more of the extreme content and when it does not moderate content, i.e., there exists $k_1,k_2\in[0,\frac{1}{2}]$ such that $ M_{1,k_1}^{A*}<M_{2,k_1}^{A*}$, $ M_{1,k_2}^{A*}>M_{2,k_2}^{A*}$, but $\overline{x}^{A*}_{k_1}>\overline{x}^{A*}_{k_2}$. 

Figure \ref{fig:k_m1_m2_xbar} illustrates the relationship between $k$ and $M_{1,k}^{A*}$, $M_{2,k}^{A*}$, or $\overline{x}^{A*}_k$.

\begin{figure}[h]
\caption{$M_{1,k}^{A*}$, $M_{2,k}^{A*}$, or $\overline{x}^{A*}_k$ and technology accuracy $k$ ($v=0.2, \alpha=0.05$, $c=0.25$)}\label{fig:k_m1_m2_xbar}
  \centering
   \includegraphics[width=0.75\textwidth]{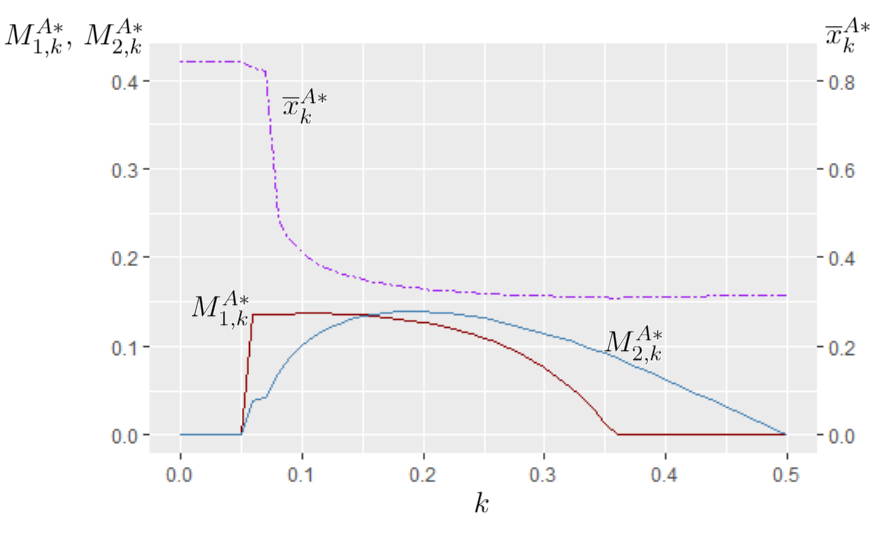}
\end{figure}

From Figure \ref{fig:k_m1_m2_xbar}, we see that for $k$ greater than around 0.15, we have $ M_{1,k_0}^{A*}<M_{2,k_0}^{A*}$, so claim (1) holds. Also, consider $k_1=0.1$ and $k_2=0.3$, we see from the figure that $ M_{1,k_1}^{A*}<M_{2,k_1}^{A*}$, $ M_{1,k_2}^{A*}>M_{2,k_2}^{A*}$, but $\overline{x}^{A*}_{k_1}>\overline{x}^{A*}_{k_2}$, so claim (2) holds.




For a platform under subscription revenues, since the full solution including optimal pricing is analytically challenging (see Section \ref{appdx:impeft_ai_subsc}), we use numerical simulations which exhaust the parameter space of $\alpha\in[0,1]$, $v\in[0,\frac{1}{2}]$, and $c\in[v,\alpha+2v]$ with a grid of 0.05.  The details of how to generate the equilibrium outcomes are in Appendix \ref{appdx:numerical} (especially Section \ref{app:numerical_subsc}). Based on the outcomes stored in Dataframe S (described on page \pageref{tables}), we cannot find any examples where the platform prunes extreme content more than it does moderate content, and we find that the average extremeness index is lower when a platform conducts content moderation than that when it does not.

\end{proof}

\medskip 

\begin{proof}[\bf Proof of Proposition \ref{prop:k_and_ystar}]\label{pf:prop5}

First, consider a platform with advertising revenues.  When $k>\overline{k}$, no users with $x>y$ will participate in the platform. Based on Lemma \ref{lem:cm_k}, the only two candidates for the optimal content moderation $y_k^{A*}$ are either $y_k^{A*}=1$ or $y_k^{A*}=\sqrt{\frac{4v-(1-2k)2c}{1+2k}}$. 

Let the profit of the platform when it chooses $y=1$ and $y=\sqrt{\frac{4v-(1-2k)2c}{1+2k}}$ be $\pi_1$ and $\pi_2$, respectively. 
When $k=\frac{1}{2}$ (i.e., perfect technology), conducting content moderation is more profitable than not doing so, based on the assumption that $\alpha<\alpha^S<\alpha^A$. Therefore, $\pi_1<\pi_2$ when $k=\frac{1}{2}$. Since the platform's profit, $\pi^A_k$ or $\pi^S_k$, is continuous in $k$, there exists a $\hat{k}>\overline{k}$ such that $\pi_1<\pi_2$ for any $k\in[\hat{k},\frac{1}{2}]$. That is, the optimal content moderation strategy is $y_k^{A*}=\sqrt{\frac{4v-(1-2k)2c}{1+2k}}$ for any $k\in[\hat{k},\frac{1}{2}]$. Note that $\frac{\partial{(\frac{4v-(1-2k)2c}{1+2k}})}{\partial k}=\frac{8(c-v)}{(1+2k)^2}>0$, so $y_k^{A*}$ is increasing in $k$, i.e., the platform adopts a more relaxed standard for content moderation as technology further improves. The average extremeness index is $\overline{x}^{A*}_k=\frac{\int_{\mathcal{X}}x(1-q(x))dx}{\int_{\mathcal{X}}(1-q(x))dx}=\frac{\int_0^{y_k^{A*}}x(\frac{1}{2}+k)dx}{(\frac{1}{2}+k)y_k^{A*}}=\frac{y_k^{A*}}{2}$ is increasing in $k$ since $y_k^{A*}$ is increasing in $k$.

The solutions under subscription revenues are proven numerically since characterizing the equilibrium as a closed form solutions is analytically not tractable. We show the monotonicity between $k$ and $y_k^{S*}$ or $\overline{x}$ numerically by exhausting the parameter space of $\alpha\in[0,1]$, $v\in[0,\frac{1}{2}]$, and $c\in[v,\alpha+2v]$ with a grid of 0.05. The details of how to generate the equilibrium outcomes are in Appendix \ref{appdx:numerical} (especially Section \ref{app:numerical_subsc}). Using the outcomes stored in Dataframe S (described on page \pageref{tables}), we can show the relationship between $k$ and the optimal moderation strategy $y_k^{S*}$, as well as the relationship between $k$ and the average extremeness index $\overline{x}^{S*}_k=\frac{\int_{\mathcal{X}}x(1-q(x))dx}{\int_{\mathcal{X}}(1-q(x))dx}=\frac{\int_{x_{1,k}^{S*}}^{y_k^{S*}}x(\frac{1}{2}+k)dx+\int_{x_{2,k}^{S*}}^{1}x(\frac{1}{2}-k)dx}{(\frac{1}{2}+k)(y_k^{S*}-x_{1,k}^{S*})+(\frac{1}{2}-k)(1-x_{2,k}^{S*})}$ numerically.\footnote{When no content moderation is conducted, the average extremeness index $\overline{x}^{S*}_k=\overline{x}_0^S=\frac{x^S(1,p_1^*)+1}{2}$ when the expressions of $x^S(y,p)$ and $p_1^*$ are given by Equation (\ref{eq:xS}) and the last sentence of Proposition \ref{prop:subsc_eq}, respectively.}
Figure \ref{fig:k_yS_k_xbarS} below is an example when $\alpha=0$, $v=0.25$, and $c=0.5$. \\

\begin{figure}[htp]
\caption{Content moderation policy ($y_k^{S*}$) and avg. extremeness index ($\overline{x}^{S*}_k$) vs. technology accuracy ($k$) ($v=0.25, \alpha=0, c=0.5$)}\label{fig:k_yS_k_xbarS}
\centering
  \includegraphics[width=\textwidth]{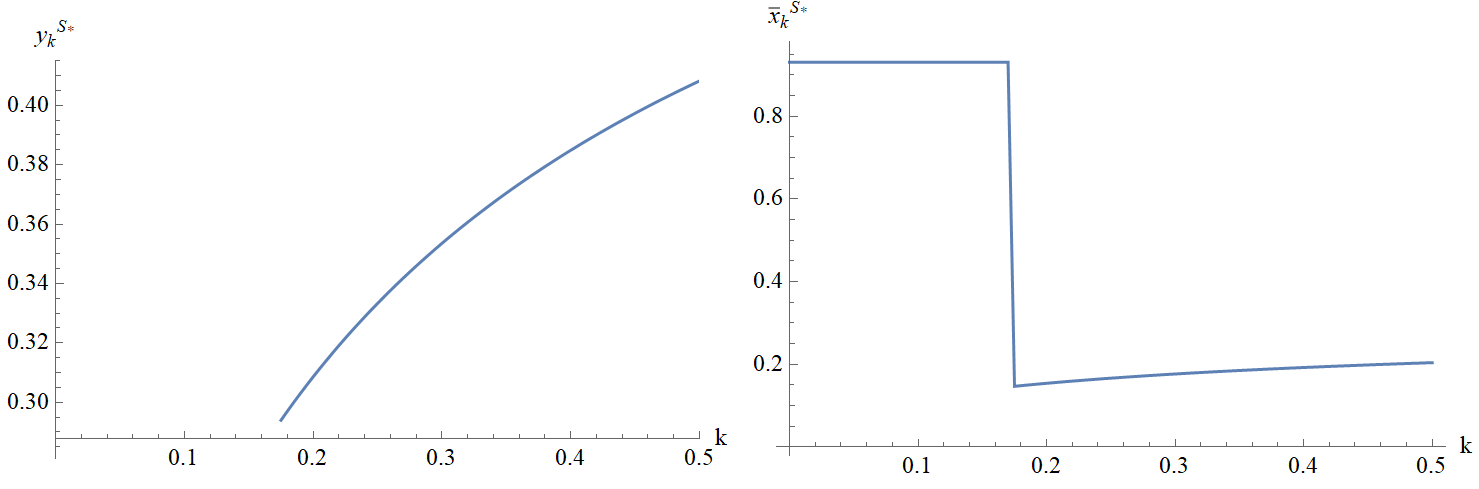}
  \end{figure}
We can see that when $k$ is large, $y_k^{S*}$ and $\overline{x}^{S*}_k$ are increasing in $k$.\footnote{In the left subfigure of Figure \ref{fig:k_yS_k_xbarS}, there is no value of $y_k^{S*}$ for small $k$, which means no content moderation is conducted in equilibrium when $k$ is small.} The same pattern is repeated for all other $(\alpha,v,c)$ combinations.

\end{proof}

\medskip 

\begin{proof}[\bf Proof of Proposition \ref{prop:choice_of_k}]\label{pf:prop6}

For the advertising case, although Section \ref{appdx:impeft_ai_ad} provides the equilibrium for the imperfect technology case,  expressions for the solution are too complicated to analytically derive comparative statics. Therefore, we exhaust the parameter space of $\alpha\in[0,1]$, $v\in[0,\frac{1}{2}]$, and $c\in[v,\alpha+2v]$, using a grid of 0.05, and for any combination of $(\alpha,v,c)$, we find out the maximum profit across different $k$. Details of the numerical solution are provided in Section \ref{app:numerical_ad} of Appendix \ref{appdx:numerical}. 

Then we compare the maximum profit ($\max_k\pi_k^{A*}$) with the profit when $k=\frac{1}{2}$ (i.e., perfect technology, $\pi_{k=\frac{1}{2}}^{A*}$). The numerical results confirm that for any $\alpha$ and $v$, if $c$ is small, we have $\max_k\pi_k^{A*}>\pi_{k=\frac{1}{2}}^{A*}$; otherwise, $\max_k\pi_k^{A*}=\pi_{k=\frac{1}{2}}^{A*}$. Therefore, when $c$ is small, imperfect technology with $k<\frac{1}{2}$ is optimal for a platform under advertising. Figure \ref{fig:pi_k_a} below illustrates the relationship between $k$ and $\pi_k^{A*}$ when $\alpha=0.2$, $v=0.25$. Specifically, Figure \ref{fig:pi_k_smallc} corresponds to the case when $c$ is small while Figure \ref{fig:pi_k_largec} corresponds to the case when $c$ is large. We see that the optimal technology is less than $\frac{1}{2}$ when $c$ is small, but is exactly $\frac{1}{2}$ when $c$ is large. Similar results can be seen for all other  combinations of $(\alpha,v,c)$.

\begin{figure}[h]
\caption{Platform profit $\pi_k^{A*}$ and technology accuracy $k$ ($v=0.25, \alpha=0.2$)}\label{fig:pi_k_a}
\centering
  \begin{subfigure}[b]{0.45\columnwidth}
  \centering
   \includegraphics[width=\textwidth]{figures/pi_k_ad.PNG}
    \caption{$c=0.3$ (small)}
    \label{fig:pi_k_smallc}
  \end{subfigure}
  \begin{subfigure}[b]{0.45\columnwidth}
  \centering
   \includegraphics[width=\textwidth]{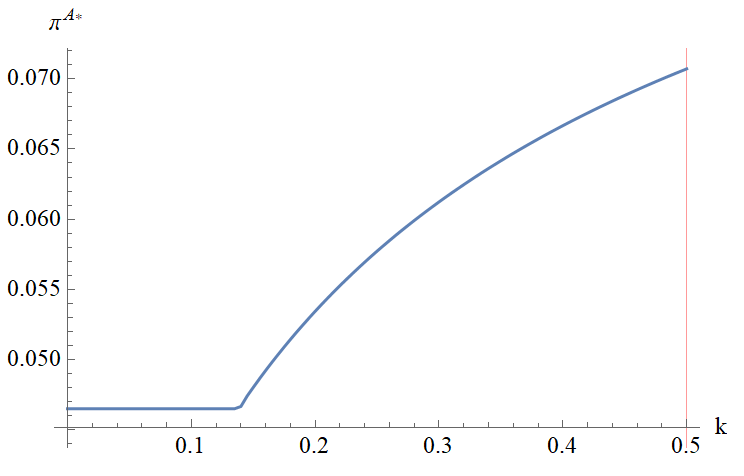}
    \caption{$c=0.5$ (large)}
    \label{fig:pi_k_largec}
  \end{subfigure}
\end{figure}

For the subscription case,  since the full solution including optimal pricing is analytically challenging (see Section \ref{appdx:impeft_ai_subsc}),  we show for any $k\in[0,\frac{1}{2}]$, the optimal profit $\pi^{S*}_{k}$ is weakly increasing in $k$
numerically by exhaustive simulation. Details are provided in Appendix \ref{appdx:numerical}, especially Section \ref{app:numerical_subsc}.
Figure \ref{fig:pi_k_s} below illustrates the relationship between $k$ and $\pi_k^{S*}$ when $\alpha=0.2$, $v=0.25$, and $c=0.3$. A similar pattern can be seen for all other  combinations of $(\alpha,v,c)$.

\begin{figure}[h]
\caption{Platform profit $\pi_k^{S*}$ and technology accuracy $k$ ($v=0.25, \alpha=0.2$, $c=0.3$)}\label{fig:pi_k_s}
  \centering
   \includegraphics[width=0.5\textwidth]{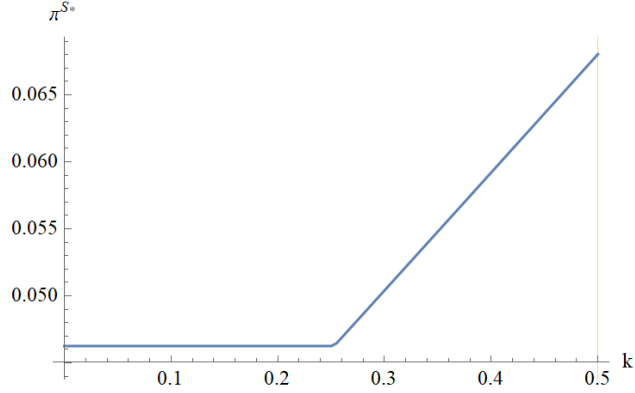}
\end{figure}

\end{proof}

\begin{proof}[\bf Proof of Proposition \ref{prop:social_alpha_y}]\label{pf:prop7}

First, notice that   $x^P(y)=x^A(y)$ since the users' behavior is the same as the advertising case.
Moreover, based on Equations (\ref{eq:x_lower_bar}) and (\ref{eq:welfare}), the social welfare $W(y)$ is given by
\begin{equation*}
    W(y)=
    \begin{cases}
    \int_0^y\left(\alpha  x+v-\frac{1}{2}(y^2-x^2)\right)  dx &\text{ if } y<\sqrt{2v},\\
    \int_{-\alpha+\sqrt{y^2+\alpha^2-2v}}^y\left(\alpha  x+v-\frac{1}{2}(y^2-x^2)\right)  dx &\text{ if } y\geq\sqrt{2v}.
\end{cases}
\end{equation*}

If $y<\sqrt{2v}$, then the FOC with respect to $y$ is $\frac{dW(y)}{dy}=\alpha y+v-y^2=0$, which yields $y^*=\frac{1}{2}(\alpha+\sqrt{4v+\alpha^2})$. Notice that SOC is satisfied since $\frac{d^2 W(y)}{dy^2}|_{y=y^*}=\alpha -2y^*=-\sqrt{4v+\alpha^2}\leq 0$. The condition $y<\sqrt{2v}$ requires $\frac{1}{2}(\alpha+\sqrt{4v+\alpha^2})<\sqrt{2v}$, which is equivalent to $\alpha<\sqrt{\frac{v}{2}}$. In other words, if $\alpha\geq\sqrt{\frac{v}{2}}$, $W(y)$ is increasing in $y$ on $[0,\sqrt{2v}]$.

If $y\geq\sqrt{2v}$, $W(y)$ is increasing in $y$ if $\frac{dW(y)}{dy}=y \left(\sqrt{\alpha ^2-2 v+y^2}-y\right)+v\geq0$, which is equivalent to $y\geq\frac{v}{\alpha}$.
Note that $\frac{d^2W(y)}{dy^2}=\frac{\left(y-\sqrt{\alpha ^2-2 v+y^2}\right)^2}{\sqrt{\alpha ^2-2 v+y^2}}\geq0$, i.e., $W(y)$ is convex in $y$. The optimal solution is then either $y^*=1$ or $y^*=\sqrt{2v}$.

When $\alpha\geq\sqrt{\frac{v}{2}}$, we have $\frac{v}{\alpha}\leq\sqrt{2v}$, so $W(y)$ is increasing in $y$ on $[0,1]$, and thus $y^{P*}=1$.

When $\alpha<\sqrt{\frac{v}{2}}$,  we have $\frac{v}{\alpha}>\sqrt{2v}$, so $W(y)$ is first increasing in $y$ until $y=\frac{1}{2}(\alpha+\sqrt{4v+\alpha^2})$ when reaching the local optimum $W(\frac{1}{2}(\alpha+\sqrt{4v+\alpha^2}))$, then decreasing until $y=\frac{v}{\alpha}$, and then again increasing in $y$. So the optimal social welfare is either $W(\frac{1}{2}(\alpha+\sqrt{4v+\alpha^2}))$ or $W(1)$, depending on which one is higher. Content moderation is only conducted when $W(\frac{1}{2}(\alpha+\sqrt{4v+\alpha^2}))>W(1)$. 
To prove the existence of $\alpha^P<\sqrt{\frac{v}{2}}$ such that content moderation is only conducted when $\alpha<{\alpha^P}$, we define  $\Delta W \equiv W(\frac{1}{2}(\alpha+\sqrt{4v+\alpha^2}))-W(1)$  and prove the following: (1) $\Delta W$ is decreasing in $\alpha$, (2) $\Delta W>0$ when $\alpha=0$, and (3) $\Delta W<0$ when $\alpha=\sqrt{\frac{v}{2}}$.

(1) $\Delta W$ is decreasing in $\alpha$ because
$$\frac{\partial\Delta W}{\partial \alpha}=\frac{1}{4} \left(\alpha  \left(5 \alpha +\sqrt{\alpha ^2+4 v}-4 \sqrt{\alpha ^2-2 v+1}\right)-2 v\right)<0.$$

(2) When $\alpha=0$, 
\begin{align*}
    \Delta W|_{\alpha=0}=&W(\sqrt{v})-W(1)\\
    = &\int_0^{\sqrt{v}} (v-\frac{1}{2}(v-x^2))  dx-\int_{\sqrt{1-2v}}^1(v-\frac{1}{2}(1-x^2))  dx\\
    =&\frac{1}{3} \left(2 v^{3/2}+\left(2 \sqrt{1-2 v}-3\right) v-\sqrt{1-2 v}+1\right).
\end{align*}
Taking derivative w.r.t. $v$ yields 
\begin{equation*}
    \frac{\partial (\Delta W|_{\alpha=0})}{\partial v}=\sqrt{v}+\sqrt{1-2 v}-1,
\end{equation*}
and setting it to zero gives that $v=0$ or $v=\frac{4}{9}$. Therefore, $\Delta W|_{\alpha=0}$ is increasing in $v$ on $v\in(0,\frac{4}{9})$ and decreasing in $v$ on $v\in(\frac{4}{9},\frac{1}{2})$. Thus, for any $v\in(0,\frac{1}{2})$ $$\Delta W|_{\alpha=0}>\min\{\Delta W|_{\alpha=0,v=0},\Delta W|_{\alpha=0,v=\frac{1}{2}}\}=\min\{0,\frac{1}{3} (\frac{3}{2}+\frac{1}{\sqrt{2}})\}=0.$$

(3) When $\alpha=\sqrt{\frac{v}{2}}$, 
\begin{align*}
    \Delta W|_{\alpha=\sqrt{\frac{v}{2}}}=&W(\sqrt{2v})-W(1)\\
    = &\int_0^{\sqrt{2v}} (\sqrt{\frac{v}{2}}x+v-\frac{1}{2}(2v-x^2))  dx-\int_{\frac{\sqrt{2-3 v}-\sqrt{v}}{\sqrt{2}}}^1(\sqrt{\frac{v}{2}}x+v-\frac{1}{2}(1-x^2))  dx\\
    =&\frac{1}{12} \left(5 \sqrt{2} v^{3/2}+3 \left(\sqrt{4-6 v}-4\right) v-2 \sqrt{4-6 v}+4\right).
\end{align*}
Taking derivative w.r.t. $v$ yields 
\begin{equation*}
    \frac{\partial (\Delta W|_{\alpha=0})}{\partial v}=\sqrt{v}+\sqrt{1-2 v}-1,
\end{equation*}
and setting it to zero gives that $v=\frac{49}{338}$ or $v=\frac{1}{2}$. Therefore, $\Delta W|_{\alpha=\sqrt{\frac{v}{2}}}$ is decreasing in $v$ on $v\in(0,\frac{49}{338})$ and increasing in $v$ on $v\in(\frac{49}{338},\frac{1}{2})$. Thus, for any $v\in(0,\frac{1}{2})$ $$\Delta W|_{\alpha=\sqrt{\frac{v}{2}}}<\max\{\Delta W|_{\alpha=\sqrt{\frac{v}{2}},v=0},\Delta W|_{\alpha=\sqrt{\frac{v}{2}},v=\frac{1}{2}}\}=\max\{0,0\}=0.$$

Therefore, we claim that there exists $\alpha^P<\sqrt{\frac{v}{2}}$ such that $y^{P*}=\frac{1}{2}(\alpha+\sqrt{4v+\alpha^2})$ if $\alpha<\alpha^P$ while no content moderation is conducted ($y^{P*}=1$) otherwise. 

Since $\alpha^P$, $\alpha^S$, and $\alpha^A$ are all single-variable functions of $v$, one can easily check their relative sizes numerically. The result is $\alpha^P<\alpha^S<\alpha^A$ for all $v\in(0,\frac{1}{2})$.

When the social planner moderates content ($\alpha<\alpha^P$), we have 
\begin{align*}
    y^{S*}=  \sqrt{2v/3} < \sqrt{v} 
      < y^{P*}=\frac{1}{2}(\alpha+\sqrt{4v+\alpha^2}) 
      <\frac{1}{2}(\sqrt{\frac{v}{2}}+\sqrt{4v+(\sqrt{\frac{v}{2}})^2})=\sqrt{2v}=y^{A*},
\end{align*}
where the last inequality comes from $\alpha<\alpha^P<\sqrt{\frac{v}{2}}$.
This completes the proof. 

A final check about the claim that ``the social planner only prunes users with a negative utility contribution to the society'' (on page \pageref{check_u_contrib}) is as follows.\label{app:check_u_contrib} Consider the interior solution $y^{P*}=\frac{1}{2}(\alpha+\sqrt{4v+\alpha^2})$, the net utility contribution of a user at $y^{P*}$ is $\alpha y^{P*}+v-\frac{1}{2}(y^{P*})^2$ where the last term is the total negative utility this user imposes to all other users on the platform. Substituting $y^{P*}$ with $\frac{1}{2}(\alpha+\sqrt{4v+\alpha^2})$, one can find that this net utility contribution is exactly zero.

\end{proof}

\bigskip

\begin{proof}[\bf Proof of Proposition \ref{prop:social_optimal_k}]\label{pf:prop8}

We show that for any $k\in[0,\frac{1}{2}]$, the optimal social welfare $W^*_{k}$ is weakly increasing in $k$ numerically by exhausting the parameter space of $\alpha\in[0,1]$, $v\in[0,\frac{1}{2}]$, and $c\in[v,\alpha+2v]$ with an increment of 0.05. Details are provided in Appendix \ref{appdx:numerical}, especially Section \ref{app:numerical_subsc}.

Figure \ref{fig:sw_k} below illustrates the relationship between $k$ and $W^*_{k}$ when $\alpha=0.2$, $v=0.25$, and $c=0.3$. A similar pattern can be seen for all other  combinations of $(\alpha,v,c)$.

\begin{figure}[h]
\caption{Optimal social welfare $W^*_{k}$ and technology accuracy $k$ ($v=0.25, \alpha=0.2$, $c=0.3$)}\label{fig:sw_k}
  \centering
   \includegraphics[width=0.5\textwidth]{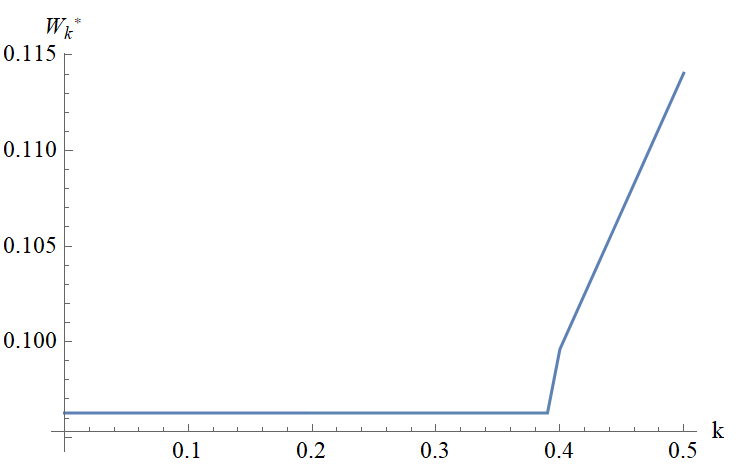}
\end{figure}

\end{proof}

\section{Equilibrium with Imperfect Technology}\label{appdx:impeft_ai}

\subsection{Advertising Revenue Model}\label{appdx:impeft_ai_ad}

Lemma \ref{lem:config_imperfectAI} characterizes the user equilibrium under advertising, given a content moderation policy $y$.

\begin{lemma}\label{lem:config_imperfectAI}
For an ad-supported platform, when $k\in[0,\frac{1}{2}]$ and the platform does content moderation, there exists $x^A_{2,k}\in[0,1]$ such that all users in range $[x^A_{2,k},1]$ participate in the platform. In particular, 
\begin{equation}\label{eq:app_xa2k}
   x^A_{2,k} = \min\{1,\sqrt{\alpha^2+1+\frac{2c(1+2k)-4v}{1-2k}}-\alpha\}\equiv
    \begin{cases}
    \sqrt{\alpha^2+1+\frac{2c(1+2k)-4v}{1-2k}}-\alpha & \text{if $k<\overline{k}$;} \\
    1 & \text{if $k\geq\overline{k}$,} 
  \end{cases}
\end{equation}
where $\overline{k}=\frac{\alpha+2v-c}{2(\alpha+c)}$. 
For any $y\in[0,1]$
, there exists $x^A_{1,k}(y)\in[0,y]$ such that if $y<x^A_{2,k}$, the user set of the platform $\mathcal{X}^A$ is $[x^A_{1,k}(y),y]\cup[x^A_{2,k},1]$; if $y\geq x^A_{2,k}$, $\mathcal{X}^A$ is $[x^A_{1,k}(y),1]$. In particular,
\begin{equation}\label{eq:app_lemma2}
   x^A_{1,k}(y) = 
    \begin{cases}
    \sqrt{\alpha^2+\max\big\{0,y^2+\min\{2\alpha y,\frac{(1-2k)(2c+1-(x^A_{2,k})^2)-4v}{1+2k}\}\big\}}-\alpha & \text{if $y<x^A_{2,k}$;} \\
    \sqrt{\alpha^2+\max\big\{0,y^2+\min\{2\alpha y,\frac{(1-2k)(2c+1-y^2)-4v}{1+2k}\}\big\}}-\alpha & \text{if $y\geq x^A_{2,k}$.}
  \end{cases}
\end{equation}

Furthermore, $x_{2,k}^A$ is increasing in $k$ and $x_{1,k}^A(y)$ is decreasing in $k$.

\end{lemma}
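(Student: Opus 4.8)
\emph{Setup and within-region monotonicity.} The plan is to substitute the technology $q_k$ from \eqref{eq:qk} into the utility \eqref{eq:utility_fcn_2} and read off that $U$ is piecewise quadratic: on the moderate region $x\le y$ a post survives with probability $\tfrac12+k$, while on the extreme region $x>y$ it survives with probability $\tfrac12-k$. First I would repeat the pairwise comparison of Lemma \ref{lem:config} separately on each region. For $x_1<x_2$ both in the same region, $U(x_2)-U(x_1)$ equals $\alpha(x_2-x_1)$ times the relevant survival probability plus the nonnegative reading disutility saved on the content in $(x_1,x_2]$, so $U$ is strictly increasing on each region. Consequently the participants inside each region form a \emph{top} subinterval, which is what lets me describe the whole configuration by two cutoffs.

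\emph{The extreme cutoff $x^A_{2,k}$.} Because $U$ increases on the extreme region, the most extreme user is decisive there, with $U(1)=\alpha(\tfrac12-k)-c(\tfrac12+k)+v$; this is nonnegative exactly when $k\le\overline k=\tfrac{\alpha+2v-c}{2(\alpha+c)}$. When $k<\overline k$ I would posit that the extreme participants are $[x^A_{2,k},1]$ and solve $U(x^A_{2,k})=0$, where the marginal user's only disutility is $\int_{x^A_{2,k}}^{1}\tilde x(\tfrac12-k)\,d\tilde x$; completing the square yields the closed form in \eqref{eq:app_xa2k}, and when $k\ge\overline k$ no extreme user participates, so $x^A_{2,k}=1$. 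Monotonicity here is the easy computation: writing the bracketed term as $h(k)=\tfrac{2c(1+2k)-4v}{1-2k}$, one gets $h'(k)=\tfrac{8(c-v)}{(1-2k)^2}>0$ using $c>v$, so $x^A_{2,k}$ is increasing and (being pinned at $1$ beyond $\overline k$, with matching value at $\overline k$) is globally nondecreasing in $k$.

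\emph{The configuration dichotomy and $x^A_{1,k}$.} Let $g(x)=\alpha x(\tfrac12-k)-c(\tfrac12+k)+v-(\tfrac12-k)\tfrac{1-x^2}{2}$ be the utility of a user at $x$ in the extreme region whose above-content is exactly the tail $(x,1]$; then $g$ is increasing and $g(x^A_{2,k})=0$ by the definition of the cutoff. If $y<x^A_{2,k}$, within-region monotonicity together with $U(x^A_{2,k})=0$ forces $U(x)<0$ for every $x\in(y,x^A_{2,k})$, creating a genuine gap; the moderate participants are $[x^A_{1,k}(y),y]$, and I solve $U(x^A_{1,k}(y))=0$ with disutility from moderate content on $(x^A_{1,k}(y),y]$ at rate $\tfrac12+k$ and extreme content on $[x^A_{2,k},1]$ at rate $\tfrac12-k$, which after completing the square gives the first branch of \eqref{eq:app_lemma2}. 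If instead $y\ge x^A_{2,k}$, the bottom extreme user sits at $y$ with utility $g(y)\ge g(x^A_{2,k})=0$, so the gap closes into a contiguous set $[x^A_{1,k}(y),1]$; now the extreme content seen by the marginal moderate user starts at $y$, which is exactly the replacement of $(x^A_{2,k})^2$ by $y^2$ in the second branch. The outer $\max\{0,\cdot\}$ and inner $\min\{2\alpha y,\cdot\}$ are then the clamps enforcing $x^A_{1,k}(y)\in[0,y]$: the inner cap forces $x^A_{1,k}(y)=y$ (no moderate participation) when the unconstrained value would exceed $y$, and the outer floor forces $x^A_{1,k}(y)=0$ when the whole moderate region participates. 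I would close this step by verifying the fulfilled-expectations condition $\mathcal X=\hat{\mathcal X}$, i.e. that posited nonparticipants get $U<0$ and participants $U\ge0$.

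\emph{Monotonicity of $x^A_{1,k}$ --- the hard part.} I expect the real obstacle to be showing $x^A_{1,k}(y)$ is decreasing in $k$, because $k$ enters three ways at once: through the survival rates, through the $k$-dependent cutoff $x^A_{2,k}$ sitting inside the first branch, and through the branch switch and clamps. My plan is the implicit function theorem on $U(x^A_{1,k}(y);k)=0$: since $U$ is increasing in its first argument, $\sign\big(\tfrac{d x^A_{1,k}}{dk}\big)=-\sign\big(\partial_k U\big)$, so it suffices to prove $\partial_k U>0$. A direct differentiation leaves an ambiguous sign; the trick that makes everything transparent is to eliminate the term $\tfrac{y^2-(x^A_{1,k})^2}{2}$ using the defining equation $U(x^A_{1,k})=0$ itself. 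After this substitution the second branch collapses to $\partial_k U=\tfrac{\,c-v+(1-y^2)/2\,}{1/2+k}>0$, which is positive precisely because $c>v$. In the first branch the same substitution, now also using $U(x^A_{2,k})=0$ and $\tfrac{d x^A_{2,k}}{dk}\ge0$, reduces $\partial_k U$ to $\tfrac{(1-2k)(c+\alpha x^A_{2,k})}{1+2k}$ plus the nonnegative drop in extreme-content disutility $\tfrac{1-(x^A_{2,k})^2}{2}+(\tfrac12-k)x^A_{2,k}\tfrac{d x^A_{2,k}}{dk}$, so $\partial_k U>0$ there as well. Finally I would note that the clamps preserve monotonicity and that the two branches agree at $y=x^A_{2,k}$ (both use $y^2$ there), so $x^A_{1,k}(y)$ is globally nonincreasing in $k$, completing the proof.
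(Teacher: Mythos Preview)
Your proposal is correct and, for the configuration and for $x^A_{2,k}$, follows exactly the paper's route: write $U$ in the two regions, show within-region monotonicity as in Lemma~\ref{lem:config}, check $U(1)\ge 0\iff k\le\overline k$, and solve the marginal-user equations to obtain the closed forms with the $\max/\min$ clamps enforcing $x^A_{1,k}(y)\in[0,y]$.

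The one genuine methodological difference is the monotonicity of $x^A_{1,k}(y)$ in $k$. The paper dispatches this in one line by staring at the explicit formula \eqref{eq:app_lemma2} and noting that $(1-2k)$ is decreasing and $(1+2k)$ is increasing; that is adequate for the second branch, but in the first branch $x^A_{2,k}$ itself depends on $k$, so the paper's one-liner is really shorthand for the chain-rule check you carry out. Your implicit-function approach is more explicit on this point: differentiating $U(x^A_{1,k};k)=0$, eliminating $\tfrac{y^2-(x^A_{1,k})^2}{2}$ via the defining equation, and in the first branch also invoking $U(x^A_{2,k})=0$, yields your positive expression $\tfrac{(1-2k)(c+\alpha x^A_{2,k})}{1+2k}+\tfrac{1-(x^A_{2,k})^2}{2}+(\tfrac12-k)x^A_{2,k}\,\tfrac{d x^A_{2,k}}{dk}$ (and $\tfrac{c-v+(1-y^2)/2}{1/2+k}$ in the second), which makes the sign transparent using only $c>v$ and $\tfrac{d x^A_{2,k}}{dk}\ge 0$. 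What you gain is a rigorous treatment of the indirect $k$-dependence through $x^A_{2,k}$; what the paper gains is brevity. Either argument suffices, and you labeled as ``the hard part'' something the authors treat as immediate---but your caution is justified, since the first branch really does need the extra bookkeeping your substitution supplies.
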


\begin{proof}[\bf Proof of Lemma \ref{lem:config_imperfectAI}]

Based on Equations (\ref{eq:qk}) and (\ref{eq:utility_fcn_2}), we have
\begin{equation}
    U(x) = 
    \begin{cases}
    \alpha x(\frac{1}{2}+k)-c(\frac{1}{2}-k)+v-\int_{\tilde{x}\in\hat{\mathcal{X}},x<\tilde{x}\leq{y}}\tilde{x}(\frac{1}{2}+k)d\tilde{x}-\int_{\tilde{x}\in\hat{\mathcal{X}},\tilde{x}>{y}}\tilde{x}(\frac{1}{2}-k)d\tilde{x} & \text{if $x
    \leq y$;} \\
    \alpha x(\frac{1}{2}-k)-c(\frac{1}{2}+k)+v-\int_{\tilde{x}\in\hat{\mathcal{X}},\tilde{x}>{x}}\tilde{x}(\frac{1}{2}-k)d\tilde{x} & \text{if $x>y$.} 
  \end{cases}
\end{equation}

Note that there is a discontinuity at $x=y$: $U(y^-)$>$U(y^+)$ as long as $k>0$. Also, similar to what is shown in the proof of Lemma \ref{lem:config}, $U(x)$ is increasing in $x$ on $[0,y]$ and also increasing in $x$ on $(y,1]$. Therefore, there can be possibly two segments of participating users with $U(x)>0$:  the moderate users $[x_{1,k}^A,y]$ and the extreme users  $[x_{2,k}^A,1]$.

First consider the extreme users in $(y,1]$. 
If $U(1)<0$,  no users in $(y,1]$ participate in the platform since $U(x)$ is increasing in $x$. $U(1)=\alpha(\frac{1}{2}-k)-c(\frac{1}{2}+k)+v<0$ is equivalent to $k>\frac{\alpha+2v-c}{2(\alpha+c)}=\overline{k}$. Otherwise, when $k\leq\overline{k}$, all users in $[x_{2,k}^A,1]$ participate, where $x_{2,k}^A$ solves $U(x_{2,k}^A)=\alpha x_{2,k}^A(\frac{1}{2}-k)-c(\frac{1}{2}+k)+v-\int_{x_{2,k}^A}^1\tilde{x}d\tilde{x}=0$, which gives $x_{2,k}^A=\sqrt{\alpha^2+1+\frac{2c(1+2k)-4v}{1-2k}}-\alpha$. Therefore, Equation (\ref{eq:app_xa2k}) holds.

For the moderate users in $[0,y]$, $x_{1,k}^A$ can be given by the condition $U(x_{1,k}^A)=0$ if the solution to this condition is interior ($0<x_{1,k}^A<y$). Denote the interior solution as $\tilde{x_{1,k}^A}$. Depending on whether $y<x_{2,k}^A$ or $y\geq x_{2,k}^A$, the condition $U(\tilde{x_{1,k}^A})=0$  is given by
\begin{equation}\label{eq:app_x1k}
    \begin{cases}
   \alpha \tilde{x_{1,k}^A}(\frac{1}{2}+k)-c(\frac{1}{2}-k)+v-\int_{\tilde{x_{1,k}^A}}^{y}\tilde{x}(\frac{1}{2}+k)d\tilde{x}-\int_{x_{2,k}^A}^1\tilde{x}(\frac{1}{2}-k)d\tilde{x}=0 & \text{if $y<x^A_{2,k}$;} \\
    \alpha \tilde{x_{1,k}^A}(\frac{1}{2}+k)-c(\frac{1}{2}-k)+v-\int_{\tilde{x_{1,k}^A}}^{y}\tilde{x}(\frac{1}{2}+k)d\tilde{x}-\int_{y}^1\tilde{x}(\frac{1}{2}-k)d\tilde{x}=0 & \text{if $y\geq x^A_{2,k}$.}
  \end{cases}
\end{equation}
Solving Equation (\ref{eq:app_x1k}), we have
\begin{equation}\label{eq:app_xiksln}
 \tilde{x_{1,k}^A} = 
    \begin{cases}
    \sqrt{\alpha^2+y^2+\frac{(1-2k)(2c+1-(x^A_{2,k})^2)-4v}{1+2k}}-\alpha & \text{if $y<x^A_{2,k}$;} \\
    \sqrt{\alpha^2+y^2+\frac{(1-2k)(2c+1-y^2)-4v}{1+2k}}-\alpha & \text{if $y\geq x^A_{2,k}$.}
  \end{cases}
\end{equation}
If $0<\tilde{x_{1,k}^A}<y$, $x_{1,k}^A=\tilde{x_{1,k}^A}$. If $\tilde{x_{1,k}^A}\leq0$ or $\tilde{x_{1,k}^A}\geq y$, corner solutions apply. I.e.,
\begin{equation}\label{eq:app_xiksln2}
{x_{1,k}^A(y)} = 
    \begin{cases}
    0 & \text{if $\tilde{x_{1,k}^A}\leq0$;} \\
    \tilde{x_{1,k}^A} & \text{if $0<\tilde{x_{1,k}^A}<y$;} \\
    y & \text{if $\tilde{x_{1,k}^A}\geq y$.}
  \end{cases}
\end{equation}
Rewriting Equations (\ref{eq:app_xiksln}) and (\ref{eq:app_xiksln2}) in a dense format obtains Equation (\ref{eq:app_lemma2}).

With Equations (\ref{eq:app_xa2k}) and (\ref{eq:app_lemma2}), we see immediately that  $x_{2,k}^A$ is increasing in $k$ and $x_{1,k}^A(y)$ is decreasing in $k$, since $(1-2k)$ is decreasing in $k$ and $(1+2k)$ is increasing in $k$.

\end{proof}

The following lemma further investigates the platform's optimal content moderation strategy given users' response.

\begin{lemma}\label{lem:cm_k}
{\bf \em } 
Let $\hat{y_k^A}\equiv\sqrt{\max\{0,\frac{4v-(1-2k)(2c+1-(x_{2,k}^A)^2)}{1+2k}\}}$. The optimal level of content moderation under advertising revenues ($y^{A*}_k$) can be characterized by the following:
\begin{enumerate}[label=\bfseries Step \arabic*:,leftmargin=*,labelindent=1em]
    \item [Case 1.] If $k\geq\overline{k}$, then $x_{2,k}^A=1$, and $y^{A*}_k$ is (a) $1$ if $k<k_1^A$, and (b) $\hat{y_k^A}$ if $k\geq k_1^A$. 
    \item [Case 2.] If $k<\overline{k}$ and $\hat{y_k^A}<x_{2,k}^A$, then $x_{2,k}^A<1$, and $y^{A*}_k$ is (a) $x_{2,k}^A$ if $k<k_2^A$, and (b) $\hat{y_k^A}$ if $k\geq k_2^A$. 
    \item [Case 3.] If $k<\overline{k}$ and $\hat{y_k^A}\geq x_{2,k}^A$, then the market can be fully covered with any $y^{A*}_k\in [x_{2,k}^A,\hat{y_k^A}]$,
\end{enumerate}
where $k_1^A, k_2^A$ are constant.

\end{lemma}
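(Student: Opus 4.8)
The plan is to treat the platform's problem as maximizing the user-base size $N(y)$, since under advertising $\pi^A_k=\zeta N(y)$. Using Lemma~\ref{lem:config_imperfectAI} I would write $N$ piecewise: for $y<x_{2,k}^A$ the base is the disjoint union $[x_{1,k}^A(y),y]\cup[x_{2,k}^A,1]$, so $N(y)=(y-x_{1,k}^A(y))+(1-x_{2,k}^A)$ with the second term constant in $y$; for $y\ge x_{2,k}^A$ it is the single interval $[x_{1,k}^A(y),1]$, so $N(y)=1-x_{1,k}^A(y)$. The function $N$ is continuous at $y=x_{2,k}^A$ by the construction in Lemma~\ref{lem:config_imperfectAI}, so the whole proof reduces to the monotonicity of $x_{1,k}^A(y)$ on each branch.

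First I would reduce the left-branch behaviour to the sign of one scalar. On the interior of $y<x_{2,k}^A$ one has $x_{1,k}^A(y)=\sqrt{\alpha^2+y^2+C_k}-\alpha$ with the constant $C_k:=\frac{(1-2k)(2c+1-(x_{2,k}^A)^2)-4v}{1+2k}$, whence $\frac{d}{dy}\bigl(y-x_{1,k}^A(y)\bigr)=1-y/\sqrt{\alpha^2+y^2+C_k}$ has the sign of $\alpha^2+C_k$. Moreover $x_{1,k}^A(y)=0$ exactly at $y=\hat{y_k^A}=\sqrt{-C_k}$ (when $C_k<0$) and $x_{1,k}^A(y)=y$ once $C_k>2\alpha y$. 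Hence the moderate-segment size equals $y$ for $y\le\hat{y_k^A}$ and, for larger $y$, increases if $\alpha^2+C_k\ge0$ and decreases if $\alpha^2+C_k<0$, peaking in the latter case at $y=\hat{y_k^A}$ with value $\hat{y_k^A}$. On the right branch a direct computation gives $x_{1,k}^A(y)=\sqrt{\alpha^2+\frac{4ky^2+(1-2k)(2c+1)-4v}{1+2k}}-\alpha$ on its interior, which is strictly increasing for $k>0$, so $N$ is strictly decreasing there and flat ($N\equiv1$, full coverage) only on the initial sub-interval where $x_{1,k}^A(y)=0$. Thus the only candidate maximizers are $y=1$, $y=x_{2,k}^A$, $y=\hat{y_k^A}$, and any point of a full-coverage plateau.

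Then I would split along the stated case structure. In Case~1 ($k\ge\overline k$, so $x_{2,k}^A=1$) the base is just the moderate segment, and by the previous paragraph the optimum is $y=1$ when $\alpha^2+C_k\ge0$ and $y=\hat{y_k^A}$ when $\alpha^2+C_k<0$; since $(1+2k)(\alpha^2+C_k)=(\alpha^2+2c-4v)+2k(\alpha^2-2c)$ is affine and decreasing in $k$ (slope $2(\alpha^2-2c)<0$, using $\alpha<\alpha^A=\sqrt{2v}<\sqrt{2c}$), the sign flips once, defining $k_1^A$. In Case~2 ($k<\overline k$, $\hat{y_k^A}<x_{2,k}^A$) the right branch contributes nothing new — it is strictly decreasing with no plateau, because $\hat{y_k^A}<x_{2,k}^A$ forces $x_{1,k}^A(x_{2,k}^A)>0$ — so the optimum is the left-branch peak: $y=x_{2,k}^A$ if $\alpha^2+C_k\ge0$ and $y=\hat{y_k^A}$ if $\alpha^2+C_k<0$. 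Here $C_k$ depends on $k$ directly and through $x_{2,k}^A$, but since $x_{2,k}^A$ is increasing in $k$ (Lemma~\ref{lem:config_imperfectAI}) and $2c+1-(x_{2,k}^A)^2>0$, every channel pushes $C_k$ down, so $\alpha^2+C_k$ again changes sign once, defining $k_2^A$.

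The one genuinely delicate point, and the step I expect to be the main obstacle, is Case~3, where I must show that the hypothesis $\hat{y_k^A}\ge x_{2,k}^A$ is exactly what guarantees full coverage for every $y\in[x_{2,k}^A,\hat{y_k^A}]$, so that all such $y$ are globally optimal (as $N\le1$ always). Full coverage on the right branch holds iff $y^2+C_k'(y)\le0$ with $C_k'(y):=\frac{(1-2k)(2c+1-y^2)-4v}{1+2k}$, i.e. iff $y\le y_{\max}:=\sqrt{D/(4k)}$ where $D:=4v-(1-2k)(2c+1)$; meanwhile $(\hat{y_k^A})^2=\frac{D+(1-2k)(x_{2,k}^A)^2}{1+2k}$. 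The crux is the algebraic identity that both $\hat{y_k^A}\ge x_{2,k}^A$ and $\hat{y_k^A}\le y_{\max}$ reduce to the single inequality $D\ge4k(x_{2,k}^A)^2$, so the case hypothesis automatically places the whole interval $[x_{2,k}^A,\hat{y_k^A}]$ inside the full-coverage region. I expect this equivalence, rather than the monotonicity bookkeeping, to be the hard part, since establishing the expression for $D$ requires substituting the first-order condition $U(x_{2,k}^A)=0$ that defines $x_{2,k}^A$ in order to eliminate the $\alpha x_{2,k}^A$ term.
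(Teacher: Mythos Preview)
Your proposal is correct and follows essentially the same route as the paper: rule out $y>x_{2,k}^A$ (since $x_{1,k}^A$ is increasing there), rule out $y<\hat{y_k^A}$ (since $N(y)=y+\text{const}$ there), and on $[\hat{y_k^A},x_{2,k}^A]$ observe that the sign of $N'(y)$ is the sign of $\alpha^2+C_k$, which flips once in $k$. Two small remarks. First, your claim that ``every channel pushes $C_k$ down'' is a bit loose: when the numerator $(1-2k)(2c+1-(x_{2,k}^A)^2)-4v$ is negative, the $(1+2k)$-denominator channel actually pushes up; nevertheless $dC_k/dk<0$ does hold (the cross terms dominate because $c>v$ and $x_{2,k}^A\le1$), and the paper's own justification---``$x_{2,k}^A$ is increasing in $k$ and thus $L(k)$ is decreasing''---has the same slack. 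Second, your worry that Case~3 requires the defining equation $U(x_{2,k}^A)=0$ is misplaced: your equivalence $\hat{y_k^A}\ge x_{2,k}^A\Leftrightarrow D\ge 4k(x_{2,k}^A)^2\Leftrightarrow \hat{y_k^A}\le y_{\max}$ is pure algebra in $x_{2,k}^A$ and needs nothing about how $x_{2,k}^A$ is determined. In fact the paper treats Case~3 more informally (asserting ``every user with $x\le y$ participates when $y\le\hat{y_k^A}$'' without checking that the right-branch zero-threshold $y_{\max}$ indeed dominates $\hat{y_k^A}$), so your argument there is the more careful of the two.
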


In Case 3, there are multiple maximizers. As a tie-breaking rule, we assume that the platform will choose the lowest $y^{A*}_k=x_{2,k}^A$ (the most strict policy) to make the platform as moderate as possible.

\begin{proof}[\bf Proof of Lemma \ref{lem:cm_k}]

Since Case 1 ($x_{2,k}^A=1$) is just a special case of Case 2, we only need to show the following: When $\hat{y_k^A}<x_{2,k}^A$, $y^{A*}_k=x_{2,k}^A$ if $k<k_0$ and $y^{A*}_k=\hat{y_k^A}$ if $k>k_0$, where $k_0$ is a constant; when $\hat{y_k^A}\geq x_{2,k}^A$,  any $y^{A*}_k\in [x_{2,k}^A,\hat{y_k^A}]$ can make the market fully covered ($X^{A*}=1$).

If $y> x_{2,k}^A$, then
\begin{align*}
    x_{1,k}^A&=\sqrt{\alpha^2+\max\{0,y^2+\min\{2\alpha y,\frac{(1-2k)(2c+1-y^2)-4v}{1+2k}\}\}}-\alpha\\
    &=\sqrt{\alpha^2+\max\{0,\min\{y(y+2\alpha),\frac{(2c+1)(1-2k)+4ky^2-4v}{1+2k}\}\}}-\alpha
\end{align*}
is increasing in $y$, and thus the user base $1-x_{1,k}^A$ is decreasing in $y$, so any $y>x_{2,k}^A$ cannot be an optimal choice.

If $y<x_{2,k}^A$, note that $\tilde{x_{1,k}^A}\geq0$ is equivalent to $y\geq \hat{y_k^A}$ where $\hat{y_k^A}=\sqrt{\frac{4v-(1-2k)(2c+1-(x_{2,k}^A)^2)}{1+2k}}$ is determined through solving for $y$ from $\tilde{x_{1,k}^A}=0$ (which is also equivalent to $U(0)=0$, by definition of $\tilde{x_{1,k}^A}$).\footnote{If there is no real number solution to this equation,  we set $\hat{y_k^A}=0$ without loss of generality.}
Any $y<\hat{y_k^A}$ also cannot be an optimal choice because if $y<\hat{y_k^A}$ then ${x_{1,k}^A}=0$ so the user base is just $y+1-x_{2,k}^A$ which is increasing in $y$.

Therefore, we only consider $y\in[\hat{y_k^A},x_{2,k}^A]$. In this case,
$\frac{\partial \pi^{A}_k}{\partial y}=\frac{\partial \zeta(y-x_{1,k}^A(y)+1-x_{2,k}^A)}{\partial y}=\zeta\Big(1-\frac{y}{\sqrt{\alpha^2+y^2+\frac{(1-2k)(2c+1-(x^A_{2,k})^2)-4v}{1+2k}}}\Big)=:\zeta(1-\frac{y}{L(k)})$, where $$L(k)\equiv\sqrt{\alpha^2+y^2+\frac{(1-2k)(2c+1-(x^A_{2,k})^2)-4v}{1+2k}}.$$
Note that $x_{2,k}^A$ is increasing in $k$ and thus $L(k)$ is decreasing in $k$. Therefore, $\frac{\partial \pi^{A}_k}{\partial y}$ is decreasing in $k$. 
Furthermore, $\frac{\partial \pi^{A}_k}{\partial y}|_{k=\frac{1}{2}}<0$ because $L(\frac{1}{2})=\sqrt{y^2+\alpha^2-2v}<y$ since $\alpha<\alpha^S<\alpha^A=\sqrt{2v}$.
 $\frac{\partial \pi^{A}_k}{\partial y}|_{k=0}>0$ because
\begin{align*}
    L(0)&=\sqrt{y^2+\alpha^2+2c+1-4v-(x_{2,k=0}^A)^2}\\
    &=\sqrt{y^2+\alpha^2+1+2c-4v-(\sqrt{\alpha^2+1+2c-4v}-\alpha)^2}\\
    &>\sqrt{y^2+\alpha^2+1+2c-4v-(\sqrt{\alpha^2+1+2c-4v})^2}\\
    &=y.
\end{align*} 
So there exists $k_0$ such that $\frac{\partial \pi^{A}_k}{\partial y}>0$ when $k<k_0$ and $\frac{\partial \pi^{A}_k}{\partial y}<0$ when $k>k_0$. Since $y\in[\hat{y_k^A},x_{2,k}^A]$, we know that the optimal $y^{A*}_k=x_{2,k}^A$ when $k<k_0$ and $y^{A*}_k=\hat{y_k^A}$ when $k>k_0$.

Note that if $\hat{y_k^A}>x_{2,k}^A$, it simply means that the market can be fully covered by choosing any $y^{A*}_k\in[x_{2,k}^A,\hat{y_k^A}]$. This is because every user with $x\leq y$ participates when $y\leq\hat{y_k^A}$, and every user with $x\geq x_{2,k}^A$ participates regardless of the choice of $y$.

\end{proof}

Lemma \ref{lem:cm_k} indicates that unless the market is fully covered, there are generally two potential levels of content moderation that the platform can choose. As the  technology becomes more accurate, the platform tends to choose the higher level of content moderation (a smaller $y$).\footnote{Note that this statement is about the choice between two levels for a given $k$ and that it does not mean $y^{A*}$ is decreasing in $k$.} Meanwhile, it points out the possibility that the market is fully covered when the technology is imperfect, and thus the possibility that an imperfect technology may enlarge the market for the platform. 

\subsection{Subscription Revenue Model}\label{appdx:impeft_ai_subsc}

Similar to the advertising case, Lemma \ref{lem:config_imperfectAI_subsc} gives the full characterization of the user equilibrium under advertising, given a content moderation policy $y$ and subscription fee $p$. 

\begin{lemma}\label{lem:config_imperfectAI_subsc}
 Suppose the subscription fee $p$ is given. For a subscription-supported platform, when $k\in[0,\frac{1}{2}]$ and the platform does content moderation, there exists $x^S_{2,k}(p)\in(0,1]$ such that all users in $[x^S_{2,k}(p),1]$ participate in the platform. In particular, 
\begin{equation}\label{app:x2ksp}
   x^S_{2,k}(p) = \min\{1,\sqrt{\alpha^2+1+\frac{2c(1+2k)-4(v-p)}{1-2k}}-\alpha\}.
\end{equation}
Furthermore, for any $y\in[0,1]$, there exists $x^S_{1,k}(y,p)\in[0,y]$ such that if $y<x^S_{2,k}(p)$, the user set of the platform $\mathcal{X}^S$ is $[x^S_{1,k}(y,p),y]\cup[x^S_{2,k}(p),1]$; if $y\geq x^S_{2,k}(p)$, $\mathcal{X}^S$ is $[x^S_{1,k}(y,p),1]$. In particular,
\begin{equation}\label{app:x1ksp}
   x^S_{1,k}(y,p) = 
    \begin{cases}
    \sqrt{\alpha^2+\max\big\{0,y^2+\min\{2\alpha y,\frac{(1-2k)(2c+1-(x^S_{2,k}(p))^2)-4(v-p)}{1+2k}\}\big\}}-\alpha & \text{if $y<x^S_{2,k}(p)$;} \\
    \sqrt{\alpha^2+\max\big\{0,y^2+\min\{2\alpha y,\frac{(1-2k)(2c+1-y^2)-4(v-p)}{1+2k}\}\big\}}-\alpha & \text{if $y\geq x^S_{2,k}(p)$.}
  \end{cases}
\end{equation}
\end{lemma}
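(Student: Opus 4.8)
\medskip

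The plan is to reduce this lemma to its advertising counterpart, Lemma~\ref{lem:config_imperfectAI}, by a single change of variable. Under subscription a user at $x$ participates exactly when $U(x)-p\geq 0$, where $U(x)$ is the \emph{same} per-user utility as in equation~(\ref{eq:utility_fcn_2}): posting and reading utilities do not depend on the revenue model, so only the participation threshold shifts by the fee $p$. Writing out $U(x)$ as in the opening of the proof of Lemma~\ref{lem:config_imperfectAI}, the baseline reading utility $v$ appears only as an additive constant, independent of $x$ and of the conjectured participant set $\hat{\mathcal{X}}$. Hence the net utility $\tilde U(x)\equiv U(x)-p$ has \emph{exactly} the functional form of $U(x)$ with $v$ replaced everywhere by $v-p$, and the participation condition becomes $\tilde U(x)\geq 0$.

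Given this, I would transfer the entire argument of Lemma~\ref{lem:config_imperfectAI} verbatim under the substitution $v\mapsto v-p$. First I would record the two monotonicity facts, proved exactly as in Lemmas~\ref{lem:config} and~\ref{lem:config_imperfectAI}: $\tilde U$ is increasing on $[0,y]$ and separately on $(y,1]$, with a downward jump at $x=y$ whenever $k>0$ (the constants $v$ and $p$ cancel in any difference $\tilde U(x_1)-\tilde U(x_2)$, so they do not affect these signs). This forces the equilibrium participant set to be at most two intervals, a moderate block $[x^S_{1,k}(y,p),y]$ and an extreme block $[x^S_{2,k}(p),1]$, and the rational-expectations fixed point $\mathcal{X}^S=\hat{\mathcal{X}}^S$ closes as before. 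The extreme threshold then solves $\tilde U(x^S_{2,k}(p))=0$ on $(y,1]$; this is the advertising equation for $x^A_{2,k}$ with $v\mapsto v-p$, and solving it reproduces~(\ref{app:x2ksp}), with the cap $\min\{1,\cdot\}$ recording the case in which even the most extreme user has nonnegative net utility. Likewise the moderate threshold solves $\tilde U(x^S_{1,k}(y,p))=0$ on $[0,y]$, where the surviving-extreme-content integral runs over $[x^S_{2,k}(p),1]$ when $y<x^S_{2,k}(p)$ and over $[y,1]$ when $y\geq x^S_{2,k}(p)$; substituting $v\mapsto v-p$ gives the interior root, and the identical corner-solution bookkeeping (clipping that root below at $0$ and above at $y$) yields the nested $\max\{0,\cdot\}$ and $\min\{2\alpha y,\cdot\}$ structure of~(\ref{app:x1ksp}).

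The one genuinely new point---and the step I expect to require the most care---is the sharpened range claim $x^S_{2,k}(p)\in(0,1]$, as opposed to the advertising interval $[0,1]$. Because the platform charges a strictly positive fee $p>0$, we have $v-p<v$, so the argument of the square root in~(\ref{app:x2ksp}) is strictly larger than in the advertising formula; hence $x^S_{2,k}(p)\geq x^A_{2,k}\geq 0$, with the first inequality strict whenever the cap is not binding, which bounds the extreme block strictly away from the origin. I would verify this inequality directly after the substitution and confirm that the capped formula always lands in $(0,1]$. With that check in place, every remaining manipulation is a term-by-term copy of the proof of Lemma~\ref{lem:config_imperfectAI} with $v-p$ playing the role of $v$, so no further work is needed.
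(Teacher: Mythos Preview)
Your proposal is correct and follows exactly the paper's own approach: the paper's proof of this lemma is a single sentence stating that it is the same as the proof of Lemma~\ref{lem:config_imperfectAI} with $v$ replaced by $v-p$. Your write-up is more detailed than the paper's---in particular, the paper does not separately justify the sharpened range $(0,1]$---but the substance is identical.
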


\begin{proof}[\bf Proof of Lemma \ref{lem:config_imperfectAI_subsc}]
The proof of Lemma \ref{lem:config_imperfectAI_subsc} is the same as that of Lemma \ref{lem:config_imperfectAI} except that we change $v$ to $v-p$.
\end{proof}

Given the response of users, we next derive the optimal content moderation policy $y^{S*}$ and pricing $p^{S*}$. 

The following lemma describes the optimal level of content moderation, $y^{S*}(p)$, for any given $p$.

\begin{lemma}\label{lem:imperfectAI_y_eq}
Let $\hat{y_k^S}(p)\equiv\sqrt{\max\{0,{\frac{4(v-p)-(1-2k)(2c+1-(x_{2,k}^S(p))^2)}{1+2k}}\}}$. The optimal level of content moderation under subscription revenues ($y^{S*}_k(p)$) can be characterized by the following:
\begin{enumerate}[label=\bfseries Step \arabic*:,leftmargin=*,labelindent=1em]
    \item [Case 1.] If $p>p_{1,k}$, then $x_{2,k}^S(p)=1$, and $y^{S*}_k(p)$ is (a) $1$ if $k<k_1^S$, and (b) $\hat{y_k^S}(p)$ if $k\geq k_1^S$. 
    \item [Case 2.] If $p_{2,k}<p\leq p_{1,k}$, then $x_{2,k}^S(p)<1$, and $y^{S*}_k(p)$ is (a) $x_{2,k}^S(p)$ if $k<k_2^S$, and (b) $\hat{y_k^S}(p)$ if $k\geq k_2^S$. 
    \item [Case 3.] If $p\leq p_{2,k}$, then the market can be fully covered with any $y^{S*}_k(p)\in [x_{2,k}^S(p),\hat{y_k^S}(p)]$.
\end{enumerate}
where $k_1^S, k_2^S$ are constants, and $p_{1,k},p_{2,k}$ are constants when $k$ is given.
\end{lemma}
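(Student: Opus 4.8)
The plan is to reduce the subscription problem, for a fixed price $p$, to the advertising problem already solved in Lemma~\ref{lem:cm_k} via the substitution $v\mapsto v-p$. Since $p$ is held fixed and enters the revenue only as a positive multiplicative constant, $\pi^S_k=p\big(1-x^S_{2,k}(p)+y-x^S_{1,k}(y,p)\big)$ is maximized over $y$ at exactly the $y$ that maximizes the user base $X^S_k(y)=1-x^S_{2,k}(p)+y-x^S_{1,k}(y,p)$. By Lemma~\ref{lem:config_imperfectAI_subsc} the expressions in (\ref{app:x2ksp}) and (\ref{app:x1ksp}) coincide with the advertising quantities (\ref{eq:app_xa2k}) and (\ref{eq:app_lemma2}) once $v$ is replaced by $v-p$. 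Hence maximizing $X^S_k(y)$ over $y$ is literally the advertising maximization of Lemma~\ref{lem:cm_k} with $v-p$ in place of $v$, and the three-way structure of that lemma transfers intact.

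First I would translate the advertising case boundaries, which were phrased through $\overline k$ (a threshold on $k$), into price thresholds, because under $v\mapsto v-p$ the governing quantity $v-p$ now moves with $p$ at fixed $k$. Solving $x^S_{2,k}(p)=1$ in (\ref{app:x2ksp}) gives
\[
p_{1,k}=v+\tfrac12\big(\alpha(1-2k)-c(1+2k)\big),
\]
so that $x^S_{2,k}(p)=1$ (the extreme segment vanishes) precisely when $p>p_{1,k}$ (Case~1), and $x^S_{2,k}(p)<1$ when $p\le p_{1,k}$ (Cases~2--3). Solving $\hat{y_k^S}(p)=x^S_{2,k}(p)$ yields the second threshold $p_{2,k}$, for which no closed form is needed: $p\le p_{2,k}$ gives $\hat{y_k^S}(p)\ge x^S_{2,k}(p)$, the fully-covered situation of Case~3, while $p_{2,k}<p\le p_{1,k}$ gives $\hat{y_k^S}(p)<x^S_{2,k}(p)$, i.e.\ Case~2.

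Next, within Cases~1 and~2 I would reproduce the interior argument of Lemma~\ref{lem:cm_k}. Any $y>x^S_{2,k}(p)$ makes $x^S_{1,k}(y,p)$ increasing in $y$ and hence shrinks the base, and any $y<\hat{y_k^S}(p)$ forces $x^S_{1,k}(y,p)=0$ so that the base $y+1-x^S_{2,k}(p)$ is increasing in $y$; thus the only candidates lie in $[\hat{y_k^S}(p),x^S_{2,k}(p)]$. On this interval
\[
\frac{\partial X^S_k}{\partial y}=1-\frac{y}{L_S(k)},\qquad L_S(k)=\sqrt{\alpha^2+y^2+\frac{(1-2k)\big(2c+1-(x^S_{2,k}(p))^2\big)-4(v-p)}{1+2k}}.
\]
Since $x^S_{2,k}(p)$ is increasing in $k$ (Lemma~\ref{lem:config_imperfectAI_subsc}), $L_S(k)$ is decreasing in $k$, exactly as in the advertising case, so the sign of the marginal base flips at most once as $k$ grows. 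This produces the single threshold $k_1^S$ in Case~1 (where $x^S_{2,k}(p)=1$, so the upper endpoint is $y=1$) and $k_2^S$ in Case~2 (upper endpoint $y=x^S_{2,k}(p)$): below the threshold the derivative is positive and the optimum sits at the upper endpoint, option~(a), while above it the optimum is $\hat{y_k^S}(p)$, option~(b). Case~3 follows the template of Lemma~\ref{lem:cm_k} verbatim: every $x\le y$ participates once $y\le\hat{y_k^S}(p)$ and every $x\ge x^S_{2,k}(p)$ participates regardless of $y$, so any $y\in[x^S_{2,k}(p),\hat{y_k^S}(p)]$ covers the market.

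The delicate step is confirming the single-crossing of $\partial X^S_k/\partial y$ in $k$ uniformly over the admissible price range. The advertising proof signed this derivative negative at $k=\tfrac12$ using $\alpha<\alpha^A=\sqrt{2v}$; the subscription analog needs $\alpha<\sqrt{2(v-p)}$, which can fail when $p$ is close to $v$. The resolution is that whenever $\alpha\ge\sqrt{2(v-p)}$ the marginal base stays nonnegative on the whole interval, so the optimum is pinned at the upper endpoint for all $k$ and the threshold $k_1^S$ (resp.\ $k_2^S$) collapses to $\tfrac12$, which is still consistent with option~(a). Establishing the monotonicity of $L_S(k)$ and the uniformity of this endpoint sign pattern in $p$ within each case is where the genuine work lies; the remaining boundary bookkeeping at $p_{1,k}$ and $p_{2,k}$ and the tie-breaking in Case~3 are routine and inherited directly from the advertising analysis.
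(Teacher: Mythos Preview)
Your approach is essentially the same as the paper's: reduce to the advertising lemma via $v\mapsto v-p$, translate the $k$-threshold $\overline{k}$ into the price threshold $p_{1,k}$, and define $p_{2,k}$ by the crossing $\hat{y_k^S}(p)=x_{2,k}^S(p)$. The one piece the paper supplies that you gloss over is the justification that $p_{2,k}$ is a single well-defined threshold: it shows $x_{2,k}^S(p)$ is increasing and $\hat{y_k^S}(p)$ is decreasing in $p$, so the crossing is unique; conversely, your careful discussion of the $k$-threshold sign check at $k=\tfrac12$ (and its resolution when $\alpha\ge\sqrt{2(v-p)}$) goes beyond what the paper makes explicit.
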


\begin{proof}[\bf Proof of Lemma \ref{lem:imperfectAI_y_eq}]
The proof can be built on that of Lemma \ref{lem:cm_k}. The only difference here is that $p$ is another decision variable of the platform, so whether $x_{2,k}^S(p)<1$ depends not only on the technology accuracy $k$ but also on the pricing of the platform. Therefore, only the condition for each of the three cases will change.

If $x_{2,k}^S(p)=1$, which is equivalent to $\sqrt{\alpha^2+1+\frac{2c(1+2k)-4(v-p)}{1-2k}}-\alpha>1\Leftrightarrow p> v+\alpha(\frac{1}{2}-k)-c(\frac{1}{2}+k)$, no users with $x>y$ participate in the platform. Let
\begin{equation}\label{eq:app_p1k}
    p_{1,k}\equiv v+\alpha(\frac{1}{2}-k)-c(\frac{1}{2}+k),
\end{equation}
and we have found the condition for Case 1.

If  $x_{2,k}^S(p)<1$, let $\hat{y_k^S}(p)$ be the content moderation strategy such that $U(0)-p=0$. Solving it, we have $\hat{y_k^S}(p)=\sqrt{\frac{4(v-p)-(1-2k)(2c+1-(x_{2,k}^S(p))^2)}{1+2k}}$.\footnote{If there is no real number solution to this equation,  we set $\hat{y_k^S}(p)=0$ without loss of generality.} We then only need to check whether $\hat{y_k^S}(p)<x_{2,k}^S(p)$. If so, it corresponds to Case 2; otherwise, it corresponds to Case 3. By Equation (\ref{app:x2ksp}), we know that when $x_{2,k}^S(p)<1$, $\frac{dx_{2,k}^S(p)}{dp}=\frac{2/(1-2k)}{\sqrt{\alpha^2+1+\frac{2c(1+2k)-4(v-p)}{1-2k}}}=\frac{2/(1-2k)}{dx_{2,k}^S(p)+\alpha}>0$ so $x_{2,k}^S(p)$ is increasing in $p$. Furthermore, $\sign(\frac{d\hat{y_k^S}(p)}{dp})=\sign(\frac{d[4(v-p)-(1-2k)(2c+1-(x_{2,k}^S(p))^2)]}{dp})=\sign(-4+(1-2k)2x_{2,k}^S(p)\frac{dx_{2,k}^S(p)}{dp})=\sign(-4+4\frac{x_{2,k}^S(p)}{x_{2,k}^S(p)+\alpha})<0$, so  $\hat{y_k^S}(p)$ is decreasing in $p$. Therefore, the condition $\hat{y_k^S}(p)\lessgtr x_{2,k}^S(p)$ is equivalent to $p\gtrless p_{2,k}$, which gives the conditions for Cases 2 and 3. Solving for $p_{2,k}$ by setting $\hat{y_k^S}(p_{2,k})= x_{2,k}^S(p_{2,k})$, we have 
\begin{equation}\label{eq:app_p2k}
    p_{2,k}=v-\frac{1}{4}-\frac{c \left(12 k^2+1\right)}{2 (2 k+1)}+\frac{2 \alpha  (1-2 k) k \sqrt{8 c k (2 k+1)+\alpha ^2 (1-2 k)^2}}{(2 k+1)^2}+\frac{1}{2} k \left(1-4\alpha ^2 \left(1-\frac{8 k}{(2 k+1)^2}\right)\right).
\end{equation}

\end{proof}

We can see from Lemma \ref{lem:imperfectAI_y_eq} that similar to the advertising model case, there are two potential levels of content moderation and the one with more moderation (smaller $y$) is more preferred as $k$ increases. Also, we can see that when price is too high, there are no users more extreme than  $y$ who participate in the platform. Denote the associated profits in Cases 1(a), 1(b), 2(a), 2(b), and 3 as $\pi_k^{1a}(p)$, $\pi_k^{1b}(p)$, $\pi_k^{2a}(p)$, $\pi_k^{2b}(p)$, and $\pi_k^{3}(p)$, respectively. It is clear that
\begin{equation}\label{eq:global_max_kp}
    \pi_k^{S*}=\max\{ \max_{p>p_{1,k}}\pi_k^{1a}(p), \max_{p>p_{1,k}}\pi_k^{1b}(p), 
    \max_{p_{2,k}<p\leq p_{1,k}}\pi_k^{2a}(p), 
    \max_{p_{2,k}<p\leq p_{1,k}}\pi_k^{2b}(p), 
    \max_{p\leq p_{2,k}}\pi_k^{3}(p)\}.
\end{equation}

\section{\large Numerically Solving for Equilibrium in Imperfect Technology Case}\label{appdx:numerical}

Lemmas \ref{lem:config_imperfectAI} to \ref{lem:imperfectAI_y_eq} 
in Appendix \ref{appdx:impeft_ai} give the analytical characterization of the equilibrium in the imperfect technology case. However, to fully solve the equilibrium (especially for the subscription case) and to carry out any further analysis based on the equilibrium are  analytically challenging, since the expressions are so complicated that only implicit functions can be provided for equilibrium characterization. Note that the range of each exogenous variable in our model  is bounded ($\alpha\in[0,1]$, $v\in[0,\frac{1}{2}]$, and $c\in[v,\alpha+2v]\subset[0,2]$), so we can numerically compute the results exhaustively. For each variable, we discretize the range with a grid of 0.05, and enumerate all possible values within the given range. In other words, for $\alpha$ and $v$, we have values of 0, 0.05, 0.1, 0.15, ..., 0.95, and 1; for $c$, we have 0, 0.05, ..., 1.95, and 2. For each combination of the parameters, which is denoted as a tuple $(\alpha,v,c)$, we further check whether the following two conditions are satisfied: (1) $v\leq c\leq\alpha+2v$ and (2) $\alpha\leq\alpha^S$, which is equivalent to $\pi_S(y=\sqrt{\frac{2v}{3}})\geq\pi_S(y=1)$. We only proceed if both of them are satisfied.

Then, for any given tuple $(\alpha,v,c)$, we can find out the optimal $y^{A*}$ (in the advertising case) or $y^{S*}$ and $p^*$ (in the subscription case) as a function of $k$. For $k$, we also enumerate all possible values between $[0,\frac{1}{2}]$, with a grid of 0.01. To make the search algorithm more efficient, we leverage the analytical results in Appendix \ref{appdx:impeft_ai}. 

\subsection{Advertising}\label{app:numerical_ad}

For the advertising case, given any $(\alpha,v,c)$ and $k$, we can directly get the optimal $y^{A*}_k$ based on the analytical solution provided by Lemma \ref{lem:cm_k}.
Based on Lemma \ref{lem:cm_k}, we know that the optimal content moderation policy $y^{A*}_k$ is either 
$$y^{A*}_k=x^A_{2,k} \equiv \min\{1,\sqrt{\alpha^2+1+\frac{2c(1+2k)-4v}{1-2k}}-\alpha\},$$
or
$$y^{A*}_k=\hat{y_k^A}\equiv\sqrt{\max\{0,\frac{4v-(1-2k)(2c+1-(x_{2,k}^A)^2)}{1+2k}\}},$$
whichever gives the largest user base. The user base is calculated as $X^{A*}\equiv 1-x^A_{2,k}+y^{A*}-x^A_{1,k}(y^{A*})$, where the expression for $x^A_{1,k}(y)$ is given by Equation (\ref{eq:app_lemma2}).\footnote{In Case 3 of Lemma \ref{lem:cm_k}, there are multiple maximizers. Based on the tie-breaking rule, we know that if $x_{2,k}^A$ and $\hat{y_k^A}$ induce the same profit for the platform, it chooses $y^{A*}_k=\min\{x_{2,k}^A,\hat{y_k^A}\}$.}
We thus also find out the equilibrium user configuration (i.e., $x_{1,k}^A$ and $x_{2,k}^A$) and the equilibrium profit (user base) of the platform when it conducts content moderation. Note that we also need to compare this optimal user base ($X^{A*}_k$) when conducting content moderation with the user base when no moderation is conducted at all (denoted as $X^A_0$), to see whether the equilibrium strategy is $y^{A*}_k$ or simply no content moderation. The technology accuracy does not matter when no content moderation is conducted, so the user base $$X^A_0=1+\alpha-\sqrt{\alpha^2+1-2v},$$
which is given in Proposition \ref{prop:ad_eq}.

So far, we have solved the imperfect technology equilibrium for a platform under advertising, and also calculated the equilibrium quantities.

\subsection{Subscription}\label{app:numerical_subsc}

For the subscription case, given any $(\alpha,v,c)$ and $k$, to solve for the equilibrium needs more work. Lemma \ref{lem:imperfectAI_y_eq} only provides a partial equilibrium for any given subscription fee $p$. To find out the optimal $p^*_k$ as well as the associated $y^{S*}_k$, we do the following numerical analysis. 

There are 5 subcases in Lemma \ref{lem:imperfectAI_y_eq} (Cases 1(a), 1(b), 2(a), 2(b), and 3). We numerically solve a constrained maximization problem over $p$ for each subcase. Let 
$\pi_k^{1a}(p)$, $\pi_k^{1b}(p)$, $\pi_k^{2a}(p)$, $\pi_k^{2b}(p)$, and $\pi_k^{3}(p)$ denote the associated profits in Cases 1(a), 1(b), 2(a), 2(b), and 3, respectively. Based  Lemma \ref{lem:imperfectAI_y_eq}, we can write out the expressions for the profit (objective function) as well as the constraints in each subcase:
\begin{itemize}
    \item Case 1(a): 
\begin{itemize}
    \item Objective function: $\pi_k^{1a}(p)=p(1-x_{1,k}^S(1,p)).$
    \item Constraints: $p>  p_{1,k}$,  $p>0$.
\end{itemize}
    \item Case 1(b): 
\begin{itemize}
    \item Objective function: $\pi_k^{1b}(p)=p\hat{y_k^S}(p)=p\sqrt{\max\{0,{\frac{4(v-p)-(1-2k)2c}{1+2k}}\}}.$
    \item Constraints: $p> p_{1,k}$, $p>0$.
\end{itemize}
    \item Case 2(a): 
\begin{itemize}
    \item Objective function: $\pi_k^{2a}(p)=p(1-x_{1,k}^S(x_{2,k}^S,p)).$
    \item Constraints: $p\leq  p_{1,k}$, $p>p_{2,k}$, $p>0$.
\end{itemize}
    \item Case 2(b): 
\begin{itemize}
    \item Objective function: $\pi_k^{2b}(p)=p(\hat{y_k^S}(p)+1-x_{2,k}^S(p))=p(\sqrt{\max\{0,{\frac{4(v-p)-(1-2k)(2c+1-(x_{2,k}^S(p))^2)}{1+2k}}\}}+1-x_{2,k}^S(p)).$
    \item Constraints: $p\leq  p_{1,k}$, $p>p_{2,k}$, $p>0$.
\end{itemize}
    \item Case 3: 
\begin{itemize}
    \item Objective function: $\pi_k^{3}(p)=p.$
    \item Constraints:  $p\leq p_{2,k}$, $p>0$.
\end{itemize}
\end{itemize}
where the expressions for $x_{2,k}^S(p)$, $x_{1,k}^S(y,p)$,  $p_{1,k}$ and $p_{2,k}$ are given by Equations (\ref{app:x2ksp}), (\ref{app:x1ksp}), (\ref{eq:app_p1k}) and (\ref{eq:app_p2k}), respectively. Each optimization problem is solve by a direct search algorithm, which is implemented by the \texttt{NMaximize} function in Mathematica.\footnote{\texttt{NMaximize} function in Mathematica uses one of the four direct search algorithms (Nelder-Mead, differential evolution, simulated annealing, and random search), then fine-tunes the solution by using a combination of KKT solution, the interior point, and a penalty method. Source: \url{https://reference.wolfram.com/language/tutorial/ConstrainedOptimizationComparison.html}.}

Then we find the maximum across all the subcases (see Equation (\ref{eq:global_max_kp})), which gives the optimal $y^{S*}$ and $p^*$, as well as the optimal profit when the platform conducts content moderation. Again, we compare this profit with the profit when no moderation is conducted. If the latter is larger, no content moderation is conducted in equilibrium. Based on Proposition \ref{prop:subsc_eq}, we know that the optimal profit when no content moderation is conducted, $\pi_0^S$, is given by $$\pi^{S}_0=p_1^*(1+\alpha-\sqrt{\alpha^2+1-2(v-p_1^*)}),$$
where $p_1^*\equiv\frac{1}{9}\Big[(1+\alpha)\sqrt{2(2-3v+2\alpha^2+\alpha)}-2(1-3v+\alpha^2-\alpha)\Big]$.

\subsection{Social Planner}\label{app:numerical_social_pl}

For the social planner case, users' response to a given content moderation $y$ is the same to that on a platform under advertising, so all the results in Lemma \ref{lem:config_imperfectAI} hold for the social planner. In other words, the marginal users $x_{2,k}^P\equiv x_{2,k}^A$ and $x_{1,k}^P(y)\equiv x_{1,k}^A(y)$, where the expressions are given in Equations (\ref{eq:app_xa2k}) and (\ref{eq:app_lemma2}). The only difference for a social planner is the objective function, which is no longer the size of user base, but the total welfare of the users, which we denote as $W_k(y)$. As an extension of Equation (\ref{eq:welfare}) to the imperfect technology case, we have
\begin{align*}
    W_k(y) = 
    \begin{cases}
    \begin{aligned}
    \int_{x_{1,k}^P(y)}^y\Big(    (\frac{1}{2}+k) \alpha  x -(\frac{1}{2}-k) c +v - (\frac{1}{2}+k)\frac{1}{2}(y^2-x^2)- (\frac{1}{2}-k)\frac{1}{2}(1-(x_{2,k}^P)^2)\Big)dx\\
    + \int_{x_{2,k}^P}^1\Big(  (\frac{1}{2}-k) \alpha  x -(\frac{1}{2}+k) c +v - (\frac{1}{2}-k)\frac{1}{2}(1-x^2)\Big)dx \text{\quad\quad if $y<x_{2,k}^P$,}
\end{aligned}\\
    \begin{aligned}
    \int_{x_{1,k}^P(y)}^y\Big(    (\frac{1}{2}+k) \alpha  x -(\frac{1}{2}-k) c +v - (\frac{1}{2}+k)\frac{1}{2}(y^2-x^2)- (\frac{1}{2}-k)\frac{1}{2}(1-y^2)\Big)dx\\
    + \int_y^1\Big(  (\frac{1}{2}-k) \alpha  x -(\frac{1}{2}+k) c +v - (\frac{1}{2}-k)\frac{1}{2}(1-x^2)\Big)dx \text{\quad\quad if $y\geq x_{2,k}^P$.}
\end{aligned}
    \end{cases}
\end{align*}
The optimal moderation strategy under imperfect technology $y_k^{P*}\in[0,1]$ maximizes $W_k(y)$. Given any $(\alpha,v,c)$ and $k$, the optimization problem is solve by a direct search algorithm, which is implemented by the \texttt{NMaximize} function in Mathematica. 

Note that we also need to compare this optimal optimal social welfare when conducting content moderation with the social welfare when no moderation is conducted at all (denoted as $W_0$), to see whether the equilibrium strategy is $y^{P*}_k$ or simply no content moderation. The social welfare without content moderation is given by
\begin{align*}
W_0=&\int_{\sqrt{\alpha ^2-2 v+1}-\alpha }^1 \left(v-\frac{1}{2} \left(1-x^2\right)+\alpha  x\right)  dx\\
    = & \frac{1}{3} \left(\alpha ^2 \left(\sqrt{\alpha ^2-2 v+1}-\alpha \right)+\sqrt{\alpha ^2-2 v+1}+v \left(3 \alpha -2 \sqrt{\alpha ^2-2 v+1}+3\right)-1\right),
\end{align*}
which comes from Equation (\ref{eq:welfare}) by plugging in $y=1$.

\bigskip\bigskip\bigskip

So far, under each revenue model or the social planner's problem, we have now developed a dataframe where each row corresponds to a combination of $\alpha$, $v$, $c$, and $k$. The columns in this dataframe document the equilibrium content moderation strategy $y_k^*$, the equilibrium profit $\pi_k^*$ (or social welfare $W_k^*$), and the equilibrium user configuration (the marginal users $x_{1,k}^{*}$ and $x_{2,k}^{*}$). In other words, we obtain three dataframes with the following attributes\label{tables}:
\begin{itemize}
    \item Dataframe A (advertising): $\alpha$, $v$, $c$, $k$, $y_k^{A*}$, $\pi_k^{A*}$, $x_{1,k}^{A*}$, and $x_{2,k}^{A*}$;
    \item Dataframe S (subscription): $\alpha$, $v$, $c$, $k$, $y_k^{S*}$, $\pi_k^{S*}$, $x_{1,k}^{S*}$, and $x_{2,k}^{S*}$;
    \item Dataframe P (social planner): $\alpha$, $v$, $c$, $k$, $y_k^{P*}$, $W_k^{*}$, $x_{1,k}^{P*}$, and $x_{2,k}^{P*}$.
\end{itemize}
With these tables, we can numerically show all the claims in the Propositions in our main text.


\section{Preliminary Empirical Evidence}\label{sec:empirical}

In this appendix, we provide some preliminary empirical evidence for the results generated from our theoretical analysis. Since content moderation is an increasingly important topic getting attention from users, managers, and policy makers, content moderation policies of social media platforms are ever-evolving and dynamic. Therefore, it is difficult to collect a data set that is comprehensive of all platforms. In this analysis, we rely on a list of 103 social media platforms composed by \textit{Influencer Marketing Hub}.\footnote{``101+ Social Media Sites You Need to Know in 2021.'' \url{https://influencermarketinghub.com/social-media-sites/}.} 

For each social media platform, we collected the texts of their content moderation policy and also  information about their major revenue models. The former was copied from a platform's community guidelines or terms of use. The latter was found by searching ``[\textit{name of platform}] business model'' or ``how does [\textit{name of platform}] make money'' on Google and referring to the relevant search results. 
Among the 103 platforms, we focus on those that are published in English language and precluded instant messaging platforms such as WhatsApp since they do not fit our modeling context. We also excluded platforms whose content moderation policy information could not be found and/or revenue models are not reported or ambiguous. This reduced the number of social media platforms we analyze to 67.

We hired independent graders from Mechanical Turk to read and decode the text of content moderation policy of each platform. Specifically, each grader was asked to give answers to a set of yes/no questions (given in Table \ref{tab: q_for_graders}) after reading the entire text of a platform's content moderation policy. The first question (Q1) asks whether a platform moderates content at all. Questions Q2-Q10 ask if the platform moderates particular types of content potentially offensive to some users\footnote{These categories were summarized by the researchers after reading the content moderation policies of around 30 platforms.}.
\begin{table}[h]
    \centering
    \caption{Questions for graders}
    \begin{tabular}{c l}
    \hline
    & Does [\textit{name of platform}]... \\
    \hline
    Moderation or not: & Q1: ... remove any content posted by users? \\
    Content categories: & Q2: ... remove sexual/adult content such as nudity?\\
    & Q3: ... remove illegal content such as terrorism, drug, arm selling, and etc.?\\
    & Q4: ... remove hate content toward a group (based on race, gender, sexual \\
    & \quad\quad orientation, and etc.)?\\
    & Q5: ... remove content related to harassment/bullying/threats?\\
    & Q6: ... remove content related to violence/blood/injury?\\
    & Q7: ... remove spam or repeated content?\\
    & Q8: ... remove content that violates others' privacy?\\
    & Q9: ... remove promotional or self-promotional content?\\
    & Q10: ... remove misleading information such as fake pictures, news, and etc.?\\
    \hline
    \end{tabular}
    \label{tab: q_for_graders}
\end{table}

At least five graders were assigned to each platform's content moderation policy. An attention-check question saying ``regardless of the true answer, check `No' for this question'' was also included in the survey. We precluded the responses which missed this attention-check question. To incentivize graders to give quality answers, we also gave bonus to graders if at least 80\% of  their responses were consistent with the rest of the graders. 

In Propositions \ref{prop:ad_eq} and \ref{prop:subsc_eq}, we claimed that a platform under advertising is more likely to conduct content moderation and when conducting content moderation, a platform under subscription does so more aggressively. We show some preliminary evidence of these results based on our data.

Among the 67 platforms, only two  platforms do not conduct content moderation (based on the majority response to Q1) and they both adopt subscription as revenue models.
For the remaining 65 platforms, we count how many of the 9 categories of content each platform moderates by aggregating the graders' responses. We use two different ways of aggregating questions Q2-Q10:
\begin{enumerate}
    \item \textit{Percentage (PER)}: the score a platform gets for Q$i$ ($i=2,3,...,10$) is the share of graders who respond ``yes'' to this question.
    \item \textit{Majority rule (MR)}: the score a platform gets for Q$i$ ($i=2,3,...,10$) is 1 if at least $50\%$ of the graders respond ``yes'' to this question, and 0 otherwise.
\end{enumerate}
Then we sum up all the scores a platform gets across questions Q2-Q10, which gives the number of content categories each platform moderates, denoted as $categories_{PER}$ or $categories_{MR}$. A higher $categories_{PER}$ or $categories_{MR}$ indicates a stricter content moderation policy. We also define a dummy variable $AD$ for each platform which takes the value 1 if the platform's major revenue source is advertising, and 0 if subscription. 

We run the following regressions across the 65 platforms which conducts content moderation:
\begin{align}
    categories_{PER} = \beta_0 + \beta_1 AD + \epsilon,\\
    categories_{MR} = \beta'_0 + \beta'_1 AD + \epsilon'.
\end{align}
Based on our theoretical results, a platform under subscription moderates content more aggressively than one under advertising given that it moderates content, so we expect the signs of $\beta_1$ and $\beta_1'$ to be negative. The regression results are shown in Table \ref{tab:reg}.

\begin{table}[htp] \centering 
  \caption{Regression results} 
  \label{tab:reg} 
\begin{tabular}{@{\extracolsep{5pt}}lcc} 
\\[-1.8ex]\hline 
\hline \\[-1.8ex] 
 & \multicolumn{2}{c}{\textit{Dependent variable:}} \\ 
\cline{2-3} 
\\[-1.8ex] & $categories_{PER}$ & $categories_{MR}$ \\ 
\\[-1.8ex] & (1) & (2)\\ 
\hline \\[-1.8ex] 
 $AD$ & $-$0.235 & $-$0.751$^{**}$ \\ 
  & (0.294) & (0.365) \\ 
  & & \\ 
 Constant & 7.365$^{**}$ & 8.516$^{**}$ \\ 
  & (0.213) & (0.264) \\ 
  & & \\ 
\hline \\[-1.8ex] 
Observations & 65 & 65 \\ 
R$^{2}$ & 0.010 & 0.063 \\ 
\hline 
\hline \\[-1.8ex] 
\textit{Note:}  & \multicolumn{2}{r}{ $^{**}$p$<$0.05} \\ 
\end{tabular} 
\end{table} 

We see that the directions of estimated $\beta_1$ and $\beta_1'$ are as expected, which provides preliminary support for our theoretical predictions. The estimate based on the majority rule aggregation is significant and the one based on percentage aggregation is in the same direction but less precisely estimated. We anticipate the $categories_{PER}$ measure to be noisier than $categories_{MR}$ since the latter focuses on the response on which the graders reach a consensus and the former does not require that. This would explain  $categories_{PER}$ measure is less precisely estimated.


\end{document}